\newtheorem{proposition}{Proposition}
\newtheorem{example}{Example}
\newtheorem{lemma}{Lemma}
\newtheorem{theorem}{Theorem}
\newtheorem{definition}{Definition}
\newtheorem{corollary}{Corollary}
\newtheorem*{rep@theorem}{\rep@title}
\newcommand{\newreptheorem}[2]{%
\newenvironment{rep#1}[1]{%
 \def\rep@title{#2 \ref{##1}}%
 \begin{rep@theorem}}%
 {\end{rep@theorem}}}
\newcommand{\llIf}[2]{{\let\par\relax\lIf{#1}{#2}}}
\newcommand{\llElse}[1]{{\let\par\relax\lElse{#1}}}
\title{Proof Search on Bilateralist Judgments over Non-deterministic Semantics}
\author{ \href{https://orcid.org/0000-0003-3240-386X}{\includegraphics[scale=0.06]{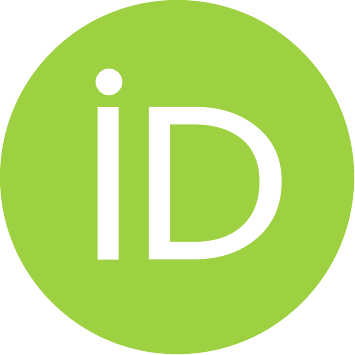}\hspace{1mm}Vitor Greati}\\
Programa de Pós-graduação em Sistemas e Computação (PPgSC)\\
Departamento de Informática e Matemática Aplicada (DIMAp)\\
Universidade Federal do Rio Grande do Norte (UFRN)\\
Natal -- RN, Brazil\\
	\texttt{vitor.greati.017@ufrn.edu.br} \\
	\And
	\href{https://orcid.org/0000-0002-6941-7555}{\includegraphics[scale=0.06]{orcid.pdf}\hspace{1mm}Sérgio Marcelino} \\
SQIG -- Instituto de Telecomunicações\\
Dep. de Matemática -- Instituto Superior Técnico\\
Universidade de Lisboa, Portugal\\
\texttt{smarcel@math.tecnico.ulisboa.pt}\\
    \And
	\href{https://orcid.org/0000-0003-2601-8164}{\includegraphics[scale=0.06]{orcid.pdf}\hspace{1mm}João Marcos} \\
Departamento de Informática e Matemática Aplicada (DIMAp)\\
Universidade Federal do Rio Grande do Norte (UFRN)\\
Natal -- RN, Brazil\\
\texttt{jmarcos@dimap.ufrn.br}
}
\begin{document}
\maketitle              
\begin{abstract}
\sloppy

The bilateralist approach to logical consequence maintains that judgments of different qualities should be taken into account in determining what-follows-from-what.  We argue that such an approach may be actualized by a two-dimensional notion of entailment induced by semantic structures that also accommodate non-deterministic and partial interpretations, and propose a proof-theoretical apparatus to reason over bilateralist judgments using symmetrical two-dimensional analytical Hilbert-style calculi.  We also provide a proof-search algorithm for finite analytic calculi that runs in at most exponential time, in general, and in polynomial time when only rules having at most one formula in the succedent are present in the concerned calculus.

\keywords{Bilateralism \and Two-dimensional consequence \and Proof Search}
\end{abstract}
%
%
%
\section{Introduction}

The conventional approach to bilateralism in logic treats denial as a primitive judgment, on a par with assertion.  One way of allowing these two kinds of judgments to coexist without necessarily allowing them to interfere with one another is by considering a two-dimensional notion of consequence, in which the validity of logical statements obtains in terms of preservation of acceptance along one dimension and of rejection along the other.  From a semantical standpoint, as we will show, this idea may be actualized by the canonical notion of entailment induced by a $\theBPN[\Sigma]$-matrix, a partial non-deterministic logical matrix in which the latter judgments, or cognitive attitudes, are represented by separate collections of truth-values.  This will, in particular, allow for distinct Tarskian (one-dimensional, generalized) consequence relations to coinhabit the same logical structure while keeping their interactions disciplined. 
	
A common practice for incorporating bilateralism into a proof formalism consists in attaching to the underlying formulas a force indicator or signal, say $+$ for assertion and $-$ for denial \cite{rumfitt2000,sergey2019}.  For example, the inference $-(\DefFm\to\DeffFm) \vdash +{\DefFm}$ describes a rule in the bilateral axiomatization of classical logic given in \cite{rumfitt2000}, representing the impossibility of, at once, denying $\DefFm\to\DeffFm$ while failing to assert~$\DefFm$.  In \cite{blasiomarcos2017}, a concurrent approach is offered that consists in working with a two-dimensional notion of consequence, allowing for the cognitive attitudes of acceptance and rejection to act over two separate logical dimensions and taking their interaction into consideration in determining the meaning of logical connectives and of the statements involving them.  The aforementioned inference, for instance, would be expressed by the two-dimensional judgment $\BCon{\varnothing}{\varnothing}{\DefFm\to\DeffFm}{\DefFm}$, which is intended to enforce that an agent is not expected to find reasons for rejecting $\DefFm\to\DeffFm$ while failing to find reasons for accepting $\DefFm$.  From a semantical standpoint, the latter notion of consequence 
may be induced by a two-dimensional logical matrix \cite{blasio20171,blasiomarcos2017}, whose associated two-di\-men\-sion\-al canonical entailment relation very naturally embraces bilateralism and involves two possibly distinct collections of distinguished truth-values: the `designated' values and the `anti-designated' values, respectively equated with acceptance and rejection.

Non-deterministic logical matrices have been extensively investigated in recent years, and proved useful in the construction of effective semantics for many families of logics in a systematic and modular way  \cite{avron2011,MC17,CMocl,MCF20}.  As in \cite{baaz2013}, in the present paper the interpretations of the connectives in a matrix outputs (possibly empty) sets of values, instead of a single value.  In our study, we explore an essential feature of (partial) non-deterministic semantics, namely \emph{effectiveness}, to provide analytic axiomatizations for a very inclusive class of finite \emph{monadic} two-dimensional matrices.  The latter consist in matrices whose underlying linguistic resources are sufficiently expressive so as to uniquely characterize each of the underlying truth-values, in a similar vein as in \cite{marcelinowollic,TCS15-GenComp-CMV}.  In contrast to the multi-dimensional Gentzen-style calculi used in the literature to axiomatize many-valued logics in the context of bilateralism (and multilateralism) \cite{ole2014}, we introduce much simpler two-dimensional symmetrical Hilbert-style calculi to the same effect and show how they give rise to derivations that do not conform to the received view that axiomatic proofs consist simply in `sequences of formulas'.  In our approach, indeed, extending to the bilateralist case the one-dimensional tree-derivation mechanism considered in \cite{shoesmithsmiley1978,MC19synth,marcelinowollic}, the inference rules, instead of manipulating metalinguistic objects, deal only with pairs of accepted / rejected formulas, and derivations are trees whose nodes come labelled with such pairs and result from expansions determined by the rules.  As we will show, the \emph{analyticity} of the axiomatizations that we extract from our two-dimensional (partial) non-deterministic matrices, using symmetrical rules that internalize `case exhaustion', allows for bounded proof search, and the design of a simple recursive decision algorithm that runs in exponential time.  


The paper is organized as follows: 
Section \ref{sec:prelims} introduces the basic concepts and terminology involved in two-dimensional notions of consequence and in symmetrical analytic Hilbert-style calculi.
Section \ref{sec:axiomatization} presents the general axiomatization procedure for finite monadic matrices, illustrating it and highlighting its modularity via the correspondence between refining a matrix and adding rules to a sound symmetrical two-dimensional calculus. 
Then, Section \ref{sec:proofsearch} describes our proposed proof-search algorithm, proves its correctness and investigates its worst-case exponential asymptotic complexity.
In the final remarks, 
we reflect upon the obtained results and indicate some directions for future developments. 
Detailed proofs of the main results may be found at 
\url{https://tinyurl.com/21-GMM-Bilat}.
\section{Preliminaries}
\label{sec:prelims}

\subsection{Languages}

A \emph{propositional signature}~$\DefSig$ is a family
$\Family{\SigAritySet{\DefSig}{k}}{k \in \NatSet}$, where each $\SigAritySet{\DefSig}{k}$ is a collection of
\emph{$k$-ary connectives}.
Given a denumerable set $\DefProps \SymbDef \Set{\DefProp_i \mid i \in \NatSet}$
of \emph{propositional variables},
the \emph{propositional language over~$\DefSig$ generated by~$\DefProps$},
$\DefLangAlg$, 
is the absolutely free algebra
over~$\DefSig$ freely generated
by~$\DefProps$.
The elements of $\DefLangSet$, the carrier set of the latter algebra, are called
\emph{formulas} and will be indicated
below by capital Roman letters.
As usual, whenever there is no risk of confusion, we will omit braces and unions in collecting sets and formulas, and leave a blank space in place of $\EmptySet$.
For convenience, 
given $\DefFmlaSet \subseteq \DefLangSet$, the set of formulas not in~$\DefFmlaSet$ will be denoted by~$\SetCompl{\DefFmlaSet}$.
On any given language, we may define the functions $\SubfmlasSymb$
and $\PropsSymb$, which output, respectively,
the subformulas and the propositional variables occurring in a given formula, 
and define as well the function $\FmSizeSymb$, such that $\FmSizeSymb(\DefProp) \SymbDef 1$ for each $\DefProp \in \DefProps$, and $\FmSizeSymb(\DefSymbol(\DefFm_1,\ldots,\DefFm_k)) \SymbDef 1 + \Sum_{i=1}^k \FmSizeSymb(\DefFm_i)$, for each $k \in \NatSet$ and $\DefSymbol\in\SigAritySet{\DefSig}{k}$.
Moreover, as usual, endomorphisms on $\DefLangAlg$ are called \emph{substitutions}, 
and, given a formula
$\DeffFm \in \DefLangSet$
with $\Props{\DeffFm} \subseteq \Set{\DefProp_{i_1},\ldots,\DefProp_{i_k}}$, for some $k \in \NatSet$,
we write 
$\DeffFm(\DefFm_1,\ldots, \DefFm_k)$ 
for the image of $\DeffFm$ under a substitution~$\DefSubs$ where
$\DefSubs(\DefProp_{i_j}) = \DefFm_{j}$, for all $1 \leq j \leq k$, and where $\DefSubs(\DefProp) = \DefProp$ otherwise;
for a set~$\DefFmlaSet$ of one-variable formulas, we let $\DefFmlaSet(\DefFm) \SymbDef \Set{\DeffFm(\DefFm) \mid \DeffFm \in \DefFmlaSet}$.

\subsection{Two-dimensional consequence relations}

Hereupon, we shall call \theB\emph{-statement} any 
$2{\times}2$-place tuple 
$\BStat{\DefAccSet}{\DefNRejSet}{\DefRejSet}{\DefNAccSet}$ of sets of formulas in a given language.
By definition, a collection of \theB-statements will be said to constitute a \theB\emph{-con\-se\-quen\-ce relation} $\BConName$
provided that 
any of the following conditions constitutes a sufficient guarantee for the
%
\emph{con\-se\-quen\-ce judgment} $\BCon{\DefAccSet}{\DefNRejSet}{\DefRejSet}{\DefNAccSet}$ to be established:%
\smallskip
\begin{namedproperties}\itemsep1pt
	\item[(O)\label{prop:BConO}] 
			$\DefAccSet \cap \DefNAccSet \neq \EmptySet$
		or 
			$\DefRejSet \cap \DefNRejSet \neq \EmptySet$
	\item[(D)\label{prop:BConD}]
		$\BCon[d]{\DeffAccSet}{\DeffNRejSet}{\DeffRejSet}{\DeffNAccSet}$ and 	$\CogSet{\DeffFmlaSet}{\DefCogVar} \subseteq \CogSet{\DefFmlaSet}{\DefCogVar}$
		for every $\DefCogVar \in \CogsSet$
	\item[(C)\label{prop:BConC}] 
			$\BCon[d]{\IntermAccSet{\CutPropSet}}{\SetCompl{\IntermRejSet{\CutPropSet}}}{\IntermRejSet{\CutPropSet}}{\SetCompl{\IntermAccSet{\CutPropSet}}}$ 
		for all 
			$\DefAccSet \subseteq \IntermAccSet{\CutPropSet} \subseteq \SetCompl{\DefNAccSet}$
		and
			$\DefRejSet \subseteq \IntermRejSet{\CutPropSet} \subseteq \SetCompl{\DefNRejSet}$
	\item[(S)\label{prop:BConS}]
		$\BCon[d]{\DeffAccSet}{\DeffNRejSet}{\DeffRejSet}{\DeffNAccSet}$
		and
		$\CogSet{\DefFmlaSet}{\DefCogVar} = \SubsApply{\DefSubs}{\CogSet{\DeffFmlaSet}{\DefCogVar}}$
		for every $\DefCogVar \in \CogsSet$, for a substitution~$\DefSubs$
		%
\end{namedproperties}
\smallskip
\noindent 
In the above conditions, $\DefAccSet, \DefRejSet, \DefNAccSet, \DefNRejSet$ denote arbitrary sets of formulas, that may intuitively be read as representing, respectively, collections of \emph{accepted}, \emph{rejected}, \emph{non-accepted} and \emph{non-rejected} formulas.
It is not hard to check that such definition, employing the properties of (O)verlap, (D)ilution, (C)ut and (S)ubs\-titution-invariance,
is equivalent to the one found
in~\cite{blasiomarcos2017}, 
and it generalizes the well-known abstract Tarskian one-dimensional account of logical consequence.
In addition, a \theB-consequence relation will be called
\emph{finitary} when a consequence judgment $\BCon{\DefAccSet}{\DefNRejSet}{\DefRejSet}{\DefNAccSet}$ always implies that:
\smallskip
\begin{namedproperties}\itemsep1pt
    \item[(F)\label{prop:BConF}]
    	$\BCon[d]
    	{\FinSubset{\DefAccSet}}
    	{\FinSubset{\DefNRejSet}}
    	{\FinSubset{\DefRejSet}}
    	{\FinSubset{\DefNAccSet}}$, 
	for some finite 
		$\FinSubset{\CogSet{\DefFmlaSet}{\DefCogVar}} \subseteq \CogSet{\DefFmlaSet}{\DefCogVar}$,
	for every $\DefCogVar \in \CogsSet$   
\end{namedproperties}
\smallskip

\noindent We will denote by $\nBConName$
the complement of $\BConName$, sometimes
called the \emph{compatibility relation}
associated to $\BConName$ (cf.~\cite{blasiocaleiromarcos2019}).
Furthermore, we should note that later on we will sometimes write $\InvCog{\Acc}$ for~$\NAcc$, write $\InvCog{\NAcc}$ for~$\Acc$, write $\InvCog{\Rej}$ for~$\NRej$, and write  $\InvCog{\NRej}$ for~$\Rej$.


A \theB-consequence relation $\BConName$ may be said to induce a 2-place relation~$\TConTName$
over $\PowerSet{\DefLangSet}$ by setting $\DefAccSet\SemanticConseq{}{\tAsp}\DefNAccSet$ iff $\BCon{\DefAccSet}{\varnothing}{\varnothing}{\DefNAccSet}$.  This is easily seen to constitute a generalized (one-dimensional) consequence relation.  Another such relation is induced by setting $\DefRejSet\SemanticConseq{}{\fAsp}\DefNRejSet$ iff $\BCon{\varnothing}{\DefNRejSet}{\DefRejSet}{\varnothing}$.  Connected to that, we will say that~$\TConTName$ \emph{inhabits the ${\tAsp}$-aspect of $\BConName$}, and that~$\TConFName$ \emph{inhabits the ${\fAsp}$-aspect of~$\BConName$}.
These are but two of many possible aspects of interest of a given \theB-consequence relation; in principle, very different Tarskian ---and also non-Tarskian!--- logics may coinhabit the same given two-dimensional consequence relation (see \cite{blasiomarcos2017}).

Finally, a \theB-consequence $\BConName$ is said to be \emph{decidable} when there is some \emph{decision procedure} that takes a \theB-statement 
$\BStat{\DefAccSet}{\DefNRejSet}{\DefRejSet}{\DefNAccSet}$
with finite component sets as input, outputs \texttt{true} when $\BCon{\DefAccSet}{\DefNRejSet}{\DefRejSet}{\DefNAccSet}$ is the case, and outputs \texttt{false}
when $\nBCon{\DefAccSet}{\DefNRejSet}{\DefRejSet}{\DefNAccSet}$.

\subsection{Two-dimensional non-deterministic matrices}

A \emph{partial non-deterministic \theB--matrix~$\DefBMatrix$ over a signature} $\DefSig$, or simply $\theBPN[\DefSig]$\emph{--matrix}, is a structure
$\DefBMatrixStruct{\DefBMatrix}$
where the set $\DefVSet$ is said to contain \emph{truth-values}, the sets $\DefDesSet, \DefAntiDesSet \subseteq \DefVSet$ are said to contain, respectively, the \emph{designated} and
the \emph{anti-designated} truth-values, 
and, for each $k \in \NatSet$ and $\DefSymbol \in \SigAritySet{\DefSig}{k}$, the mapping
$\BMatrixInterp{\DefSymbol}{\DefBMatrix} : ({\DefVSet})^k \to \PowerSet{\DefVSet}$
is the \emph{interpretation} of
$\DefSymbol$ in $\DefBMatrix$.
For convenience, we define
$\NDesSet{\DefBMatrix} \SymbDef \SetDiff{\DefVSet}{\DesSet{\DefBMatrix}}$
and
$\NAntiDesSet{\DefBMatrix} \SymbDef \SetDiff{\DefVSet}{\AntiDesSet{\DefBMatrix}}$. A $\theBPN[\DefSig]$--matrix is said to be
\emph{total} when $\EmptySet$ is not in the range of the interpretation of any connective of~$\DefSig$,
\emph{deterministic} when the range of
any interpretation contains only singletons, also called \emph{deterministic images},
%
and \emph{fully indeterministic} if 
it allows for the maximum degree
of non-determinism, that is, if
$\BMatrixInterp{\DefSymbol}{\DefBMatrix}(\DefValueVar_1,\ldots,\DefValueVar_k) = \VSet{\DefBMatrix}$ for each $k \in \NatSet$ and $\DefSymbol \in \SigAritySet{\DefSig}{k}$, and all
$\DefValueVar_1,\ldots,\DefValueVar_k \in \VSet{\DefBMatrix}$.

In the following definitions,~$\DefBMatrix$ will represent an arbitrary $\theBPN[\DefSig]$\emph{--matrix}.

Given a set of truth-values $\DefSetValueVar \subseteq \DefVSet$, the \emph{sub}--$\theBPN[\DefSig]$--\emph{matrix $\SubBMatrix{\DefBMatrix}{\DefSetValueVar}$ induced by $\DefSetValueVar$}
is the $\theBPN[\DefSig]$--matrix $\BMatrixStruct{\SubBMatrix{\DefBMatrix}{\DefSetValueVar}}{\DefSetValueVar}{\DefDesSet \cap \DefSetValueVar}{\DefAntiDesSet \cap \DefSetValueVar}$
such that 
$\BMatrixInterp{\DefSymbol}{\SubBMatrix{\DefBMatrix}{\DefSetValueVar}}(\DefValueVar_1,\ldots,\DefValueVar_k) \SymbDef \BMatrixInterp{\DefSymbol}{\DefBMatrix}(\DefValueVar_1,\ldots,\DefValueVar_k) \cap \DefSetValueVar$, for all $\DefValueVar_1,\ldots,\DefValueVar_k \in \DefSetValueVar$, $k \in \NatSet$ and $\DefSymbol \in \SigAritySet{\DefSig}{k}$.
The set of all subsets of the values of each non-empty total sub--$\theBPN[\DefSig]$--matrix of $\DefBMatrix$ will be denoted by $\TotalSubMVSet{\DefBMatrix}$, that is,
\[
\TotalSubMVSet{\DefBMatrix} \SymbDef \bigcup\limits_{\substack{\EmptySet \neq \DefSetValueVar \subseteq  \DefVSet \\ \SubBMatrix{\DefBMatrix}{ \DefSetValueVar} \text{ total}}} \PowerSet{\DefSetValueVar}.
\]
Check Example \ref{ex:kleenemat} for an illustration
of the latter.

\sloppy We shall call \emph{$\DefBMatrix$-valuation}
any mapping $\DefVal : \DefLangSet \to \DefVSet$ such that
$\DefVal(\DefSymbol(\DefFm_1, \ldots, \DefFm_k)) \in \BMatrixInterp{\DefSymbol}{\DefBMatrix}(
\DefVal(\DefFm_1), \ldots, \DefVal(\DefFm_k))$
 for all $k \in \NatSet$, $\DefSymbol \in \SigAritySet{\DefSig}{k}$
and
$\DefFm_1, \ldots, \DefFm_k \in \DefLangSet$.
As proved in 
\cite{baaz2013}, 
given a set $\DefFmlaSet \subseteq \DefLangSet$ closed under subformulas, 
any mapping $f : \DefFmlaSet \to \DefVSet$
extends to an $\DefBMatrix$-valuation
provided that
$f(\DefSymbol(\DefFm_1, \ldots, \DefFm_k)) \in \BMatrixInterp{\DefSymbol}{\DefBMatrix}(
f(\DefFm_1), \ldots, f(\DefFm_k))$,
for every $\DefSymbol(\DefFm_1, \ldots, \DefFm_k) \in \DefFmlaSet$, 
and $f(\DefFmlaSet) \in \TotalSubMVSet{\DefBMatrix}$.
Notice that if we disregard the latter condition we obtain
the property
of \emph{effectiveness} for total
non-deterministic matrices (\cite{avron2009}); 
as this very condition holds for all such matrices, by making it explicit in the previous definition we obtain a generalization of effectiveness that also applies to partial non-deterministic matrices.
Any formula $\DefFm \in \DefLangSet$
with $\Props{\DefFm} = \Set{\DefProp_{i_1}, \ldots, \DefProp_{i_k}}$
may be interpreted on~$\DefBMatrix$  
as a $k$-ary mapping $\BMatrixInterpFm{\DefFm}{\DefBMatrix}$ such that
$\BMatrixInterpFm{\DefFm}{\DefBMatrix}(\DefValueVar_1,\ldots,\DefValueVar_k) \SymbDef \{ \DefVal(\DefFm) \mid \DefVal \text{ is an $\DefBMatrix$-valuation and }
\DefVal(\DefProp_{i_1}) = \DefValueVar_{1}, \ldots, \DefVal(\DefProp_{i_k}) = \DefValueVar_k\}$.


The \theB-\emph{entailment relation induced by
$\DefBMatrix$}
is a $2{\times}2$-place relation 
$\BEnt{\PlaceholderArg}{\PlaceholderArg}{\PlaceholderArg}{\PlaceholderArg}{\DefBMatrix}$
over $\DefLangSet$
such that:
\begin{itemize}[leftmargin=15mm]    
\item[(\theB-ent)\label{prop:BentDef}]
    \begin{tabular}{lcl}
    $\BEnt[d]{\DefAccSet}{\DefNRejSet}{\DefRejSet}
    {\DefNAccSet}{\DefBMatrix}$
        & \ \ iff\ \ &
        \begin{tabular}{l}
    there is no $\DefBMatrix$-valuation
    $\DefVal$ such that
\\
        $\DefVal(\CogSet{\DefFmlaSet}{\DefCogVar})
        \subseteq \BMatrixDistSet{\DefCogVar}{\DefBMatrix}$
        for every $\DefCogVar \in \CogsSet$
        \end{tabular}
    \end{tabular}
\end{itemize}
\noindent 
for every $\DefAccSet, \DefRejSet, \DefNAccSet, \DefNRejSet \subseteq \DefLangSet$.
Whenever $\BEnt[]{\DefAccSet}{\DefNRejSet}{\DefRejSet}{\DefNAccSet}{\DefBMatrix}$, we say that
the \theB-statement $\BStat[]{\DefAccSet}{\DefNRejSet}{\DefRejSet}{\DefNAccSet}$
\emph{holds} in $\DefBMatrix$.
It is straightforward to check that (see \cite{blasio20171}):%

\begin{proposition}
\label{propo:bentailmentbconseq}
The {{\upshape}\theB}-entailment relation induced by a \upshape{${\theBPN[\Sigma]}$}--matrix is a {\upshape\theB}-con\-se\-quen\-ce relation. 
\end{proposition}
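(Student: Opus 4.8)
The plan is to verify directly that the collection of \theB-statements holding in $\DefBMatrix$ is closed under each of the properties (O), (D), (C) and (S), that is, that whenever one of those conditions obtains, the statement $\BEnt[d]{\DefAccSet}{\DefNRejSet}{\DefRejSet}{\DefNAccSet}{\DefBMatrix}$ holds. Every verification will go by contraposition: supposing $\BEnt[d]{\DefAccSet}{\DefNRejSet}{\DefRejSet}{\DefNAccSet}{\DefBMatrix}$ fails, I fix an $\DefBMatrix$-valuation $\DefVal$ with $\DefVal(\CogSet{\DefFmlaSet}{\DefCogVar}) \subseteq \BMatrixDistSet{\DefCogVar}{\DefBMatrix}$ for every $\DefCogVar \in \CogsSet$, and derive a contradiction with the stated hypothesis. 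The single elementary fact used throughout is that $\DesSet{\DefBMatrix}$ and $\NDesSet{\DefBMatrix}$ partition $\VSet{\DefBMatrix}$, and likewise $\AntiDesSet{\DefBMatrix}$ and $\NAntiDesSet{\DefBMatrix}$, which is immediate from $\NDesSet{\DefBMatrix} = \SetDiff{\VSet{\DefBMatrix}}{\DesSet{\DefBMatrix}}$ and $\NAntiDesSet{\DefBMatrix} = \SetDiff{\VSet{\DefBMatrix}}{\AntiDesSet{\DefBMatrix}}$.

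For (O): if $\DefFm \in \DefAccSet \cap \DefNAccSet$, no $\DefBMatrix$-valuation can send $\DefFm$ simultaneously into $\DesSet{\DefBMatrix}$ and into $\NDesSet{\DefBMatrix}$, so no witnessing $\DefVal$ exists; the case $\DefRejSet \cap \DefNRejSet \neq \EmptySet$ is symmetric, with $\AntiDesSet{\DefBMatrix}$ in place of $\DesSet{\DefBMatrix}$. For (D): since $\DefVal$ preserves the component inclusions $\CogSet{\DeffFmlaSet}{\DefCogVar} \subseteq \CogSet{\DefFmlaSet}{\DefCogVar}$, any witness $\DefVal$ to the failure of the larger statement is at once a witness to the failure of $\BEnt[d]{\DeffAccSet}{\DeffNRejSet}{\DeffRejSet}{\DeffNAccSet}{\DefBMatrix}$, contradicting the hypothesis. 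For (S): if $\CogSet{\DefFmlaSet}{\DefCogVar} = \SubsApply{\DefSubs}{\CogSet{\DeffFmlaSet}{\DefCogVar}}$ for every $\DefCogVar \in \CogsSet$, then $\DefVal \circ \DefSubs$ is again an $\DefBMatrix$-valuation — this uses only that $\DefSubs$ is an endomorphism of $\DefLangAlg$ and that $\DefVal$ respects each interpretation $\BMatrixInterp{\DefSymbol}{\DefBMatrix}$ — and, since $(\DefVal\circ\DefSubs)(\CogSet{\DeffFmlaSet}{\DefCogVar}) = \DefVal(\SubsApply{\DefSubs}{\CogSet{\DeffFmlaSet}{\DefCogVar}}) = \DefVal(\CogSet{\DefFmlaSet}{\DefCogVar}) \subseteq \BMatrixDistSet{\DefCogVar}{\DefBMatrix}$, it witnesses the failure of $\BEnt[d]{\DeffAccSet}{\DeffNRejSet}{\DeffRejSet}{\DeffNAccSet}{\DefBMatrix}$, again against the hypothesis.

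The only step with a bit of content is (C). Assume its hypothesis and suppose, for a contradiction, that some $\DefBMatrix$-valuation $\DefVal$ witnesses the failure of $\BEnt[d]{\DefAccSet}{\DefNRejSet}{\DefRejSet}{\DefNAccSet}{\DefBMatrix}$. I set $\IntermAccSet{\CutPropSet} := \Set{\DefFm \mid \DefVal(\DefFm) \in \DesSet{\DefBMatrix}}$ and $\IntermRejSet{\CutPropSet} := \Set{\DefFm \mid \DefVal(\DefFm) \in \AntiDesSet{\DefBMatrix}}$. From $\DefVal(\DefAccSet) \subseteq \DesSet{\DefBMatrix}$ and $\DefVal(\DefNAccSet) \subseteq \NDesSet{\DefBMatrix}$, together with the partition fact, one obtains $\DefAccSet \subseteq \IntermAccSet{\CutPropSet} \subseteq \SetCompl{\DefNAccSet}$, and symmetrically $\DefRejSet \subseteq \IntermRejSet{\CutPropSet} \subseteq \SetCompl{\DefNRejSet}$; hence the corresponding instance $\BEnt[d]{\IntermAccSet{\CutPropSet}}{\SetCompl{\IntermRejSet{\CutPropSet}}}{\IntermRejSet{\CutPropSet}}{\SetCompl{\IntermAccSet{\CutPropSet}}}{\DefBMatrix}$ is guaranteed by the hypothesis of (C). But by the very choice of $\IntermAccSet{\CutPropSet}$ and $\IntermRejSet{\CutPropSet}$, the valuation $\DefVal$ maps $\IntermAccSet{\CutPropSet}$ into $\DesSet{\DefBMatrix}$, its complement into $\NDesSet{\DefBMatrix}$, $\IntermRejSet{\CutPropSet}$ into $\AntiDesSet{\DefBMatrix}$ and its complement into $\NAntiDesSet{\DefBMatrix}$ — that is, $\DefVal$ witnesses the failure of that very instance, a contradiction. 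Therefore $\BEnt[d]{\DefAccSet}{\DefNRejSet}{\DefRejSet}{\DefNAccSet}{\DefBMatrix}$ holds, completing the proof. I do not anticipate any genuine obstacle: the whole argument merely unfolds property (\theB-ent), and the only mildly delicate point is to notice that a witnessing valuation canonically supplies the ``total'' completions $\IntermAccSet{\CutPropSet}$, $\IntermRejSet{\CutPropSet}$ required by the cut condition.
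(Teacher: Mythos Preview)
Your proof is correct: the direct verification of (O), (D), (S) and the construction of the canonical extensions $\IntermAccSet{\CutPropSet}$, $\IntermRejSet{\CutPropSet}$ from a witnessing valuation for (C) are exactly the expected arguments. The paper itself does not spell out a proof, merely noting that the result is straightforward and deferring to \cite{blasio20171}; your write-up supplies precisely the details that such a remark leaves implicit.
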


\begin{example}
\label{ex:infomatrix}
Let $\FourValuesSetName \SymbDef \Set{\fVal, \BotVal, \TopVal, \tVal}$,
$\DesSetFour \SymbDef \Set{\TopVal, \tVal}$, $\RejSetFour \SymbDef \Set{\TopVal, \fVal}$,
and consider
a signature $\SigFDE$ containing
but two binary connectives,
$\land$ and $\lor$, and 
one unary connective, $\neg$.
Next, define the $\theBPN[\SigFDE]$--matrix
$\InfoFourMatrix \SymbDef \BMatrixStruct{\InfoFourMatrix}{\FourValuesSetName}{\DesSetFour}{\RejSetFour}$ that interprets the latter connectives according to the  following (non-deterministic) truth-tables (here and below, braces will be omitted from the images of the interpretations):%
\begin{table}[H]
    \centering
    \begin{tabular}{>{\centering\arraybackslash}m{.7cm}!{\vrule width .6pt}>{\centering\arraybackslash}m{.7cm}>{\centering\arraybackslash}m{.7cm}>{\centering\arraybackslash}m{.7cm}>{\centering\arraybackslash}m{.7cm}}
         $\BMatrixInterp{\land}{\InfoFourMatrix}$ & $\fVal$ & $\BotVal$ & $\TopVal$ & $\tVal$\\
         \midrule[.6pt]
         $\fVal$ &  \fVal & \fVal &\fVal&\fVal\\
         $\BotVal$ &\fVal& \fVal, \BotVal & \fVal & \fVal, \BotVal\\
         $\TopVal$ &\fVal& \fVal & \TopVal& \TopVal\\
         $\tVal$ &  \fVal& \fVal, \BotVal & \TopVal & \tVal, \TopVal
    \end{tabular} \quad
    \begin{tabular}{>{\centering\arraybackslash}m{.7cm}!{\vrule width .6pt}>{\centering\arraybackslash}m{.7cm}>{\centering\arraybackslash}m{.7cm}>{\centering\arraybackslash}m{.7cm}>{\centering\arraybackslash}m{.7cm}}
         $\BMatrixInterp{\lor}{\InfoFourMatrix}$ & $\fVal$ & $\BotVal$ & $\TopVal$ & $\tVal$\\
         \midrule[.6pt]
         $\fVal$ &  \fVal, \TopVal & \tVal, \BotVal &\TopVal&\tVal\\
         $\BotVal$ &\tVal, \BotVal& \tVal, \BotVal & \tVal & \tVal\\
         $\TopVal$ &\TopVal& \tVal & \TopVal& \tVal\\
         $\tVal$ &  \tVal& \tVal & \tVal & \tVal
    \end{tabular}\quad
    \begin{tabular}{>{\centering\arraybackslash}m{.7cm}|>{\centering\arraybackslash}m{.7cm}}
          & $\BMatrixInterp{\neg}{\InfoFourMatrix}$ \\
         \midrule[.6pt]
         $\fVal$ & \tVal\\
         $\BotVal$ & \BotVal\\
         $\TopVal$ & \TopVal\\
         $\tVal$ & \fVal
    \end{tabular}
\end{table}%
\noindent 
The $\tAsp$-aspect of $\BEnt{\PlaceholderArg}{\PlaceholderArg}{\PlaceholderArg}{\PlaceholderArg}{\InfoFourMatrix}$ is inhabited by the 
logic introduced in~\cite{avron2007}, which incorporates some principles on how a processor would be expected to deal with information about an arbitrary set of formulas.
\end{example}
Given two $\theBPN[\Sigma]$--matrices
$\DefBMatrix{}_1$ and $\DefBMatrix{}_2$,
we say that $\DefBMatrix{}_2$
is a \emph{refinement} of
$\DefBMatrix{}_1$ when
$\VSet{\DefBMatrix{}_2} \subseteq \VSet{\DefBMatrix{}_1}$ and
$\BMatrixInterp{\DefSymbol}{\DefBMatrix{}_2}(\DefValueVar_1,\ldots,\DefValueVar_k) \subseteq \BMatrixInterp{\DefSymbol}{\DefBMatrix{}_1}(\DefValueVar_1,\ldots,\DefValueVar_k)$ for each $k \in \NatSet$ and $\DefSymbol \in \SigAritySet{\DefSig}{k}$, and for every
$\DefValueVar_1,\ldots,\DefValueVar_k \in \VSet{\DefBMatrix{}_2}$. 
Also, we say that
$\BMatrixInterpPlaceholder{{\DefBMatrix}_2}$ \emph{agrees with}
$\BMatrixInterpPlaceholder{{\DefBMatrix}_1}$ 
when both provide
the same interpretations for
the connectives of~$\DefSig$.
Evidently, every $\theBPN[\Sigma]$--matrix
is a refinement of the corresponding fully indeterministic $\theBPN[\Sigma]$--matrix. In the examples that follow, we
illustrate a couple of refinements of the $\theBPN[\Sigma]$--matrix $\InfoFourMatrix$
presented in Example \ref{ex:infomatrix}, giving rise to
(two-dimensional versions of) other well-known logics.%

\begin{example}
\label{ex:fde}
Let
$\FDEMatrix \SymbDef \BMatrixStruct{\FDEMatrix}{\FourValuesSetName}{\DesSetFour}{\RejSetFour}$ be the $\theBPN[\SigFDE]$--matrix
consisting of a refinement
of $\InfoFourMatrix$ with interpretations given by the following tables:
\begin{table}[H]
    \centering
    \begin{tabular}{>{\centering\arraybackslash}m{.7cm}!{\vrule width .6pt}>{\centering\arraybackslash}m{.7cm}>{\centering\arraybackslash}m{.7cm}>{\centering\arraybackslash}m{.7cm}>{\centering\arraybackslash}m{.7cm}}
         $\BMatrixInterp{\land}{\FDEMatrix}$ & $\fVal$ & $\BotVal$ & $\TopVal$ & $\tVal$\\
         \midrule[.6pt]
         $\fVal$ &  \fVal & \fVal &\fVal&\fVal\\
         $\BotVal$ &\fVal& \BotVal & $\fVal$ & \BotVal\\
         $\TopVal$ &\fVal& $\fVal$ & \TopVal& \TopVal\\
         $\tVal$ &  \fVal& \BotVal & \TopVal & \tVal\\
    \end{tabular}\quad
    \begin{tabular}{>{\centering\arraybackslash}m{.7cm}!{\vrule width .6pt}>{\centering\arraybackslash}m{.7cm}>{\centering\arraybackslash}m{.7cm}>{\centering\arraybackslash}m{.7cm}>{\centering\arraybackslash}m{.7cm}}
         $\BMatrixInterp{\lor}{\FDEMatrix}$ & $\fVal$ & $\BotVal$ & $\TopVal$ & $\tVal$\\
         \midrule[.6pt]
         $\fVal$ &  \fVal & \BotVal &\TopVal&\tVal\\
         $\BotVal$ &\BotVal& \BotVal & $\tVal$ & \tVal\\
         $\TopVal$ &\TopVal& $\tVal$ & \TopVal& \tVal\\
         $\tVal$ &  \tVal& \tVal & \tVal & \tVal\\
    \end{tabular}\quad
    \begin{tabular}{>{\centering\arraybackslash}m{.7cm}|>{\centering\arraybackslash}m{.7cm}}
          & $\BMatrixInterp{\neg}{\FDEMatrix}$ \\
         \midrule[.6pt]
         $\fVal$ & \tVal\\
         $\BotVal$ & \BotVal\\
         $\TopVal$ & \TopVal\\
         $\tVal$ & \fVal\\
    \end{tabular}
\end{table}
\noindent One may readily see that these 
interpretations correspond to the ones of
First Degree Entailment and that
this $\theBPN[\SigFDE]$--matrix corresponds to the logic~$\mathbf{E}^B$ 
presented in \cite{blasio20171}.
\end{example}

\begin{example}
\label{ex:kleenemat}
\sloppy
We may still refine
$\FDEMatrix$ (and thus $\InfoFourMatrix$) a little more.
Let 
$\ImpFreeKleeneMatrix \SymbDef \BMatrixStruct{\ImpFreeKleeneMatrix}{\FourValuesSetName}{\DesSetFour}{\RejSetFour}$ be the $\theBPN[\SigFDE]$--matrix
such that $\BMatrixInterpPlaceholder{\ImpFreeKleeneMatrix}$ agrees with $\BMatrixInterpPlaceholder{\FDEMatrix}$
except that
$\BMatrixInterp{\land}{\ImpFreeKleeneMatrix}(\TopVal, \BotVal)  =\BMatrixInterp{\lor}{\ImpFreeKleeneMatrix}(\TopVal, \BotVal) 
= \BMatrixInterp{\land}{\ImpFreeKleeneMatrix}(\TopVal, \BotVal)
= \BMatrixInterp{\land}{\ImpFreeKleeneMatrix}(\BotVal, \TopVal) = \EmptySet$,
as the following tables show:
\begin{table}[H]
    \centering
    \begin{tabular}{>{\centering\arraybackslash}m{.7cm}!{\vrule width .6pt}>{\centering\arraybackslash}m{.7cm}>{\centering\arraybackslash}m{.7cm}>{\centering\arraybackslash}m{.7cm}>{\centering\arraybackslash}m{.7cm}}
         $\BMatrixInterp{\land}{\ImpFreeKleeneMatrix}$ & $\fVal$ & $\BotVal$ & $\TopVal$ & $\tVal$\\
         \midrule[.6pt]
         $\fVal$ &  \fVal & \fVal &\fVal&\fVal\\
         $\BotVal$ &\fVal& \BotVal & $\EmptySet$ & \BotVal\\
         $\TopVal$ &\fVal& $\EmptySet$ & \TopVal& \TopVal\\
         $\tVal$ &  \fVal& \BotVal & \TopVal & \tVal\\
    \end{tabular}\quad
    \begin{tabular}{>{\centering\arraybackslash}m{.7cm}!{\vrule width .6pt}>{\centering\arraybackslash}m{.7cm}>{\centering\arraybackslash}m{.7cm}>{\centering\arraybackslash}m{.7cm}>{\centering\arraybackslash}m{.7cm}}
         $\BMatrixInterp{\lor}{\ImpFreeKleeneMatrix}$ & $\fVal$ & $\BotVal$ & $\TopVal$ & $\tVal$\\
         \midrule[.6pt]
         $\fVal$ &  \fVal & \BotVal &\TopVal&\tVal\\
         $\BotVal$ &\BotVal& \BotVal & $\EmptySet$ & \tVal\\
         $\TopVal$ &\TopVal& $\EmptySet$ & \TopVal& \tVal\\
         $\tVal$ &  \tVal& \tVal & \tVal & \tVal\\
    \end{tabular}\quad
    \begin{tabular}{>{\centering\arraybackslash}m{.7cm}|>{\centering\arraybackslash}m{.7cm}}
          & $\BMatrixInterp{\neg}{\ImpFreeKleeneMatrix}$ \\
         \midrule[.6pt]
         $\fVal$ & \tVal\\
         $\BotVal$ & \BotVal\\
         $\TopVal$ & \TopVal\\
         $\tVal$ & \fVal\\
    \end{tabular}
\end{table}
\noindent 
Note that $\TotalSubMVSet{\ImpFreeKleeneMatrix} = \Set{\DefSetValueVar \subseteq \FourValuesSetName \mid \Set{\TopVal,\BotVal} \not\subseteq \DefSetValueVar}$. 
As shown in \cite{marcelinowollic}, 
Kleene's strong three-valued logic 
inhabits the 
$\tAsp$-aspect 
of $\BEnt{\PlaceholderArg}{\PlaceholderArg}{\PlaceholderArg}{\PlaceholderArg}{\ImpFreeKleeneMatrix}$.
\end{example}

\begin{example}
\label{ex:mci}
Let $\FiveValuesSetName \SymbDef \Set{\mcif,\mciF, \mciI, \mciT, \mcit}$,
$\DesSetFive \SymbDef \Set{\mciT, \mciI, \mcit}$, $\RejSetFive \SymbDef \Set{\mciT, \mciI, \mcif}$,
and consider
a signature $\SigMCI$ containing
but three binary connectives,
$\land$, $\lor$ and $\mciImp$, and 
two unary connectives, $\neg$ and $\circ$.
Inspired by the 5-valued
non-deterministic logical
matrix presented in \cite{avron2008}
for the logic of formal inconsistency called \textbf{mCi} \cite{marcos-PTS4WCBPL}, we define
the $\theBPN[\SigMCI]$--matrix
$\MCIFiveVMatrix \SymbDef \BMatrixStruct{\MCIFiveVMatrix}{\FiveValuesSetName}{\DesSetFive}{\RejSetFive}$
with the following interpretations:
\begin{gather*}
    \BMatrixInterp{\land}{\MCIFiveVMatrix}(\DefValueVar_1,\DefValueVar_2) \SymbDef 
    \begin{cases}
    \Set{\mcif} & \text{ if either $\DefValueVar_1 \not\in \DesSetFive$ or $\DefValueVar_2 \not\in \DesSetFive$}\\
    \Set{\mcit,\mciI} & \text{ otherwise}
    \end{cases}\\
    \BMatrixInterp{\lor}{\MCIFiveVMatrix}(\DefValueVar_1,\DefValueVar_2) \SymbDef 
    \begin{cases}
    \Set{\mcit,\mciI} & \text{ if either $\DefValueVar_1 \in \DesSetFive$ or $\DefValueVar_2 \in \DesSetFive$}\\
    \Set{\mcif} & \text{if $\DefValueVar_1$,$\DefValueVar_2 \not\in \DesSetFive$}
    \end{cases}\\
    \BMatrixInterp{\mciImp}{\MCIFiveVMatrix}(\DefValueVar_1,\DefValueVar_2) \SymbDef 
    \begin{cases}
    \Set{\mcit,\mciI} & \text{ if either $\DefValueVar_1 \not\in \DesSetFive$ or $\DefValueVar_2 \in \DesSetFive$}\\
    \Set{\mcif} & \text{if $\DefValueVar_1 \in \DesSetFive$  and $\DefValueVar_2 \not\in \DesSetFive$}
    \end{cases}\\
    \begin{tabular}{>{\centering\arraybackslash}m{.7cm}|>{\centering\arraybackslash}m{.7cm}|>{\centering\arraybackslash}m{.7cm}|>{\centering\arraybackslash}m{.7cm}|>{\centering\arraybackslash}m{.7cm}|>{\centering\arraybackslash}m{.7cm}}
          & \mcif&\mciF&\mciI&\mciT&\mcit \\
         \midrule
         $\BMatrixInterp{\neg}{\MCIFiveVMatrix}$&\mcit,\mciI&\mciT&\mcit,\mciI&\mciF&\mcif\\
    \end{tabular}\hspace{.5cm}    
    \begin{tabular}{>{\centering\arraybackslash}m{.7cm}|>{\centering\arraybackslash}m{.7cm}|>{\centering\arraybackslash}m{.7cm}|>{\centering\arraybackslash}m{.7cm}|>{\centering\arraybackslash}m{.7cm}|>{\centering\arraybackslash}m{.7cm}}
          & \mcif&\mciF&\mciI&\mciT&\mcit \\
         \midrule
         $\BMatrixInterp{\circ}{\MCIFiveVMatrix}$&\mciT&\mciT&\mciF&\mciT&\mciT\\
    \end{tabular}
\end{gather*}
\noindent We note that the logic \textbf{mCi} inhabits the $\tAsp$-aspect of $\BEnt{\PlaceholderArg}{\PlaceholderArg}{\PlaceholderArg}{\PlaceholderArg}{\MCIFiveVMatrix}$. 
It is worth pointing out that, up to now, 
no \emph{finite} Hilbert-style calculus was known
to axiomatize this logic;
however, a finite two-dimensional symmetrical
Hilbert-style calculus for \textbf{mCi} results
smoothly from the procedure described
in the next section.
\end{example}

Given 
$\DefSetValueVar, \DeffSetValueVar \subseteq \DefVSet$ and $\DefCogVar \in \Set{\Acc,\Rej}$,
we say that
$\DefSetValueVar$
and $\DeffSetValueVar$
are \emph{$\DefCogVar$-separated},
denoted by
$\SeparatedValues{\DefSetValueVar}{\DeffSetValueVar}{\DefCogVar}$,
if 
$\DefSetValueVar \subseteq \BMatrixDistSet{\DefCogVar}{\DefBMatrix}$
and 
$\DeffSetValueVar \subseteq \SetDiff{\DefVSet}{\BMatrixDistSet{\DefCogVar}{\DefBMatrix}}$,
or vice-versa.
Given two truth-values $\DefValueVar, \DeffValueVar \in \DefVSet$,
a single-variable formula $\DefSeparator$
is a \emph{monadic separator for~$\DefValueVar$ and~$\DeffValueVar$}
whenever
$\SeparatedValues{\BMatrixInterpFm{\DefSeparator}{\DefBMatrix}(\DefValueVar)}{\BMatrixInterpFm{\DefSeparator}{\DefBMatrix}(\DeffValueVar)}{\DefCogVar}$, for some $\DefCogVar \in \{\Acc,\Rej\}$.
The $\theBPN[\Sigma]$--matrix $\DefBMatrix$
is said to be
\emph{monadic} when
for
each pair 
of distinct truth-values of
$\DefBMatrix$
there is a monadic separator
for these values.\footnote{Whether monadicity of a $\theBPN[\Sigma]$--matrix is decidable is still an open problem.}
We say that a set of single-variable formulas $\DiscriminatorVal{}{}{\DefValueVar}{\DefBMatrix}$ \emph{isolates} $\DefValueVar$
whenever, for every $\DeffValueVar\neq\DefValueVar$,
there exists a monadic separator $\DefSeparator \in \DiscriminatorVal{}{}{\DefValueVar}{\DefBMatrix}$ for~$\DefValueVar$ and~$\DeffValueVar$.
A \emph{discriminator for $\DefBMatrix$}, then,
is a family 
$\Discriminator{\DefBMatrix} \SymbDef
\Family{\Tuple{
\DiscriminatorVal{\Acc}{\Acc}{\DefValueVar}{\DefBMatrix},
\DiscriminatorVal{\Acc}{\NAcc}{\DefValueVar}{\DefBMatrix},
\DiscriminatorVal{\Rej}{\Rej}{\DefValueVar}{\DefBMatrix},
\DiscriminatorVal{\Rej}{\NRej}{\DefValueVar}{\DefBMatrix}
}}{\DefValueVar \in \DefVSet}$
such that $\DiscriminatorVal{}{}{\DefValueVar}{\DefBMatrix} \SymbDef \bigcup_{\DefCogVar} \DiscriminatorVal{\DefCogVar}{\DefCogVar}{\DefValueVar}{\DefBMatrix}$
isolates $\DefValueVar$
and $\BMatrixInterpFm{\DefSeparator}{\DefBMatrix}(\DefValueVar) \subseteq \BMatrixDistSet{\DefCogVar}{\DefBMatrix}$ whenever $\DefSeparator \in \DiscriminatorVal{\DefCogVar}{\DefCogVar}{\DefValueVar}{\DefBMatrix}$.
We denote the set $\bigcup_{\DefValueVar \in \DefVSet} \DiscriminatorVal{}{}{\DefValueVar}{\DefBMatrix}$ by $\SeparatorsSet{\DefBMatrix}$
and say that $\Discriminator{\DefBMatrix}$
\emph{is based on}~$\SeparatorsSet{\DefBMatrix}$.

\begin{example}
\label{ex:disc}
The tables below describe, respectively, a discriminator
based on $\Set{\DefProp}$
for any $\theBPN[\Sigma]$--matrix of the form 
$\BMatrixStruct{}{\FourValuesSetName}{\DesSetFour}{\RejSetFour}$ (see Examples \ref{ex:infomatrix}, \ref{ex:fde} and \ref{ex:kleenemat}) and a discriminator for $\MCIFiveVMatrix$ based on $\Set{\DefProp,\neg\DefProp}$ (of Example \ref{ex:mci}):
\begin{table}[H]
    \centering
    \begin{tabular}{>{\centering\arraybackslash}m{.7cm}!{\vrule width .6pt}>{\centering\arraybackslash}m{.7cm}>{\centering\arraybackslash}m{.7cm}>{\centering\arraybackslash}m{.7cm}>{\centering\arraybackslash}m{.7cm}}
        \toprule
         $\DefValueVar$ &$\DiscriminatorVal{\Acc}{\Acc}{\DefValueVar}{\DefBMatrix}$& $\DiscriminatorVal{\Acc}{\NAcc}{\DefValueVar}{\DefBMatrix}$&
         $\DiscriminatorVal{\Rej}{\Rej}{\DefValueVar}{\DefBMatrix}$&
         $\DiscriminatorVal{\Rej}{\NRej}{\DefValueVar}{\DefBMatrix}$\\
         \midrule
         \fVal& $\EmptySet$& $\DefProp$ &$\DefProp$&$\EmptySet$\\
         \BotVal& $\EmptySet$ &$\DefProp$ &$\EmptySet$&$\DefProp$\\
         \TopVal&$\DefProp$&$\EmptySet$ &$\DefProp$&$\EmptySet$\\
         \tVal&$\DefProp$&$\EmptySet$ &$\EmptySet$&$\DefProp$\\
         \bottomrule
    \end{tabular}\hspace{1cm}%
    \begin{tabular}{>{\centering\arraybackslash}m{.7cm}!{\vrule width .6pt}>{\centering\arraybackslash}m{.7cm}>{\centering\arraybackslash}m{.7cm}>{\centering\arraybackslash}m{.7cm}>{\centering\arraybackslash}m{.7cm}}
        \toprule
         $\DefValueVar$ &$\DiscriminatorVal{\Acc}{\Acc}{\DefValueVar}{\DefBMatrix}$& $\DiscriminatorVal{\Acc}{\NAcc}{\DefValueVar}{\DefBMatrix}$&
         $\DiscriminatorVal{\Rej}{\Rej}{\DefValueVar}{\DefBMatrix}$&
         $\DiscriminatorVal{\Rej}{\NRej}{\DefValueVar}{\DefBMatrix}$\\
         \midrule
         \mcif&$\EmptySet$&$\DefProp$&$\DefProp$&$\EmptySet$\\
         \mciF&$\EmptySet$&$\DefProp$&$\EmptySet$&$\DefProp$\\
         \mciI&$\DefProp,\neg\DefProp$&$\EmptySet$ &$\DefProp$&$\EmptySet$\\
         \mciT&$\DefProp$& $\neg\DefProp$ &$\DefProp$&$\EmptySet$\\
         \mcit&$\DefProp$&$\EmptySet$&$\EmptySet$&$\DefProp$\\
         \bottomrule
    \end{tabular}
\end{table}
\end{example}
The following result 
---which will be instrumental, in particular, within the soundness proof of the axiomatizations that we will develop later on--- 
shows that a discriminator is capable of uniquely characterizing each truth-value of the corresponding
$\theBPN[\Sigma]$--matrix:

\begin{replemma}{lem:caracvalues}
If $\DefBMatrix$ is a
monadic {\upshape$\theBPN[\Sigma]$}--matrix
and $\Discriminator{\DefBMatrix}$
is a discriminator for $\DefBMatrix$,
then, for all $\DefFm \in \DefLangSet$, 
$\DefValueVar \in \DefVSet$
and $\DefBMatrix$-valuation
$\DefVal$,
\begin{equation*}
    \DefVal(\DefFm) = \DefValueVar
    \text{\ \ if{f}\ \ }
    \DefVal(\DiscriminatorVal{\DefCogVar}{\DefCogVar}{\DefValueVar}{\DefBMatrix}(\DefFm))
    \subseteq \BMatrixDistSet{\DefCogVar}{\DefBMatrix}
    \text{ and }
    \DefVal(\DiscriminatorVal{\DefCogVar}{\InvCog{\DefCogVar}}{\DefValueVar}{\DefBMatrix}(\DefFm))
    \subseteq \BMatrixDistSet{{\InvCog{\DefCogVar}}}{\DefBMatrix}
    \text{ for every }
    \DefCogVar \in \Set{\Acc,\Rej}.
\end{equation*}
\end{replemma}
\begin{proof}
Analogous to
the proof of Lemma 1 in \cite{marcelinowollic}.
\end{proof}

\subsection{Calculi for two-dimensional statements}

We may consider the \theB-statements themselves as the
formal objects whose provability by a given (Hilbert-style) deductive proof system we will be interested upon. 
The \theB-statements with finite component sets will be hereupon called
\theB\emph{-sequents}.
A (\emph{\TwoDCalculusTypeName{}) rule schema}
$\DefSchemaVar \SymbDef \TwoDRule{\DefAccSet}{\DefNRejSet}{\DefRejSet}{\DefNAccSet}$ is a \theB-statement 
$\BStat{\DefAccSet}{\DefNRejSet}{\DefRejSet}{\DefNAccSet}$ 
that, when having its component sets subjected
to a substitution $\DefSubs$, produce a \emph{(rule) instance (with schema~$\DefSchemaVar$)}, denoted simply by $\RuleInstance{\DefSchemaVar}{\DefSubs}$;
for each rule instance $\RuleInstance{\DefSchemaVar}{\DefSubs}$,
the pair $\RuleAnt{\SubsApply{\sigma}{\DefAccSet}}{ \SubsApply{\sigma}{\DefRejSet}}$
is said to be the \emph{antecedent}
and the pair $\RuleSuc{\SubsApply{\sigma}{\DefNAccSet}}{\SubsApply{\sigma}{\DefNRejSet}}$ is said to be the \emph{succedent} of $\RuleInstance{\DefSchemaVar}{\DefSubs}$. For later reference, we also set
$\BranchSymb({\RuleInstance{\DefSchemaVar}{\DefSubs}}) \SymbDef \SetSize{\SubsApply{\sigma}{\DefNAccSet} \cup \SubsApply{\sigma}{\DefNRejSet}}$
and $\SeqSizeSymb(\RuleInstance{\DefSchemaVar}{\DefSubs}) \SymbDef \sum_{\DefCogVar}\FmSizeSymb(\SubsApply{\sigma}{\DefFmlaSet_{\DefCogVar}})$, which extends to sets of rule instances in the natural way.
Notice that our notation for rule schemas differs from that of \theB-statements 
with respect to the positioning of the sets of formulas. The purpose is to facilitate the development of proofs in tree form growing downwards from the premises to the conclusion as described in the sequel. \theB-statements, in turn, follow the notation for consequence judgements, which is motivated by the bilattice representation of the four logical values underlying a \theB-consequence relation~\cite{blasiomarcos2017}, in addition to the desire of better expressing the possible interactions between the two dimensions.

A \emph{(\TwoDCalculusTypeName{}) calculus} $\DefCalculusName$ is a collection 
of rule schemas. 
We shall sometimes refer to the set of all rule
instances of a schema $\DefSchemaVar_{}$ of $\DefCalculusName$
as an \emph{inference rule (with schema $\DefSchemaVar_{}$) of $\DefCalculusName$}.
An inference rule with schema
$\DefSchemaVar \SymbDef \TwoDRule{\DefAccSet}{\DefNRejSet}{\DefRejSet}{\DefNAccSet}$
is called \emph{finitary}
whenever $\CogSet{\DefFmlaSet}{\DefCogVar}$
is finite for every $\DefCogVar \in \CogsSet$.
A calculus is finitary
when each of its inference rules
is finitary.

In order to explain what it means for
a \theB-statement 
$\DefBseqVar \SymbDef \BStat{\DefAccSet}{\DefNRejSet}{\DefRejSet}{\DefNAccSet}{}$ to be
provable
--- in other words, for
its succedent
 $\StatSuc{\DefNAccSet}{\DefNRejSet}$
to follow from its antecedent
$\StatAnt{\DefAccSet}{\DefRejSet}$ ---
using the inference rules of a calculus, we will first introduce
the notion of a derivation structured in tree form.
A \emph{directed rooted tree}
$\DefTreeVar$ is a poset
$\Struct{\TreeNodes{\DefTreeVar}, \TreeDescOrder{\DefTreeVar}}$
such that, for every
\emph{node} $\DefNodeVar \in \TreeNodes{\DefTreeVar}$,
the set $\TreeAncests{\DefTreeVar}{\DefNodeVar} \SymbDef \Set{\DefNodeVar^\prime \mid \DefNodeVar^\prime \TreeDescOrderIrr{\DefTreeVar} \DefNodeVar}$ of the \emph{ancestors}
of $\DefNodeVar$ is well-ordered
under $\TreeDescOrderIrr{\DefTreeVar}$,
and there is a single minimal
element $\TreeRoot{\DefTreeVar}$,
called the \emph{root} of
$\DefTreeVar$.
We denote by
$\TreeDescs{\DefNodeVar}{\DefTreeVar} \SymbDef \Set{\DefNodeVar^\prime \mid \DefNodeVar \TreeDescOrderIrr{\DefTreeVar} \DefNodeVar^\prime}$
the set of
\emph{descendants} of~$\DefTreeVar$,
by $\TreeChildren{\DefTreeVar}{\DefNodeVar}$ the minimal elements
of $\TreeDescs{\DefNodeVar}{\DefTreeVar}$ (the \emph{children}
of~$\DefNodeVar$ in~$\DefTreeVar$),
and by $\TreeLeaves{\DefTreeVar}$
the set of maximal elements of
$\TreeDescOrder{\DefTreeVar}$, the \emph{leaves}
of~$\DefTreeVar$.
A rooted tree~$\DefTreeVar$ is
said to be \emph{bounded} when every branch of $\DefTreeVar$ has a leaf.
Moreover, we will call \emph{labelled} 
a rooted tree~$\DefTreeVar$ that comes equipped with a mapping
$\DefTreeLabelling{\DefTreeVar} : \TreeNodes{\DefTreeVar} \to \PowerSet{\DefLangSet}^2 \cup \Set{\StarLabel}$, each node $\DefNodeVar$ of $\DefTreeVar$ being \emph{labelled with} $\DefTreeLabelling{\DefTreeVar}(\DefNodeVar)$.
A node labelled with $\StarLabel$ is said to be \emph{discontinued}.
In what follows, labelled bounded rooted trees will be referred to simply as \emph{trees}.
A tree with a single node labelled with $\mathcal{l} \in \PowerSet{\DefLangSet}^2 \cup \Set{\StarLabel}$ will
be denoted by $\SingleNodeTree(\mathcal{l})$.

Given a node $\DefNodeVar$
labelled with $\Tuple{\DefFmlaSet, \DeffFmlaSet}$
and given a formula $\DefFm$,
we shall use $\ExpandNodeAcc{\DefNodeVar}{\DefFm}$
to refer to a node labelled with
$\Tuple{\DefFmlaSet \cup \Set{\DefFm}, \DeffFmlaSet}$
and use $\ExpandNodeRej{\DefNodeVar}{\DefFm}$
to refer to a node labelled with
$\Tuple{\DefFmlaSet, \DeffFmlaSet \cup \Set{\DefFm}}$.
We say that a tree $\DefTreeVar$
is a $\DefCalculusName$\emph{-derivation} provided that
for each non-leaf node~$\DefNodeVar$ of~$\DefTreeVar$
labelled with 
$\Tuple{\AccSet{\DeffFmlaSet}, \RejSet{\DeffFmlaSet}}$
there is
an instance of an inference rule of $\DefCalculusName$,
say
$\RuleInstance{\DefSchemaVar}{\DefSubs} = \TwoDRule
{\SubsApply{\DefSubs}{\DefAccSet}}
{\SubsApply{\DefSubs}{\DefNRejSet}}
{\SubsApply{\DefSubs}{\DefRejSet}}
{\SubsApply{\DefSubs}{\DefNAccSet}}{}$, that \textit{expands} $\DefNodeVar$ or, equivalently, that is \textit{applicable} to the label of $\DefNodeVar$,
meaning that $\SubsApply{\DefSubs}{\CogSet{\DefFmlaSet}{\DefCogVar}} \subseteq \CogSet{\DeffFmlaSet}{\DefCogVar}$,
for every $\DefCogVar \in \Set{\Acc, \Rej}$,
and
\begin{itemize}
    \item 
    if $\DefNAccSet \cup \DefNRejSet = \EmptySet$,
    then $\TreeChildren{\DefTreeVar}{\DefNodeVar} = \Set{\StarTree{\DefNodeVar}}$ and $\DefTreeLabelling{\DefTreeVar}(\StarTree{\DefNodeVar}) = \StarLabel$
    
    \item 
    otherwise,
    $\TreeChildren{\DefTreeVar}{\DefNodeVar} = \Set{\ExpandNodeAcc{\DefNodeVar}{\DefFm} \mid \DefFm \in \SubsApply{\DefSubs}{\NAccSet{\DefFmlaSet}}} \cup \Set{\ExpandNodeRej{\DefNodeVar}{\DefFm} \mid \DefFm \in \SubsApply{\DefSubs}{\NRejSet{\DefFmlaSet}}}$
\end{itemize}

\noindent 
We should observe that, with our present notation, traditional Hilbert-style
derivations (when only inference
rules with a single formula
in the succedent are applied)
turn out to be linear trees; 
for all practical purposes, at any given node we may count with all the information from previous nodes in the branch, and, accordingly, a rule application with a single succedent just adds a new bit of information to that very branch.

Given 
a \theB-statement
$\DefBseqVar \SymbDef \BStat{\DefAccSet}{\DefNRejSet}{\DefRejSet}{\DefNAccSet}$
and a calculus $\DefCalculusName$,
a $\DefCalculusName$-derivation
$\DefTreeVar$
with 
$\DefTreeLabelling{\DefTreeVar}(\TreeRoot{\DefTreeVar}) = \Tuple{\AccSet{\DeffFmlaSet}, \RejSet{\DeffFmlaSet}}$
is a
$\DefCalculusName$\emph{-proof}
of~$\DefBseqVar$
provided that
$\CogSet{\DeffFmlaSet}{\DefCogVar} \subseteq \CogSet{\DefFmlaSet}{\DefCogVar}$
for every $\DefCogVar \in \Set{\Acc, \Rej}$
and, for all $\DefNodeVar \in \TreeLeaves{\DefTreeVar}$
with $\DefTreeLabelling{\DefTreeVar}(\DefNodeVar) = \Tuple{\NAccSet{\DeffFmlaSet}, \NRejSet{\DeffFmlaSet}}$,
we have
$\CogSet{\DeffFmlaSet}{\DefCogVar} \cap \CogSet{\DefFmlaSet}{\DefCogVar} \neq \EmptySet$
for some $\DefCogVar \in \Set{\NAcc, \NRej}$.
We also say that 
a node
is \emph{$\Tuple{\DefNAccSet,\DefNRejSet}$-closed}
when the latter condition holds for such node and we say that
$\DefTreeVar$ is
\emph{$\Tuple{\DefNAccSet,\DefNRejSet}$-closed} when all
of its leaf nodes are $\Tuple{\DefNAccSet,\DefNRejSet}$-closed.
When a $\DefCalculusName$-proof
exists for the \theB-statement
$\DefBseqVar$, we say that
$\DefBseqVar$ is
$\DefCalculusName$\emph{-provable}.
The reader is referred to Example~\ref{ex:derivations}
in order to see some proofs of the form we have just described.
A calculus
$\DefCalculusName$ induces
a $2{\times}2$-place relation $\BCalcConName{\DefCalculusName}$
over $\PowerSet{\DefLangSet}$
such that $\BCalcCon{\DefAccSet}{\DefNRejSet}{\DefRejSet}{\DefNAccSet}{\DefCalculusName}$
whenever $\BStat{\DefAccSet}{\DefNRejSet}{\DefRejSet}{\DefNAccSet}$ is $\DefCalculusName$-provable.
As we point out in Prop.~\ref{prop:calculusbconseq} below, this provides another realization (compare with Prop.\ \ref{propo:bentailmentbconseq}) of a \theB-consequence relation.

\begin{proposition}
\label{prop:calculusbconseq}
Given a calculus $\DefCalculusName$,
the $2{\times}2$-place relation $\BCalcConName{\DefCalculusName}$
is the smallest \theB-consequence
containing the rules of
$\DefCalculusName$.
\end{proposition}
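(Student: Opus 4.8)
The plan is to prove that $\BCalcConName{\DefCalculusName}$ (i)~contains the rules of $\DefCalculusName$, (ii)~is itself a \theB-consequence relation, and (iii)~is included in every \theB-consequence relation containing the rules of $\DefCalculusName$; together, (i)--(iii) say precisely that $\BCalcConName{\DefCalculusName}$ is the smallest \theB-consequence containing those rules. For (i), given a rule schema $\DefSchemaVar=\TwoDRule{\DefAccSet}{\DefNRejSet}{\DefRejSet}{\DefNAccSet}$ I expand $\SingleNodeTree(\Tuple{\DefAccSet,\DefRejSet})$ once by $\DefSchemaVar$ under the identity substitution: when $\DefNAccSet\cup\DefNRejSet=\EmptySet$ the single child is discontinued and nothing else is required; otherwise every child is a leaf $\ExpandNodeAcc{\DefNodeVar}{\DefFm}$ with $\DefFm\in\DefNAccSet$ or $\ExpandNodeRej{\DefNodeVar}{\DefFm}$ with $\DefFm\in\DefNRejSet$, whose accepted (resp.\ rejected) component now contains $\DefFm$ and hence is $\Tuple{\DefNAccSet,\DefNRejSet}$-closed, so $\BStat{\DefAccSet}{\DefNRejSet}{\DefRejSet}{\DefNAccSet}$ is $\DefCalculusName$-provable.

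For (ii), properties (O), (D) and (S) are immediate. For (O), whenever $\DefAccSet\cap\DefNAccSet\neq\EmptySet$ or $\DefRejSet\cap\DefNRejSet\neq\EmptySet$ the one-node tree $\SingleNodeTree(\Tuple{\DefAccSet,\DefRejSet})$ is already a $\DefCalculusName$-proof of $\BStat{\DefAccSet}{\DefNRejSet}{\DefRejSet}{\DefNAccSet}$. For (D), the conditions defining a proof (on the root label and on the leaves) only become easier as the four component sets grow, so a proof of $\BStat{\DeffAccSet}{\DeffNRejSet}{\DeffRejSet}{\DeffNAccSet}$ is verbatim a proof of the larger \theB-statement. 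For (S), applying a substitution $\DefSubs$ to every label of a proof of $\BStat{\DeffAccSet}{\DeffNRejSet}{\DeffRejSet}{\DeffNAccSet}$ turns each rule instance $\RuleInstance{\DefSchemaVar}{\DefSubs'}$ expanding a node into $\RuleInstance{\DefSchemaVar}{\DefSubs\circ\DefSubs'}$ and preserves the expansion clause (substitution commutes with unions), giving a proof of $\BStat{\SubsApply{\DefSubs}{\DeffAccSet}}{\SubsApply{\DefSubs}{\DeffNRejSet}}{\SubsApply{\DefSubs}{\DeffRejSet}}{\SubsApply{\DefSubs}{\DeffNAccSet}}$. The real work is (C), which I would prove contrapositively: assuming $\BStat{\DefAccSet}{\DefNRejSet}{\DefRejSet}{\DefNAccSet}$ is not $\DefCalculusName$-provable, I build sets $\Phi,\Psi$ with $\DefAccSet\subseteq\Phi\subseteq\SetCompl{\DefNAccSet}$ and $\DefRejSet\subseteq\Psi\subseteq\SetCompl{\DefNRejSet}$ such that $\BStat{\Phi}{\SetCompl{\Psi}}{\Psi}{\SetCompl{\Phi}}$ is still not provable, refuting the hypothesis of (C). The construction ``saturates'' $\Tuple{\DefAccSet,\DefRejSet}$ by running over all rule instances; the key point is that non-provability is stable under one rule step with grafting: if $\BStat{\Gamma}{\DefNRejSet}{\Delta}{\DefNAccSet}$ is not provable and a rule instance with non-empty succedent $\Tuple{\Sigma_a,\Sigma_r}$ has antecedent included in $\Tuple{\Gamma,\Delta}$, then adding some $\DefFm\in\Sigma_a$ to $\Gamma$ or some $\DefFm\in\Sigma_r$ to $\Delta$ again yields a non-provable statement — for, otherwise, grafting the (bounded) proofs of all those one-step extensions below a single application of the rule would prove $\BStat{\Gamma}{\DefNRejSet}{\Delta}{\DefNAccSet}$; moreover non-provability forbids any empty-succedent rule from being applicable and forces $\Gamma\cap\DefNAccSet=\Delta\cap\DefNRejSet=\EmptySet$. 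Iterating this over all rule instances — transfinitely, with care at limit stages — produces the desired $\Tuple{\Phi,\Psi}$; any alleged proof of $\BStat{\Phi}{\SetCompl{\Psi}}{\Psi}{\SetCompl{\Phi}}$ would then admit a branch that never leaves $\Tuple{\Phi,\Psi}$, hence never reaches a leaf, contradicting boundedness.

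For (iii), let $\BConName$ be a \theB-consequence relation containing the rules of $\DefCalculusName$ and let $\DefTreeVar$ be a $\DefCalculusName$-proof of $\BStat{\DefAccSet}{\DefNRejSet}{\DefRejSet}{\DefNAccSet}$. I proceed by induction on $\DefTreeVar$ from its leaves towards its root, proving $\BCon{\Gamma}{\DefNRejSet}{\Delta}{\DefNAccSet}$ for every node labelled $\Tuple{\Gamma,\Delta}$. A leaf has a $\Tuple{\DefNAccSet,\DefNRejSet}$-closed label, so its claim is an instance of (O). For a non-leaf node expanded by a rule instance $\RuleInstance{\DefSchemaVar}{\DefSubs}$ of $\DefCalculusName$ with antecedent $\Tuple{\Gamma_0,\Delta_0}\subseteq\Tuple{\Gamma,\Delta}$ and succedent $\Tuple{\Sigma_a,\Sigma_r}$ (its acceptance and rejection components): since $\BConName$ contains the rules, (S) and (D) yield $\BCon{\Gamma}{\Sigma_r}{\Delta}{\Sigma_a}$ (i.e.\ the succedent $\Tuple{\Sigma_a,\Sigma_r}$ follows from $\Tuple{\Gamma,\Delta}$); if $\Sigma_a\cup\Sigma_r=\EmptySet$ this already gives, via (D), the desired $\BCon{\Gamma}{\DefNRejSet}{\Delta}{\DefNAccSet}$, and otherwise I combine it with the inductive hypotheses $\BCon{\Gamma\cup\Set{\DefFm}}{\DefNRejSet}{\Delta}{\DefNAccSet}$ for $\DefFm\in\Sigma_a$ and $\BCon{\Gamma}{\DefNRejSet}{\Delta\cup\Set{\DefFm}}{\DefNAccSet}$ for $\DefFm\in\Sigma_r$ through (C): for every $\Gamma\subseteq\Phi\subseteq\SetCompl{\DefNAccSet}$ and $\Delta\subseteq\Psi\subseteq\SetCompl{\DefNRejSet}$ either some $\DefFm\in\Sigma_a$ lies in $\Phi$, or some $\DefFm\in\Sigma_r$ lies in $\Psi$, or $\Sigma_a\cap\Phi=\Sigma_r\cap\Psi=\EmptySet$; diluting the appropriate inductive hypothesis in the first two cases, and $\BCon{\Gamma}{\Sigma_r}{\Delta}{\Sigma_a}$ in the third, yields $\BCon{\Phi}{\SetCompl{\Psi}}{\Psi}{\SetCompl{\Phi}}$, so (C) delivers $\BCon{\Gamma}{\DefNRejSet}{\Delta}{\DefNAccSet}$. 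Reading this at $\TreeRoot{\DefTreeVar}$, whose label is included in $\Tuple{\DefAccSet,\DefRejSet}$, and diluting once more gives $\BCon{\DefAccSet}{\DefNRejSet}{\DefRejSet}{\DefNAccSet}$; hence $\BCalcConName{\DefCalculusName}\subseteq\BConName$.

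I expect condition (C) for $\BCalcConName{\DefCalculusName}$ to be the main obstacle: the saturation argument has to be carried out carefully in the possibly infinitary setting (the transfinite recursion over rule instances, and the preservation of non-provability at limit stages), and one must verify that the saturated pair $\Tuple{\Phi,\Psi}$ genuinely blocks every candidate proof by forcing an unbounded branch. Everything else is routine — part (i) and properties (O), (D), (S) are direct from the definitions, and the minimality argument in (iii) is a clean induction on the proof tree that only invokes (O), (D), (S) and (C) of the ambient relation $\BConName$.
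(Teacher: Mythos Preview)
The paper states this proposition without proof (neither in the main text nor in the appendix), so there is no reference argument to compare against. Your decomposition into (i)~containment of the rules, (ii)~closure under (O), (D), (C), (S), and (iii)~minimality is the natural one; your treatments of (i), of (O), (D), (S), and of (iii) via leaf-to-root induction on the proof tree using (O), (D), (S), (C) of the ambient relation are correct and standard.

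On (C), the Lindenbaum-style saturation you sketch is the right idea, and you are right to single it out. One point deserves more than a parenthetical remark. As the paper defines them, trees may have branches of arbitrary well-ordered length, and the derivation clause only constrains a node's \emph{children}; a node sitting at a limit level has no parent and hence an unconstrained label. Your ``follow a child that stays inside $\Tuple{\Phi,\Psi}$'' argument produces an $\omega$-chain of nodes, but if the tree has a node above that chain at a limit level, the branch's leaf need not have label contained in $\Tuple{\Phi,\Psi}$, and the contradiction with boundedness does not fire. The same issue affects the limit stages of your transfinite construction: provability does not obviously commute with directed unions of antecedents when root labels may be infinite. Under the intended reading---trees in which every non-root node has a parent (equivalently, all branches are finite chains)---both problems disappear and your argument goes through cleanly; you should make that assumption explicit, or else argue separately that limit-level nodes can be excised from any $\DefCalculusName$-proof without loss.
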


Given a collection~$R$ of rule instances, 
we say that a \theB-statement $\DefBseqVar$
is $R$-provable whenever there is a proof
of~$\DefBseqVar$ using only rule instances
in~$R$. We may define a  $2{\times}2$-place relation
$\BRuleInstCon{\PlaceholderArg}{\PlaceholderArg}{\PlaceholderArg}{\PlaceholderArg}{R}$ by setting
$\BRuleInstCon{\DefAccSet}{\DefNRejSet}{\DefRejSet}{\DefNAccSet}{R}$ to hold iff $\BStat{\DefAccSet}{\DefNRejSet}{\DefRejSet}{\DefNAccSet}{}$ is $R$-provable. Although not necessarily substitution-invariant, one may readily check that
this relation respects properties \ref{prop:BConO}, \ref{prop:BConD}
and \ref{prop:BConC}.

Given a $\theBPN[\Sigma]$--matrix~$\DefBMatrix$,
we say that a calculus $\DefCalculusName$
is \emph{sound} with respect to~$\DefBMatrix$
whenever $\BCalcConName{\DefCalculusName} \subseteq \BEnt{\PlaceholderArg}{\PlaceholderArg}{\PlaceholderArg}{\PlaceholderArg}{\DefBMatrix}$
and say that it is \emph{complete} with respect to~$\DefBMatrix$ when the converse inclusion holds.
Being sound and complete means that
$\DefCalculusName$
\emph{axiomatizes}~$\DefBMatrix$.

\begin{example}
Any fully indeterministic $\theBPN[\Sigma]$--matrix
is axiomatized by the empty set of rules.
\end{example}

\begin{example}
\label{ex:axiomatinfo}
We present below a calculus that axiomatizes
the $\theBPN[\Sigma]$--matrix $\InfoFourMatrix$
introduced in Example \ref{ex:infomatrix}, resulting
from the simplification
of the calculus produced via the recipe described
in Definition~\ref{def:axiomatization},
given further ahead.
\[
    \TwoDRule{\DefProp}{}{}{\DefProp \lor \DeffProp}{\lor^4_1}\quad
    \TwoDRule{\DeffProp}{}{}{\DefProp \lor \DeffProp}{\lor^4_2}\quad
    \TwoDRule{}{\DefProp \lor \DeffProp}{\DefProp, \DeffProp}{}{\lor^4_3}\quad
    \TwoDRule{}{\DeffProp}{\DefProp \lor \DeffProp}{}{\lor^4_4}\quad
    \TwoDRule{}{\DefProp}{\DefProp \lor \DeffProp}{}{\lor^4_5}
\]
\[
    \TwoDRule{\DefProp\land\DeffProp}{}{}{\DefProp}{\land^4_1}\quad
    \TwoDRule{\DefProp\land\DeffProp}{}{}{\DeffProp}{\land^4_2}\quad
    \TwoDRule{\DefProp, \DeffProp}{}{}{\DefProp\land\DeffProp}{\land^4_3}\quad
    \TwoDRule{}{\DefProp \land \DeffProp}{\DeffProp}{}{\land^4_4}\quad
    \TwoDRule{}{\DefProp \land \DeffProp}{\DefProp}{}{\land^4_5}\\
\]
\[
    \TwoDRule{}{}{\neg\DefProp}{\DefProp}{\neg^4_1}\quad
    \TwoDRule{}{}{\DefProp}{\neg \DefProp}{\neg^4_2}\quad
    \TwoDRule{\neg\DefProp}{\DefProp}{}{}{\neg^4_3}\quad
    \TwoDRule{\DefProp}{\neg \DefProp}{}{}{\neg^4_4}
\]
\end{example}

The next example illustrates how
adding rules to an axiomatization
of a $\theBPN[\Sigma]$--matrix
$\DefBMatrix$
imposes
refinements on $\DefBMatrix$
in order to
guarantee soundness of
these very rules.
Such mechanism is essential
to the axiomatization
procedure presented in the
next section.

\begin{example}
\label{ex:otheraxiomats}
We obtain an axiomatization
for $\FDEMatrix$ by adding
the following rules to the
calculus of Example \ref{ex:axiomatinfo}:
\[
    \TwoDRule{\DefProp\lor\DeffProp}{}{}{\DefProp,\DeffProp}{\lor^4_6}
    \qquad
    \TwoDRule{}{\DefProp, \DeffProp}{\DefProp\land\DeffProp}{}{\land^4_6} 
\]
If, in addition, we include
the rule
\[
\TwoDRule{\DeffProp}{\DefProp}{\DeffProp}{\DefProp}{\mathrm{T}^4}
\]
we axiomatize $\ImpFreeKleeneMatrix$ (see Example~\ref{ex:kleenemat}).

Let us explain the intuition behind this mechanism considering the case of rule~$\land^4_6$; the other rules will follow the same principle. 
What rule $\land^4_6$ enforces is that any refinement of
$\InfoFourMatrix$ with respect
to which this rule is
sound must disallow
valuations that assign
values in $\Set{\BotVal,\tVal}$
to formulas $\DefFm$ and $\DeffFm$
while assigning 
a value in $\Set{\TopVal,\fVal}$
to $\DefFm\land\DeffFm$,
for otherwise such valuation would
constitute a countermodel for that very rule.
This is reflected in $\BMatrixInterp{\land}{\FDEMatrix}$
(Example \ref{ex:fde}) by the absence of the values from the set $\Set{\TopVal,\fVal}$
in the entries corresponding to the truth-value assignments in which both inputs belong to $\Set{\BotVal,\tVal}$.
\end{example}

\begin{example}
By the same mechanism used in the previous example, in adding the rules
$\TwoDRule{}{\DefProp}{}{\DefProp}{\mathrm{\BotVal E}}$ and
$\TwoDRule{\DefProp}{}{\DefProp}{}{\mathrm{\TopVal E}}$
to the axiomatization of $\FDEMatrix$, we force empty
outputs on any truth-table entry whose input involves 
either $\BotVal$ or $\TopVal$. 
It follows that Classical Logic inhabits the
$\tAsp$-aspect of the resulting $\theBPN[\Sigma]$--matrix, hereby called~$\CLMatrix$.
\end{example}

\begin{example}
\label{ex:derivations}
In Figure \ref{fig:derivs}, we offer proofs of 
$\BStat{\neg(\DefProp \land \DeffProp)}{}{}{\neg\DefProp \lor \neg\DeffProp}{}$, 
$\BStat{\DefffProp\land\DefProp}{\DeffffProp}{\DefProp\lor\DeffProp}{\DeffffProp}$
and $\BStat{\DefProp,\neg\DefProp}{}{}{}$, respectively,
in the calculi for
$\FDEMatrix$, $\ImpFreeKleeneMatrix$
and $\CLMatrix$
presented in the previous examples.
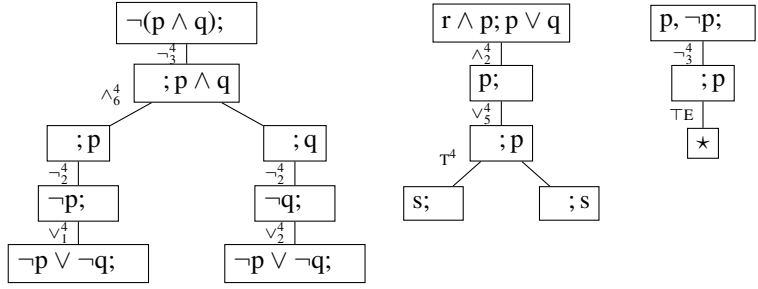
\begin{figure}[thb]
    \centering
    \begin{tikzpicture}[every tree node/.style={draw},
       level distance=0.8cm,sibling distance=1cm,
       edge from parent path={(\tikzparentnode) -- (\tikzchildnode)}, baseline]
    \Tree[.\DNode{\neg(\DefProp\land\DeffProp)}{}
            \edge node[auto=right] {\DRule{\neg^4_3}};
            [.\DNode{}{\DefProp\land\DeffProp}
                \edge node[auto=right] {\DRule{\land^4_6}};
                [.\DNode{}{\DefProp}
                    \edge node[auto=right] {\DRule{\neg^4_2}};
                    [.\DNode{\neg\DefProp}{}
                        \edge node[auto=right] {\DRule{\lor^4_1}};
                        [.\DNode{\neg\DefProp\lor\neg\DeffProp}{}
                        ]
                    ]
                ]
                [.\DNode{}{\DeffProp}
                    \edge node[auto=right] {\DRule{\neg^4_2}};
                    [.\DNode{\neg\DeffProp}{}
                        \edge node[auto=right] {\DRule{\lor^4_2}};
                        [.\DNode{\neg\DefProp\lor\neg\DeffProp}{}
                        ]
                    ]
                ]
            ]
    ]
    \end{tikzpicture}%
    \hspace{.5cm}%
    \begin{tikzpicture}[every tree node/.style={draw},
       level distance=0.8cm,sibling distance=1cm,
       edge from parent path={(\tikzparentnode) -- (\tikzchildnode)}, baseline]
    \Tree[.\DNode{\DefffProp\land\DefProp}{\DefProp\lor\DeffProp}
        \edge node[auto=right] {\DRule{\land^4_2}};
        [.\DNode{\DefProp}{} 
            \edge node[auto=right] {\DRule{\lor^4_5}};
            [.\DNode{}{\DefProp}
                \edge node[auto=right] {\DRule{\mathrm{T}^4}};
                [.\DNode{\DeffffProp}{}
                ]
                [.\DNode{}{\DeffffProp}
                ]
            ]
        ]
    ]
    \end{tikzpicture}
    \hspace{.5cm}
    \begin{tikzpicture}[every tree node/.style={draw},
       level distance=0.8cm,sibling distance=1cm,
       edge from parent path={(\tikzparentnode) -- (\tikzchildnode)}, baseline]
    \Tree[.\DNode{\DefProp,\neg\DefProp}{}
        \edge node[auto=right] {\DRule{\neg^4_3}};
        [.\DNode{}{\DefProp} 
            \edge node[auto=right] {\DRule{\TopVal E}};
            [.$\star$
            ]
        ]
    ]
    \end{tikzpicture}
\setlength{\belowcaptionskip}{-15pt}
    \caption{Examples of derivations in tree form. For the sake of a cleaner presentation, we omit the formulas that are inherited when expanding a node.
    }
    \label{fig:derivs}
\end{figure}
\end{example}

We conclude this section by introducing the notion of (generalized) analyticity of a calculus, an important feature for proof-search procedures that is built in the axiomatizations delivered by the recipe of the next section. Given a \theB-state\-ment $\DefBseqVar \SymbDef \BStat{\DefAccSet}{\DefNRejSet}{\DefRejSet}{\DefNAccSet}$, let
$\SubfmlasSeq(\DefBseqVar) \SymbDef \bigcup_{\DefCogVar \in \CogsSet} \Subfmlas{\CogSet{\DefFmlaSet}{\DefCogVar}}$
be the collection of \emph{subformulas of
$\DefBseqVar$}, and
$\GenSubfmlas{}{\DeffFmlaSet}(\DefBseqVar) \SymbDef 
\SubfmlasSeq(\DefBseqVar) \cup \Set{\SubsApply{\sigma}{\DefFm} \mid \DefFm \in \DeffFmlaSet, \DefSubs : \DefProps \to \SubfmlasSeq(\DefBseqVar)}
$ be
the \emph{generalized subformulas of $\DefBseqVar$ (with respect to $\DeffFmlaSet$)}.
Define the $2{\times}2$-place relation 
$\BAnaCalcCon{\PlaceholderArg}{\PlaceholderArg}{\PlaceholderArg}{\PlaceholderArg}{\DefCalculusName}{\GenSubfmlas{}{\DeffFmlaSet}}$
over
$\PowerSet{\DefLangSet}$ by setting
$\BAnaCalcCon{\DefAccSet}{\DefNRejSet}{\DefRejSet}{\DefNAccSet}{\DefCalculusName}{\GenSubfmlas{}{\DeffFmlaSet}}$
if{f} 
there is a $\DefCalculusName$-proof $\DefTreeVar$ of
$\DefBseqVar \SymbDef \BStat{\DefAccSet}{\DefNRejSet}{\DefRejSet}{\DefNAccSet}$ such that
$\DefTreeLabelling{\DefTreeVar}(\TreeNodes{\DefTreeVar}) \subseteq \PowerSet{\GenSubfmlas{}{\DeffFmlaSet}(\DefBseqVar)}^2 \cup \Set{\StarLabel}$.
Such a proof is said to be \emph{$\DeffFmlaSet$-an\-alyt\-ic}.
We say that
$\DefCalculusName$ is
\emph{$\DeffFmlaSet$-an\-alyt\-ic}
in case
$\BCalcCon{\DefAccSet}{\DefNRejSet}{\DefRejSet}{\DefNAccSet}{\DefCalculusName}$ implies
$\BAnaCalcCon{\DefAccSet}{\DefNRejSet}{\DefRejSet}{\DefNAccSet}{\DefCalculusName}{\GenSubfmlas{}{\DeffFmlaSet}}$.
We will denote by
$\RuleInstSetCalcSeq{\DefCalculusName}{\DefBseqVar}$
the set of all rule instances of $\DefCalculusName$ resulting from substitutions
that only use formulas in $\GenSubfmlas{}{\DeffFmlaSet}(\DefBseqVar)$.

\section{Axiomatizing Monadic $\theBPN[\Sigma]$-matrices}
\label{sec:axiomatization}


We now describe four collections of rule schemas by which any
sufficiently expressive $\theBPN[\Sigma]$--matrix
$\DefBMatrix$ is constrained. 
Together, these schemas constitute a presentation of a calculus that will be denoted by $\BIndAxiomatName{\DefBMatrix}{\Discriminator{\DefBMatrix}}{}$, where $\Discriminator{\DefBMatrix}$ is a discriminator for~$\DefBMatrix$.
The first collection, $\BIndAxiomatRuleExistsName$, is intended to exclude all combinations of separators that do not correspond to truth-values.
The second, $\BIndAxiomatRuleDName$, sets
the combinations of separators that characterize acceptance 
apart from those that characterize  non-acceptance,
and sets the combinations of separators that characterize rejection
apart from those that characterize  non-rejection.
The third one, 
$\BIndAxiomatRuleSigmaName$, 
fully describes, through appropriate refinements, the interpretation of the connectives of~$\DefSig$ in $\DefBMatrix$. 
At last, the rules in
$\BIndAxiomatRuleNonTotalName$
guarantee that values belong to total sub--$\theBPN[\DefSig]$--matrices of~$\DefBMatrix$.%

In what follows, given $\DefSetValueVar \subseteq \DefVSet$, we shall use
$\Tuple{\DiscriminatorValCh{\Acc}{\Acc}{\DefSetValueVar}{\DefBMatrix},\DiscriminatorValCh{\Rej}{\Rej}{\DefSetValueVar}{\DefBMatrix}}$
to denote a pair of sets in which
$\DiscriminatorValCh{\DefCogVar}{\DefCogVar}{\DefSetValueVar}{\DefBMatrix}$, with
$\DefCogVar \in \Set{\Acc,\Rej}$,
is obtained by choosing
an element of
$\DiscriminatorVal{\DefCogVar}{\DefCogVar}{\DefValueVar}{\DefBMatrix}$
for each $\DefValueVar \in \DefSetValueVar$.
Notice that, when $\DefSetValueVar = \EmptySet$,
the only possibility is the pair $\Tuple{\EmptySet,\EmptySet}$; moreover, 
when $\DiscriminatorVal{\Acc}{\Acc}{\DefValueVar}{\DefBMatrix} \cup \DiscriminatorVal{\Rej}{\Rej}{\DefValueVar}{\DefBMatrix} = \EmptySet$ for some $\DefValueVar \in \DefSetValueVar$, no such pair exists.
The pair
$\Tuple{\DiscriminatorValCh{\Acc}{\NAcc}{\DefSetValueVar}{\DefBMatrix},\DiscriminatorValCh{\Rej}{\NRej}{\DefSetValueVar}{\DefBMatrix}}$ shall be used analogously.
\begin{definition}
\label{def:axiomatization}
Let $\DefBMatrix$ be
a \upshape{$\theBPN[\Sigma]$}--matrix, and let $\Discriminator{\DefBMatrix}$ be a discriminator for 
$\DefBMatrix$.
The calculus
$\BIndAxiomatName{\DefBMatrix}{\Discriminator{\DefBMatrix}}{}$
is presented by way of the following rule schemas:
\begin{namedproperties}\itemsep0pt
    \item[($\BIndAxiomatRuleExistsName$)\label{calc:exists}] 
    \sloppy
    for each $\DefSetValueVarProp \subseteq \DefVSet$
    and each possible choices of
    $\Tuple{\DiscriminatorValCh{\Acc}{\Acc}{\VSetComp{\DefSetValueVar}}{\DefBMatrix},\DiscriminatorValCh{\Rej}{\Rej}{\VSetComp{\DefSetValueVar}}{\DefBMatrix}}$ and of $\Tuple{\DiscriminatorValCh{\Acc}{\NAcc}{\DefSetValueVarProp}{\DefBMatrix},\DiscriminatorValCh{\Rej}{\NRej}{\DefSetValueVarProp}{\DefBMatrix}}$,    with $\VSetComp{\DefSetValueVar} \SymbDef \SetDiff{\DefVSet}{\DefSetValueVarProp}$,
    \[
        \TwoDRule
        {\DiscriminatorValCh{\Acc}{\NAcc}{\DefSetValueVarProp}{\DefBMatrix}}
        {\DiscriminatorValCh{\Rej}{\Rej}{\VSetComp{\DefSetValueVar}}{\DefBMatrix}}
        {\DiscriminatorValCh{\Rej}{\NRej}{\DefSetValueVarProp}{\DefBMatrix}}
        {\DiscriminatorValCh{\Acc}{\Acc}{\VSetComp{\DefSetValueVar}}{\DefBMatrix}}
    \]

    \item[($\BIndAxiomatRuleDName$)\label{calc:d}] 
    for an arbitrary propositional variable $\DefProp \in \DefProps$, and for each $\DefValueVar \in \DefVSet$, 
    \[
        \TwoDRule
        {\DiscriminatorVal{\Acc}{\Acc}{\DefValueVar}{\DefBMatrix}(\DefProp), \FmlaPlacerByValue{\NAcc}(\DefValueVar)}
        {\DiscriminatorVal{\Rej}{\NRej}{\DefValueVar}{\DefBMatrix}(\DefProp)}
        {\DiscriminatorVal{\Rej}{\Rej}{\DefValueVar}{\DefBMatrix}(\DefProp)}
        {\DiscriminatorVal{\Acc}{\NAcc}{\DefValueVar}{\DefBMatrix}(\DefProp), \FmlaPlacerByValue{\Acc}(\DefValueVar)}
        \qquad
        \TwoDRule
        {\DiscriminatorVal{\Acc}{\Acc}{\DefValueVar}{\DefBMatrix}(\DefProp)}
        {\DiscriminatorVal{\Rej}{\NRej}{\DefValueVar}{\DefBMatrix}(\DefProp), \FmlaPlacerByValue{\Rej}(\DefValueVar)}
        {\DiscriminatorVal{\Rej}{\Rej}{\DefValueVar}{\DefBMatrix}(\DefProp),  \FmlaPlacerByValue{\NRej}(\DefValueVar)}
        {\DiscriminatorVal{\Acc}{\NAcc}{\DefValueVar}{\DefBMatrix}(\DefProp)}
    \]
    where, for $\DefCogVar \in \CogsSet$, $\FmlaPlacerByValue{\DefCogVar} : \DefVSet \to \PowerSet{\Set{\DefProp}}$
    is such that
    $\FmlaPlacerByValue{\DefCogVar}(\DefValueVar) = \Set{\DefProp}$
    if{f} $\DefValueVar \in \BMatrixDistSet{\DefCogVar}{\DefBMatrix}$.

    \item[($\BIndAxiomatRuleSigmaName$)\label{calc:sigma}] for each $k$-ary connective $\DefSymbol$, each sequence 
    $X \SymbDef (\DefValueVar_1,\ldots,\DefValueVar_k)$
    of
    truth-values of $\DefBMatrix$, each $\DeffValueVar \not\in 
    \BMatrixInterp{\DefSymbol}{\DefBMatrix}X
    $,
    and for a sequence
    $(\DefProp_1,\ldots,\DefProp_k)$
    of distinct propositional variables,
    \[
        \TwoDRule
        {\SigmaRulesSet{\Acc}{\DefSymbol, X,\DeffValueVar}}
        {\SigmaRulesSet{\NRej}{\DefSymbol, X,\DeffValueVar}}
        {\SigmaRulesSet{\Rej}{\DefSymbol, X,\DeffValueVar}}
        {\SigmaRulesSet{\NAcc}{\DefSymbol, X,\DeffValueVar}}
    \]
    where each $\SigmaRulesSet{\DefCogVar}{\DefSymbol, X,\DeffValueVar} \SymbDef \bigcup\limits_{1 \leq i \leq k}
            {\DiscriminatorVal{\DefCogVar}{\DefCogVar}{\DefValueVar_i}{\DefBMatrix}(\DefProp_i)}\cup
            {\DiscriminatorVal{\DefCogVar}{\DefCogVar}{\DeffValueVar}{\DefBMatrix}(\DefSymbol(\DefProp_1,\ldots,\DefProp_k))}$.

    \item[($\BIndAxiomatRuleNonTotalName$)\label{calc:nontotal}]
        for each 
    $\DefSetValueVar \not\in \TotalSubMVSet{\DefBMatrix}$
    and an arbitrary family
    $\Family{\DefProp_\DefValueVar}{\DefValueVar \in \DefSetValueVar}$
    of distinct propositional variables,
    \[
        \TwoDRule
        {\bigcup_{\DefValueVar \in \DefSetValueVar}\DiscriminatorVal{\Acc}{\Acc}{\DefValueVar}{\DefBMatrix}(\DefProp_\DefValueVar)}
        {\bigcup_{\DefValueVar \in \DefSetValueVar}\DiscriminatorVal{\Rej}{\NRej}{\DefValueVar}{\DefBMatrix}(\DefProp_\DefValueVar)}
        {\bigcup_{\DefValueVar \in \DefSetValueVar}\DiscriminatorVal{\Rej}{\Rej}{\DefValueVar}{\DefBMatrix}(\DefProp_\DefValueVar)}
        {\bigcup_{\DefValueVar \in \DefSetValueVar}\DiscriminatorVal{\Acc}{\NAcc}{\DefValueVar}{\DefBMatrix}(\DefProp_\DefValueVar)}.
    \] 
\end{namedproperties}
\end{definition}

\begin{reptheorem}{the:soundness}
If $\Discriminator{\DefBMatrix}$ is a discriminator for
a \upshape{$\theBPN[\Sigma]$}--matrix
$\DefBMatrix$, then
the calculus
$\BIndAxiomatName{\DefBMatrix}{\Discriminator{\DefBMatrix}}{}$
is sound with respect to
$\DefBMatrix$.
\end{reptheorem}
\begin{proof}
We can show by contradiction that no
$\DefBMatrix$-valuation
can be a countermodel for the schemas in each of the groups of schemas of $\BIndAxiomatName{\DefBMatrix}{\Discriminator{\DefBMatrix}}{}$. 
%
We detail the case of \ref{calc:exists}.
Consider a schema
    $\DefBseqVar \SymbDef \TwoDRuleSm
        {\DiscriminatorValCh{\Acc}{\NAcc}{\DefSetValueVarProp}{\DefBMatrix}}
        {\DiscriminatorValCh{\Rej}{\Rej}{\VSetComp{\DefSetValueVar}}{\DefBMatrix}}
        {\DiscriminatorValCh{\Rej}{\NRej}{\DefSetValueVarProp}{\DefBMatrix}}
        {\DiscriminatorValCh{\Acc}{\Acc}{\VSetComp{\DefSetValueVar}}{\DefBMatrix}}$,
        for some $\DefSetValueVarProp \subseteq \DefVSet$
        and some choice of
         $\Tuple{\DiscriminatorValCh{\Acc}{\Acc}{\VSetComp\DefSetValueVar}{\DefBMatrix},\DiscriminatorValCh{\Rej}{\Rej}{\VSetComp\DefSetValueVar}{\DefBMatrix}}$ and $\Tuple{\DiscriminatorValCh{\Acc}{\NAcc}{\DefSetValueVarProp}{\DefBMatrix},\DiscriminatorValCh{\Rej}{\NRej}{\DefSetValueVarProp}{\DefBMatrix}}$.
         Suppose that 
         $\DefBseqVar$
         does not hold in $\DefBMatrix$, 
         with the valuation $\DefVal$ witnessing this fact. We will prove that,
         given a propositional variable
         $\DefProp$, $\DefVal(\DefProp) \neq \DefValueVar$, for all $\DefValueVar \in \DefVSet$, an absurd.
         For that purpose, let $\DefValueVar \in \DefVSet$.
         In case $\DefValueVar \in \DefSetValueVarProp$,
         there must be a separator
         $\DefSeparator$ in
         $\DiscriminatorVal{\DefCogVar}{\InvCog{\DefCogVar}}{\DefValueVar}{\DefBMatrix}$, for some $\DefCogVar \in \Set{\Acc,\Rej}$, such that
         $\DefVal(\DefSeparator(\DefProp)) \in \BMatrixDistSet{\DefCogVar}{\DefBMatrix}$.
         By Lemma \ref{lem:caracvalues}, this implies 
         that $\DefVal(\DefProp) \neq \DefValueVar$.
         The reasoning is similar in case $\DefValueVar \in \VSetComp{\DefSetValueVar}$.
\end{proof}

In what follows, 
denote by~$\GenSubfmlas{}{\Discriminator{\DefBMatrix}}$
the mapping $\GenSubfmlas{}{\SeparatorsSet{\DefBMatrix}}$,
which indicates what formulas may appear in a 
$\SeparatorsSet{\DefBMatrix}$-analytic proof.
In order to prove completeness and
$\SeparatorsSet{\DefBMatrix}$-analyticity of 
$\BIndAxiomatName{\DefBMatrix}{\Discriminator{\DefBMatrix}}{}$
with respect to $\DefBMatrix$,
we shall make use of Lemma~\ref{lem:auxcompleteness} presented below,
which contains four items, each one referring
to a group of schemas of $\BIndAxiomatName{\DefBMatrix}{\Discriminator{\DefBMatrix}}{}$. Intuitively, 
given a \theB-statement $\DefBseqVar$ 
and assuming that
there is no $\SeparatorsSet{\DefBMatrix}$-analytic
proof of it in $\BIndAxiomatName{\DefBMatrix}{\Discriminator{\DefBMatrix}}{}$,
items 1 and 2 give us the resources
to define a mapping 
$f : \Subfmlas{\DefBseqVar} \to \DefVSet$
that, by items 3 and 4, can be extended to 
a countermodel for $\DefBseqVar$
in $\DefBMatrix$.

\begin{replemma}{lem:auxcompleteness}
For all \theB-statements $\DefBseqVar$ of the form $ \BStat{\IntermAccSet{\CutPropSet}}{\SetCompl{\IntermRejSet{\CutPropSet}}}{\IntermRejSet{\CutPropSet}}{\SetCompl{\IntermAccSet{\CutPropSet}}}${\upshape :}
\begin{enumerate}
    \item if $\nBAnaCalcCon{\IntermAccSet{\CutPropSet}}{\SetCompl{\IntermRejSet{\CutPropSet}}}{\IntermRejSet{\CutPropSet}}{\SetCompl{\IntermAccSet{\CutPropSet}}}{\BIndAxiomatRuleExistsName}{\GenSubfmlas{}{\Discriminator{\DefBMatrix}}}$,
    then for all $\DefFm \in \Subfmlas{\DefBseqVar}$
    there is an $\DefValueVar \in \DefVSet$ such that
    $\DiscriminatorVal{\DefCogVar}{\DefCogVar}{\DefValueVar}{\DefBMatrix}(\DefFm) \subseteq \CutPropSet_\DeffCogVar$
    and $\DiscriminatorVal{\DefCogVar}{\InvCog{\DefCogVar}}{\DefValueVar}{\DefBMatrix}(\DefFm) \subseteq \SetCompl{\CutPropSet_\DeffCogVar}$, for
    $\Tuple{\DefCogVar,\DeffCogVar} \in \Set{\Tuple{\Acc,\TheV}, \Tuple{\Rej,\TheNV}}$;
    \item if $\nBAnaCalcCon{\IntermAccSet{\CutPropSet}}{\SetCompl{\IntermRejSet{\CutPropSet}}}{\IntermRejSet{\CutPropSet}}{\SetCompl{\IntermAccSet{\CutPropSet}}}{\BIndAxiomatRuleDName}{\GenSubfmlas{}{\Discriminator{\DefBMatrix}}}$,
    then for every $\DefFm \in \Subfmlas{\DefBseqVar}$
    and $\DefValueVar \in \DefVSet$ such that
    $\DiscriminatorVal{\DefCogVar}{\DefCogVar}{\DefValueVar}{\DefBMatrix}(\DefFm) \subseteq \CutPropSet_\DeffCogVar$
    and $\DiscriminatorVal{\DefCogVar}{\InvCog{\DefCogVar}}{\DefValueVar}{\DefBMatrix}(\DefFm) \subseteq \SetCompl{\CutPropSet_\DeffCogVar}$, we have
    $\DefValueVar \in \BMatrixDistSet{\DefCogVar}{\DefBMatrix}$
    if{f} $\DefFm \in \CutPropSet_\DeffCogVar$, for
    $\Tuple{\DefCogVar,\DeffCogVar} \in \Set{\Tuple{\Acc,\TheV}, \Tuple{\Rej,\TheNV}}$;
    \item if $\nBAnaCalcCon{\IntermAccSet{\CutPropSet}}{\SetCompl{\IntermRejSet{\CutPropSet}}}{\IntermRejSet{\CutPropSet}}{\SetCompl{\IntermAccSet{\CutPropSet}}}{\BIndAxiomatRuleSigmaName}{\GenSubfmlas{}{\Discriminator{\DefBMatrix}}}$,
    then for every $\DefSymbol \in \SigAritySet{\Sigma}{k}$,
    $\DefFm \SymbDef \DefSymbol(\DefFm_1,\ldots,\DefFm_k) \in \Subfmlas{\DefBseqVar}$
    and $\DefValueVar_1,\ldots,\DefValueVar_k \in \DefVSet$
    with
    $\DiscriminatorVal{\DefCogVar}{\DefCogVar}{\DefValueVar_i}{\DefBMatrix}(\DefFm_i) \subseteq \CutPropSet_\DeffCogVar$
    and $\DiscriminatorVal{\DefCogVar}{\InvCog{\DefCogVar}}{\DefValueVar_i}{\DefBMatrix}(\DefFm_i) \subseteq \SetCompl{\CutPropSet_\DeffCogVar}$, for
    each $1 \leq i \leq k$ and $\Tuple{\DefCogVar,\DeffCogVar} \in \Set{\Tuple{\Acc,\TheV}, \Tuple{\Rej,\TheNV}}$, we have that
    $\DiscriminatorVal{\DefCogVar}{\DefCogVar}{\DeffValueVar}{\DefBMatrix}(\DefFm) \subseteq \CutPropSet_\DeffCogVar$
    and $\DiscriminatorVal{\DefCogVar}{\InvCog{\DefCogVar}}{\DeffValueVar}{\DefBMatrix}(\DefFm) \subseteq \SetCompl{\CutPropSet_\DeffCogVar}$ for each $\Tuple{\DefCogVar,\DeffCogVar} \in \Set{\Tuple{\Acc,\TheV}, \Tuple{\Rej,\TheNV}}$ implies
    $\DeffValueVar \in \BMatrixInterp{\DefSymbol}{\DefBMatrix}(\DefValueVar_1,\ldots,\DefValueVar_k)$;
    \item if $\nBAnaCalcCon{\IntermAccSet{\CutPropSet}}{\SetCompl{\IntermRejSet{\CutPropSet}}}{\IntermRejSet{\CutPropSet}}{\SetCompl{\IntermAccSet{\CutPropSet}}}{\BIndAxiomatRuleNonTotalName}{\GenSubfmlas{}{\Discriminator{\DefBMatrix}}}$,
    then
    $\Set{\DefValueVar \in \DefVSet \mid  \DiscriminatorVal{\DefCogVar}{\DefCogVar}{\DefValueVar}{\DefBMatrix}(\DefFm) \subseteq \CutPropSet_\DeffCogVar
    \text{ and } \DiscriminatorVal{\DefCogVar}{\InvCog{\DefCogVar}}{\DefValueVar}{\DefBMatrix}(\DefFm) \subseteq \SetCompl{\CutPropSet_\DeffCogVar},\text{ }\\
    \text{\qquad\qquad\qquad\qquad\;\;\,} \text{for each }\Tuple{\DefCogVar,\DeffCogVar} \!\in\! \Set{\Tuple{\Acc,\TheV}, \Tuple{\Rej,\TheNV}}
    \text{ and } \DefFm \!\in\! \Subfmlas{\DefBseqVar}}
    \!\in\!
    \TotalSubMVSet{\DefBMatrix}$.
\end{enumerate}
\end{replemma}
\begin{proof}
The strategy to prove each item is the same: by contraposition, use the data from the assumptions
to compose an instance of a rule schema of the corresponding group of rule schemas. 
%
We detail below the proof for the third item.
Suppose that there is a connective
$\DefSymbol \in \SigAritySet{\Sigma}{k}$,
a formula
    $\DefFm \SymbDef \DefSymbol(\DefFm_1,\ldots,\DefFm_k) \in \Subfmlas{\DefBseqVar}$, a sequence $(\DefValueVar_1,\ldots,\DefValueVar_k)$ of truth-values
    with
    $\DiscriminatorVal{\DefCogVar}{\DefCogVar}{\DefValueVar_i}{\DefBMatrix}(\DefFm_i) \subseteq \CutPropSet_\DeffCogVar$
    and $\DiscriminatorVal{\DefCogVar}{\InvCog{\DefCogVar}}{\DefValueVar_i}{\DefBMatrix}(\DefFm_i) \subseteq \SetCompl{\CutPropSet_\DeffCogVar}$ for
    each $1 \leq i \leq k$ and $\Tuple{\DefCogVar,\DeffCogVar} \in \Set{\Tuple{\Acc,\TheV}, \Tuple{\Rej,\TheNV}}$, and some $\DeffValueVar \not\in \BMatrixInterp{\DefSymbol}{\DefBMatrix}(\DefValueVar_1,\ldots,\DefValueVar_k)$
    such that
    $\DiscriminatorVal{\DefCogVar}{\DefCogVar}{\DeffValueVar}{\DefBMatrix}(\DefFm) \subseteq \CutPropSet_\DeffCogVar$
    and $\DiscriminatorVal{\DefCogVar}{\InvCog{\DefCogVar}}{\DeffValueVar}{\DefBMatrix}(\DefFm) \subseteq \SetCompl{\CutPropSet_\DeffCogVar}$ for each $\Tuple{\DefCogVar,\DeffCogVar} \in \Set{\Tuple{\Acc,\TheV}, \Tuple{\Rej,\TheNV}}$.
    Then
    $
        \bigcup_{1 \leq i \leq k}
        \DiscriminatorVal{\DefCogVar}{\DefCogVar}{\DefValueVar_i}{\DefBMatrix}(\DefFm_i) 
        \cup \DiscriminatorVal{\DefCogVar}{\DefCogVar}{\DeffValueVar}{\DefBMatrix}(\DefFm)
        \subseteq \CutPropSet_\DeffCogVar \cap 
        \GenSubfmlas{}{\Discriminator{\DefBMatrix}}(\DefBseqVar)
        \text{ and }
        \bigcup_{1 \leq i \leq k}
        \DiscriminatorVal{\DefCogVar}{\InvCog{\DefCogVar}}{\DefValueVar_i}{\DefBMatrix}(\DefFm_i) 
        \cup \DiscriminatorVal{\DefCogVar}{\InvCog{\DefCogVar}}{\DeffValueVar}{\DefBMatrix}(\DefFm)
        \subseteq \SetCompl{\CutPropSet_\DeffCogVar} \cap 
        \GenSubfmlas{}{\Discriminator{\DefBMatrix}}(\DefBseqVar)
    $
    for each $\Tuple{\DefCogVar,\DeffCogVar} \in \Set{\Tuple{\Acc,\TheV}, \Tuple{\Rej,\TheNV}}$, and thus
    we have
    $\BAnaCalcCon{\IntermAccSet{\CutPropSet}}{\SetCompl{\IntermRejSet{\CutPropSet}}}{\IntermRejSet{\CutPropSet}}{\SetCompl{\IntermAccSet{\CutPropSet}}}{\BIndAxiomatRuleSigmaName}{\GenSubfmlas{}{\Discriminator{\DefBMatrix}}}$.
\end{proof}

\begin{reptheorem}{the:completeness}
If $\Discriminator{\DefBMatrix}$ is a discriminator for
a \upshape{$\theBPN[\Sigma]$}--matrix
$\DefBMatrix$, then
the calculus
$\BIndAxiomatName{\DefBMatrix}{\Discriminator{\DefBMatrix}}{}$
is complete with respect to
$\DefBMatrix$. 
Furthermore, this calculus is $\SeparatorsSet{\DefBMatrix}$--analytic.
\end{reptheorem}
\begin{proof}
Let $\DefBseqVar \SymbDef \BStat{\DefAccSet}{\DefNRejSet}{\DefRejSet}{\DefNAccSet}$ 
be a \theB-statement
and
suppose that
(a) $\nBAnaCalcCon{\DefAccSet}{\DefNRejSet}{\DefRejSet}{\DefNAccSet}{\BIndAxiomatName{\DefBMatrix}{\Discriminator{\DefBMatrix}}{}}{\GenSubfmlas{}{\Discriminator{\DefBMatrix}}}$. Our goal is to build an
$\DefBMatrix$-valuation witnessing
$\nBEnt{\DefAccSet}{\DefNRejSet}{\DefRejSet}{\DefNAccSet}{\DefBMatrix}$.
From (a), by $\ref{prop:BConC}$,
we have that (b) there are
$\DefAccSet \subseteq \IntermAccSet{\CutPropSet} \subseteq \SetCompl{\DefNAccSet}$
and
$\DefRejSet \subseteq \IntermRejSet{\CutPropSet} \subseteq \SetCompl{\DefNRejSet}$ such that
$\nBAnaCalcCon{\IntermAccSet{\CutPropSet}}{\SetCompl{\IntermRejSet{\CutPropSet}}}{\IntermRejSet{\CutPropSet}}{\SetCompl{\IntermAccSet{\CutPropSet}}}{\BIndAxiomatName{\DefBMatrix}{\Discriminator{\DefBMatrix}}{}}{\GenSubfmlas{}{\Discriminator{\DefBMatrix}}}$. Consider then a mapping
$f : \Subfmlas{\DefBseqVar} \to \DefVSet$
with (c)
$f(\DefFm) \in \BMatrixDistSet{\DefCogVar}{\DefBMatrix}$
    if{f} $\DefFm \in \CutPropSet_\DeffCogVar$, for
    $\Tuple{\DefCogVar,\DeffCogVar} \in \Set{\Tuple{\Acc,\TheV}, \Tuple{\Rej,\TheNV}}$,
    whose existence is guaranteed by
    items (1) and (2) of
    Lemma \ref{lem:auxcompleteness}.
Notice that items (3) and (4)
of this same lemma imply, respectively, that $f(\DefSymbol(\DefFm_1, \ldots, \DefFm_k)) \in \BMatrixInterp{\DefSymbol}{\DefBMatrix}(
f(\DefFm_1), \ldots, f(\DefFm_k))$
for every $\DefSymbol(\DefFm_1, \ldots, \DefFm_k) \in \GenSubfmlas{}{\Discriminator{\DefBMatrix}}(\DefBseqVar)$,
and $f(\Subfmlas{\DefBseqVar}) \in \TotalSubMVSet{\DefBMatrix}$.
Hence, $f$ may be extended to 
an $\DefBMatrix$-valuation $\DefVal$
and, from (b) and (c), we have
$\DefVal(\CogSet{\DefFmlaSet}{\DefCogVar})
        \subseteq \BMatrixDistSet{\DefCogVar}{\DefBMatrix}$
        for each $\DefCogVar \in \CogsSet$,
so $\nBEnt{\DefAccSet}{\DefNRejSet}{\DefRejSet}{\DefNAccSet}{\DefBMatrix}$.
\end{proof}

The calculi presented so far (Examples \ref{ex:axiomatinfo}
and \ref{ex:otheraxiomats})
were produced by means of the axiomatization procedure just described, followed by some simplifications consisting of
removing instances of conditions 
\ref{prop:BConO} and
\ref{prop:BConD}, and using condition \ref{prop:BConC} 
on pairs of schemas having the forms $\TwoDRule{\DefAccSet,\DefFm}{\DefNRejSet}{\DefRejSet}{\DefNAccSet}$ and $\TwoDRule{\DefAccSet}{\DefNRejSet}{\DefRejSet}{\DefNAccSet,\DefFm}$, or the forms $\TwoDRule{\DefAccSet}{\DefNRejSet,\DefFm}{\DefRejSet}{\DefNAccSet}$ and $\TwoDRule{\DefAccSet}{\DefNRejSet}{\DefRejSet,\DefFm}{\DefNAccSet}$, yielding in either case 
the schema \mbox{$\TwoDRule{\DefAccSet}{\DefNRejSet}{\DefRejSet}{\DefNAccSet}$}. By
Theorem \ref{the:completeness} and the fact that these simplifications preserve analyticity, it follows
that such calculi are analytic.
It is also worth mentioning
that this same procedure may be applied
to the matrix $\MCIFiveVMatrix$
in view of its monadicity (see a discriminator for it in Example \ref{ex:disc}), which means
that we also obtain a \emph{finite} Hilbert-style
symmetrical 
axiomatization for $\textbf{mCi}$.

\section{Proof search in Two Dimensions}
\label{sec:proofsearch}

Throughout this section, 
let $\DefBseqVar \SymbDef \BStat{\DefAccSet}{\DefNRejSet}{\DefRejSet}{\DefNAccSet}$
be an arbitrary \theB-sequent,
$\DefCalculusName$ be a finite and finitary calculus,
and $\DeffFmlaSet$ be a finite set of formulas.
Notice that, whenever $\DefCalculusName$ is $\DeffFmlaSet$-analytic, it is enough to consider the rule instances in
 $\RuleInstSetCalcSeq{\DefCalculusName}{\DefBseqVar}$
in order to provide a proof of~$\DefBseqVar$ in $\DefCalculusName$.
Searching for such a proof is clearly
a particular case of finding a proof of~$\DefBseqVar$ using only candidates in a finite set $R$ of finitary rule instances. 
A~proof-search algorithm for this more general
setting 
is presented in
Algorithm~\ref{algo:proofsearch} by means of a function called $\ProofSAlgoName$. 
The algorithm searches for a proof by expanding nodes that
are not closed or discontinued using only instances in~$R$
that were not used yet in the branch of the node under expansion.
As we shall see in the sequel, the order in which
applicable instances are selected does not affect
the result, although for sure smarter choice heuristics may well improve the performance of the algorithm in particular cases.%

\begin{algorithm}
\small
\caption{Proof search over a finite set of finitary rule instances}
\label{algo:proofsearch}
\DontPrintSemicolon
\SetKwFunction{FExpand}{\textrm\ProofSAlgoName}
\SetKwFunction{FNode}{Node}
\SetKwFunction{FClosedNode}{OneNodeTree}
\SetKwFunction{FNode}{OneNodeTree}
\SetKwProg{Fn}{function}{:}{}
\Fn{\FExpand{$F\SymbDef\Tuple{\DeffFmlaSet_\Acc,\DeffFmlaSet_\Rej}$, $C\SymbDef\Tuple{\DefNRejSet, \DefNAccSet}$, $R$}}{
    \KwIn{antecedents in $F$, succedents in $C$ and a finite set $R$ of finitary rule instances}
        $\DefTreeVar \gets \SingleNodeTree(F)$ \label{algoline:maket}\\
        \llIf{$\DeffFmlaSet_\DefCogVar \cap \DefFmlaSet_{\InvCog\DefCogVar} \neq \EmptySet$ for some $\DefCogVar \in \Set{\Acc,\Rej}$}{\Return{$\DefTreeVar$}}\label{algoline:base}\\
        \ForEach{rule instance $\DefRuleInstVar \SymbDef \TwoDRule{\DeffffAccSet}{\DeffffNRejSet}{\DeffffRejSet}{\DeffffNAccSet} \in R$ 
        \label{algoline:looprules}}{
          \If{$\DeffffFmlaSet_{\InvCog\DefCogVar} \cap \DeffFmlaSet_{\DefCogVar} = \EmptySet$
            and $\DeffffFmlaSet_{\DefCogVar} \subseteq \DeffFmlaSet_{\DefCogVar}$
            for each $\DefCogVar \in \Set{\Acc,\Rej}$\label{algoline:condpremiss}}{
                \lIf{$\DeffffFmlaSet_\NAcc \cup \DeffffFmlaSet_\NRej = \EmptySet$\label{algoline:starcond}}{\Return{$\DefTreeVar$ with a single child $\SingleNodeTree(\StarLabel)$}}
                \ForEach{$\DefCogVar \in \Set{\Acc,\Rej}$ and $\DefFm \in \DeffffFmlaSet_{\InvCog{\DefCogVar}}$\label{algoline:loopfmlas}}{
                    $\DefTreeVar^\prime \gets\, $\FExpand{$\Tuple{\DeffFmlaSet_\Acc \cup P_\Acc(\DefFm),\DeffFmlaSet_\Rej \cup P_\Rej(\DefFm)}$, $C$, $\SetDiff{R}{\Set{\DefRuleInstVar}}$},
                    where
                    $P_\DefCogVar(\DefFm)$ is $\EmptySet$ if $\DefFm \not\in \DeffffFmlaSet_\DefCogVar$ and $\Set{\DefFm}$ otherwise \label{algoline:reccall}\\
                    \textbf{add} $\TreeRoot{\DefTreeVar^\prime}$ as a child of $\TreeRoot{\DefTreeVar}$ in $\DefTreeVar$\label{algoline:addchild}\\
                    \lIf{$\DefTreeVar^\prime$ is not $C$-closed}{\Return{$\DefTreeVar$}}\label{algoline:notcclosed}
                }
                \lIf{$\DefTreeVar$ is $C$-closed
                \label{algoline:closedcond}}{\Return{$\DefTreeVar$}}
            }
        }
        \Return{$\DefTreeVar$\label{algoline:opentree}}
    }
\end{algorithm}

The following lemma 
(verifiable by induction on the size of $R$) 
proves the termination of
$\ProofSAlgoName$ and its correctness.
The subsequent result establishes the
applicability of this algorithm for proof search
over $\DeffFmlaSet$-analytic
calculi.

\begin{replemma}{lem:correctness}
Let
$R$ be a finite set of finitary rule instances. 
Then
the procedure 
$\ProofSAlgoName(
\Tuple{\DefAccSet, \DefRejSet},
\Tuple{\DefNAccSet, \DefNRejSet}, 
R)$
always terminates, returning~a tree 
that is $\Tuple{\DefNAccSet, \DefNRejSet}$-closed
iff
$\BRuleInstCon{\DefAccSet}{\DefNRejSet}{\DefRejSet}{\DefNAccSet}{R}$.
\end{replemma}

\begin{replemma}{lem:anacalcproofs}
If $\DefCalculusName$ is $\DeffFmlaSet$-analytic, then
$\ProofSAlgoName$ is a proof-search algorithm for $\DefCalculusName$ and a decision procedure for
$\BCalcConName{\DefCalculusName}$.
\end{replemma}
\begin{proof}
We know that $\RuleInstSetCalcSeq{\DefCalculusName}{\DefBseqVar}$
provides enough material for a derivation
of~$\DefBseqVar$ to be produced, since $\DefCalculusName$ is $\DeffFmlaSet$-analytic.
Clearly, such set is finite and contains only
finitary rule instances, hence
the present result is a direct consequence of
Lemma~\ref{lem:correctness}. 
\end{proof}

The next results concern the complexity
 of Algorithm \ref{algo:proofsearch}. In what follows, let $R$ be a finite set of finitary rule instances, $b \SymbDef \max_{\DefRuleInstVar \in R} \BranchSymb(\DefRuleInstVar)$, 
$s \SymbDef \SeqSizeSymb(\Set{\DefBseqVar}\cup R)$
and $n \SymbDef \SetSize{R}$.
We shall use 
$\Polynomial{}(m)$ to refer to
``a polynomial in $m$''.

\begin{replemma}{lem:complexity}
The worst-case running time of $\ProofSAlgoName(
\Tuple{\DefAccSet, \DefRejSet},
\Tuple{\DefNAccSet, \DefNRejSet}, 
R)$ is $O(b^{n} + n \cdot \Polynomial{}(s))$.
\end{replemma}
\begin{proof}
Let $T(n, s)$ be the worst-case running-time of $\ProofSAlgoName$. Note that it occurs under three conditions:
 first, $\BRuleInstCon{\DefAccSet}{\DefNRejSet}{\DefRejSet}{\DefNAccSet}{R}$; second,
 the set $R$ needs to be entirely inspected until an applicable rule instance is found; and third, such an instance does not have an empty set of succedents.
Notice that $T(0,s) = c_1 + \Polynomial{}(s)$ and, based on the
assignments above and after some algebraic manipulations, we have, for $n \geq 1$,
    $T(n, s) \leq b \cdot T(n-1, s + \Polynomial{}(s)) + 2n \cdot \Polynomial{}(s)$.
It is then straightforward to check by induction on $n$
that $T(n,s) \in O(b^n + n \cdot \Polynomial{}(s))$.
\end{proof}

\begin{reptheorem}{the:analyticexp}
If $\DefCalculusName$ is $\DeffFmlaSet$-analytic, $\ProofSAlgoName$
is a proof-search algorithm
for $\DefCalculusName$
that runs in exponential time in general, and in 
polynomial time
if $\DefCalculusName$ contains only rules with at most one formula in the succedent.
\end{reptheorem}
\begin{proof}
Clearly, the set of all instances
of rules of $\DefCalculusName$
using only 
formulas in 
$\GenSubfmlas{}{\DeffFmlaSet}(\DefBseqVar)$ is finite and contains only
finitary rule instances, and its size is polynomial in $\SeqSizeSymb(\DefBseqVar)$. 
The announced result then follows
directly from Lemma \ref{lem:complexity}.
\end{proof}

The previous result makes the axiomatization procedure presented in
Section~\ref{sec:axiomatization}
even more attractive, since it
delivers a 
$\SeparatorsSet{\DefBMatrix}$--analytic calculus for $\DefBMatrix$,
where $\SeparatorsSet{\DefBMatrix}$ is
a finite set of formulas acting as separators. It follows then that
$\ProofSAlgoName$
is a proof-search 
algorithm
for such axiomatization running in at most
exponential time.
More than that,
$\ProofSAlgoName$ outputs a
tree with at least one open branch
when the \theB-sequent $\DefBseqVar$ of interest
is not provable. From such branch, one may obtain a partition
of $\GenSubfmlas{}{\Discriminator{\DefBMatrix}}(\DefBseqVar)$ and, by
Proposition \ref{lem:auxcompleteness},
define a mapping on $\Subfmlas{\DefBseqVar}$
that extends to an $\DefBMatrix$-valuation.
It follows that the discussed algorithm
may easily be adapted so as to deliver a countermodel when $\DefBseqVar$ is unprovable. For experimenting with the
axiomatization procedure and searching for proofs over the generated
calculus, one can 
make use of the implementation that may be found at
\url{https://github.com/greati/logicantsy}.
We should also emphasize that, by Theorem~\ref{the:analyticexp} and the axiomatization procedure given in Section~\ref{sec:axiomatization}, we have:
\begin{corollary}
Any finite monadic
$\theBPN[\Sigma]$--matrix $\DefBMatrix$
whose induced axiomatization
contains only rules with at most
one succedent
is decidable in polynomial time.
\end{corollary}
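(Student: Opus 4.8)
The plan is to read the corollary directly off the results of Sections~\ref{sec:axiomatization} and~\ref{sec:proofsearch}, applied to the calculus $\DefCalculusName \SymbDef \BIndAxiomatName{\DefBMatrix}{\Discriminator{\DefBMatrix}}{}$ of Definition~\ref{def:axiomatization}. First I would fix a \emph{finite} discriminator for $\DefBMatrix$: since $\VSet{\DefBMatrix}$ is finite, there are only finitely many pairs of distinct truth-values, and, as $\DefBMatrix$ is monadic, each such pair admits a monadic separator; picking one separator per pair produces a finite set $\SeparatorsSet{\DefBMatrix}$ of single-variable formulas on which a discriminator $\Discriminator{\DefBMatrix}$ may be based. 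With this choice, every component set of every rule schema of $\BIndAxiomatName{\DefBMatrix}{\Discriminator{\DefBMatrix}}{}$ is a finite union of instances of separators from $\SeparatorsSet{\DefBMatrix}$, hence finite, so the calculus is finitary; and since $\VSet{\DefBMatrix}$ and $\Sigma$ are finite (as part of $\DefBMatrix$ being finite), each of the four groups $\BIndAxiomatRuleExistsName$, $\BIndAxiomatRuleDName$, $\BIndAxiomatRuleSigmaName$, $\BIndAxiomatRuleNonTotalName$ contributes only finitely many schemas, so $\DefCalculusName$ is also finite.

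Next I would invoke Theorems~\ref{the:soundness} and~\ref{the:completeness}: $\DefCalculusName$ axiomatizes $\DefBMatrix$, that is, $\BCalcConName{\DefCalculusName}$ coincides with the \theB-entailment relation $\BEnt{\PlaceholderArg}{\PlaceholderArg}{\PlaceholderArg}{\PlaceholderArg}{\DefBMatrix}$ induced by $\DefBMatrix$, and moreover $\DefCalculusName$ is $\SeparatorsSet{\DefBMatrix}$--analytic. Taking $\DeffFmlaSet \SymbDef \SeparatorsSet{\DefBMatrix}$ and using the hypothesis that every rule of this induced axiomatization has at most one formula in its succedent, Theorem~\ref{the:analyticexp} yields that $\ProofSAlgoName$ is a proof-search algorithm for $\DefCalculusName$ running in time polynomial in $\SeqSizeSymb(\DefBseqVar)$; by Lemma~\ref{lem:anacalcproofs} it is moreover a decision procedure for $\BCalcConName{\DefCalculusName}$, i.e.\ for $\BEnt{\PlaceholderArg}{\PlaceholderArg}{\PlaceholderArg}{\PlaceholderArg}{\DefBMatrix}$. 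Hence $\DefBMatrix$ is decidable in polynomial time, as required.

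I do not expect any genuine obstacle: the statement is essentially a bookkeeping composition of already-established facts, the polynomial-time clause being supplied wholesale by Theorem~\ref{the:analyticexp}. The only two points that deserve a line of care are (i) the terminological match between ``rules with at most one succedent'' in the corollary and ``rules with at most one formula in the succedent'' in Theorem~\ref{the:analyticexp}, i.e.\ $\BranchSymb(\DefRuleInstVar)\le 1$ for every relevant instance, which forces $b\le 1$ and collapses the exponential term $b^{n}$ in the bound $O(b^{n}+n\cdot\Polynomial{}(s))$ of Lemma~\ref{lem:complexity}; and (ii) that the resulting bound is polynomial \emph{in the input} $\SeqSizeSymb(\DefBseqVar)$ and not merely in a quantity that also grows with the calculus --- which holds because $\DefBMatrix$, and hence $\DefCalculusName$, the set of relevant finitary rule instances, and the parameters $b$ and $n$, are all fixed once and for all.
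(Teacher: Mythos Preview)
Your proposal is correct and mirrors the paper's own justification, which simply reads the corollary off Theorem~\ref{the:analyticexp} together with the axiomatization procedure (Theorems~\ref{the:soundness} and~\ref{the:completeness}) of Section~\ref{sec:axiomatization}. One small inaccuracy in your closing remark~(ii): the parameter $n = \SetSize{\RuleInstSetCalcSeq{\DefCalculusName}{\DefBseqVar}}$ is not ``fixed once and for all'' but grows polynomially with $\SeqSizeSymb(\DefBseqVar)$ (as noted in the proof of Theorem~\ref{the:analyticexp}); this is harmless since with $b\le 1$ the bound $O(b^{n}+n\cdot\Polynomial{}(s))$ remains polynomial in the input regardless.
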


\noindent By the above result, then, the \theB-entailment relation $\BEnt{\PlaceholderArg}{\PlaceholderArg}{\PlaceholderArg}{\PlaceholderArg}{\InfoFourMatrix}$ (from Example~{\ref{ex:infomatrix}}) is decidable in polynomial time.
Consequently, the same also holds for its $\tAsp$-aspect, which is inhabited by the 4-valued logic introduced in {\upshape\cite{avron2007}}.

In addition, it is worth stressing that, although no better in the limiting cases, the axiomatization provided in Section~\ref{sec:axiomatization} together with the algorithm presented in this section translate the problem of deciding a \theB-entailment relation into a purely symbolic procedure that may perform better than searching for $\DefBMatrix$-valuations in some cases.

We close with another complexity result concerning the decidability
of $\BCalcConName{\DefCalculusName}$, 
complementing the one 
given by the discussed algorithm; it follows
by an argument similar to the one
presented for the one-dimensional case
in \cite{marcelinocaleiro2016}.

\begin{reptheorem}{the:coNPdec}
If $\DefCalculusName$ is $\DeffFmlaSet$-analytic, then the problem of deciding $\BCalcConName{\DefCalculusName}$ is in $\coNP$.
\end{reptheorem}

\section{Conclusion}
\label{sec:conclusion}

In this paper, we approached bilateralism by exploring a two-dimensional notion of consequence, considering the cognitive attitudes of acceptance and rejection instead of the conventional speech acts of assertion and of denial.
Our intervention has been two-fold: 
on the semantical front we have employed two-dimensional (partial) non-deterministic logical matrices, 
and on proof-theoretical grounds we have employed two-dimensional symmetrical proof formalisms which generalize traditional Hilbert-style calculi and their associated unilinear notion of derivation.
As a result, and generalizing~\cite{marcelinowollic}, we have provided
an axiomatization procedure that delivers analytic calculi for a very expressive class of finite monadic matrices. 
On what concerns proof development, in spite of well-known evidence about the p-equivalence between Hilbert-style calculi and Gentzen-style calculi (\cite{CookReckhow}), die-hard popular belief concerning their `deep inequivalence' seems hard to wash away.  To counter that belief with facts, we developed for our calculi a general proof-search algorithm that was secured to run in exponential time.%

We highlight that our two-dimensional proof-formalism
differs in important respects from the many-placed sequent calculi
used in~\cite{avron2005igpl} to axiomatize (one-dimensional total) non-deterministic matrices (requiring no sufficient expressiveness) and in~\cite{ole2014} for
approaching multilateralism.
First, a many-placed sequent calculus is not Hilbert-style: 
rules manipulate complex objects
whose structure involve contexts and considerably deviate from the shape
of the consequence relation being captured; our calculi, on the other hand, 
are contained in
their corresponding \theB-consequences.
Second, when axiomatizing a matrix, 
the structure of many-placed sequents grows according to
the number of values ($n$ places for $n$ truth-values); our rule schemas, in turn, remain with four places, 
and reflect
the complexity of the underlying semantics in the complexity of the formulas being manipulated. 
Moreover, the study
of many-placed sequents currently contemplates only one-dimensional
consequence relations; extending them to the two-dimensional case is a line of research worth exploring.%

As further future work, we envisage generalizing the two-dimensional notion of consequence relation by allowing logics over different languages (\cite{Humberstone-HL}) 
---for instance, conflating different logics or different fragments of some given logic of interest--- 
to coinhabit the same logical structure, each one along its own dimension, while controlling their interaction at the object-language level, taking advantage of the framework and the results in~\cite{marcelinocaleiro2016}. 
This opens the doors for a line of investigation on whether or to what extent the individual characteristics of these ingredient logics, such as their decidability status, may be preserved. 
With respect to our proof-search algorithm, an important research path to be explored would involve the design of 
heuristics for smarter choices of rule instances used to expand nodes during the search, as this
may improve the performance of the algorithm on certain classes of logics. 
At last, we also expect to extend 
the present research so as 
to cover multidimensional notions of consequence, in order to provide increasingly general technical and philosophical grounds for the study of logical pluralism.

\paragraph{Acknowledgements} V. Greati and J. Marcos acknowledge support from
CAPES (Brasil) 
--- Finance Code 001 and
CNPq (Brasil), respectively.
S. Marcelino's research was done under the scope of Project UIDB/50008/2020
of Instituto de Telecomunica\c{c}\~oes (IT), financed by the applicable framework
(FCT/MEC through national funds and cofunded by FEDER-PT2020).
%
%
%
\bibliographystyle{plain}
\bibliography{bibliography.bib}
\newpage
\appendix
\section{Proofs of theoretical results}
\label{sec:appendix}


\begin{lemma}
\label{lem:caracvalues}
If $\DefBMatrix$ is a
monadic {\upshape$\theBPN[\Sigma]$}--matrix
and $\Discriminator{\DefBMatrix}$
is a discriminator for $\DefBMatrix$,
then, for all $\DefFm \in \DefLangSet$, 
$\DefValueVar \in \DefVSet$
and $\DefBMatrix$-valuation
$\DefVal$,
\begin{equation*}
    \DefVal(\DefFm) = \DefValueVar
    \text{\ \ if{f}\ \ }
    \DefVal(\DiscriminatorVal{\DefCogVar}{\DefCogVar}{\DefValueVar}{\DefBMatrix}(\DefFm))
    \subseteq \BMatrixDistSet{\DefCogVar}{\DefBMatrix}
    \text{ and }
    \DefVal(\DiscriminatorVal{\DefCogVar}{\InvCog{\DefCogVar}}{\DefValueVar}{\DefBMatrix}(\DefFm))
    \subseteq \BMatrixDistSet{{\InvCog{\DefCogVar}}}{\DefBMatrix}
    \text{ for every }
    \DefCogVar \in \Set{\Acc,\Rej}.
\end{equation*}
\end{lemma}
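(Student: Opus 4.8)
The plan is to fix a $\DefBMatrix$-valuation $\DefVal$, a formula $\DefFm \in \DefLangSet$, and a value $\DefValueVar \in \DefVSet$, and to prove the two implications separately, following the pattern of the proof of Lemma~1 in \cite{marcelinowollic} but carrying along the extra bookkeeping required by the two cognitive dimensions $\Acc$ and $\Rej$ (and recalling that $\BMatrixDistSet{\InvCog{\DefCogVar}}{\DefBMatrix} = \SetDiff{\DefVSet}{\BMatrixDistSet{\DefCogVar}{\DefBMatrix}}$ for $\DefCogVar \in \Set{\Acc,\Rej}$). The one auxiliary fact I would record at the outset is that, for any single-variable formula $\DefSeparator$, one has $\DefVal(\DefSeparator(\DefFm)) \in \BMatrixInterpFm{\DefSeparator}{\DefBMatrix}(\DefVal(\DefFm))$: composing $\DefVal$ with the substitution sending $\DefProp$ to $\DefFm$ (and fixing the other variables) yields again a $\DefBMatrix$-valuation, one that maps $\DefProp$ to $\DefVal(\DefFm)$ and $\DefSeparator$ to $\DefVal(\DefSeparator(\DefFm))$, so the claim follows directly from the definition of $\BMatrixInterpFm{\DefSeparator}{\DefBMatrix}$. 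With this in hand everything reduces to tracking where the images $\BMatrixInterpFm{\DefSeparator}{\DefBMatrix}(\DefValueVar)$ of the separators occurring in the discriminator components of $\DefValueVar$ sit with respect to $\BMatrixDistSet{\Acc}{\DefBMatrix}$ and $\BMatrixDistSet{\Rej}{\DefBMatrix}$.

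For the left-to-right implication I would assume $\DefVal(\DefFm) = \DefValueVar$ and fix $\DefCogVar \in \Set{\Acc,\Rej}$. If $\DefSeparator \in \DiscriminatorVal{\DefCogVar}{\DefCogVar}{\DefValueVar}{\DefBMatrix}$, then $\BMatrixInterpFm{\DefSeparator}{\DefBMatrix}(\DefValueVar) \subseteq \BMatrixDistSet{\DefCogVar}{\DefBMatrix}$ by the defining clause of a discriminator, hence by the auxiliary fact $\DefVal(\DefSeparator(\DefFm)) \in \BMatrixInterpFm{\DefSeparator}{\DefBMatrix}(\DefValueVar) \subseteq \BMatrixDistSet{\DefCogVar}{\DefBMatrix}$; as $\DefSeparator$ was arbitrary, $\DefVal(\DiscriminatorVal{\DefCogVar}{\DefCogVar}{\DefValueVar}{\DefBMatrix}(\DefFm)) \subseteq \BMatrixDistSet{\DefCogVar}{\DefBMatrix}$. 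If instead $\DefSeparator \in \DiscriminatorVal{\DefCogVar}{\InvCog{\DefCogVar}}{\DefValueVar}{\DefBMatrix}$, the image $\BMatrixInterpFm{\DefSeparator}{\DefBMatrix}(\DefValueVar)$ lies in $\SetDiff{\DefVSet}{\BMatrixDistSet{\DefCogVar}{\DefBMatrix}} = \BMatrixDistSet{\InvCog{\DefCogVar}}{\DefBMatrix}$, and the same argument gives $\DefVal(\DiscriminatorVal{\DefCogVar}{\InvCog{\DefCogVar}}{\DefValueVar}{\DefBMatrix}(\DefFm)) \subseteq \BMatrixDistSet{\InvCog{\DefCogVar}}{\DefBMatrix}$. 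Since $\DefCogVar$ was arbitrary, the right-hand side of the equivalence holds.

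For the converse I would argue by contraposition: suppose $\DeffValueVar \SymbDef \DefVal(\DefFm) \neq \DefValueVar$, and produce a $\DefCogVar'$ for which one of the four inclusions on the right fails. Since $\DefBMatrix$ is monadic and $\DiscriminatorVal{}{}{\DefValueVar}{\DefBMatrix} = \bigcup_{\DefCogVar} \DiscriminatorVal{\DefCogVar}{\DefCogVar}{\DefValueVar}{\DefBMatrix}$ isolates $\DefValueVar$, I would pick a monadic separator $\DefSeparator$ in this set for $\DefValueVar$ and $\DeffValueVar$, witnessed in some dimension $\DefCogVar'$, so that $\SeparatedValues{\BMatrixInterpFm{\DefSeparator}{\DefBMatrix}(\DefValueVar)}{\BMatrixInterpFm{\DefSeparator}{\DefBMatrix}(\DeffValueVar)}{\DefCogVar'}$; note $\BMatrixInterpFm{\DefSeparator}{\DefBMatrix}(\DeffValueVar) \neq \EmptySet$ since it contains $\DefVal(\DefSeparator(\DefFm))$. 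Unfolding $\DefCogVar'$-separation, either $\BMatrixInterpFm{\DefSeparator}{\DefBMatrix}(\DefValueVar) \subseteq \BMatrixDistSet{\DefCogVar'}{\DefBMatrix}$ and $\BMatrixInterpFm{\DefSeparator}{\DefBMatrix}(\DeffValueVar) \subseteq \SetDiff{\DefVSet}{\BMatrixDistSet{\DefCogVar'}{\DefBMatrix}}$, or the two inclusions hold with $\DefValueVar$ and $\DeffValueVar$ interchanged. In the first case $\DefSeparator$ belongs to $\DiscriminatorVal{\DefCogVar'}{\DefCogVar'}{\DefValueVar}{\DefBMatrix}$ while $\DefVal(\DefSeparator(\DefFm)) \in \BMatrixInterpFm{\DefSeparator}{\DefBMatrix}(\DeffValueVar) \subseteq \SetDiff{\DefVSet}{\BMatrixDistSet{\DefCogVar'}{\DefBMatrix}}$, so $\DefVal(\DiscriminatorVal{\DefCogVar'}{\DefCogVar'}{\DefValueVar}{\DefBMatrix}(\DefFm)) \not\subseteq \BMatrixDistSet{\DefCogVar'}{\DefBMatrix}$; in the second case $\DefSeparator$ belongs to $\DiscriminatorVal{\DefCogVar'}{\InvCog{\DefCogVar'}}{\DefValueVar}{\DefBMatrix}$ while $\DefVal(\DefSeparator(\DefFm)) \in \BMatrixDistSet{\DefCogVar'}{\DefBMatrix} = \SetDiff{\DefVSet}{\BMatrixDistSet{\InvCog{\DefCogVar'}}{\DefBMatrix}}$, so $\DefVal(\DiscriminatorVal{\DefCogVar'}{\InvCog{\DefCogVar'}}{\DefValueVar}{\DefBMatrix}(\DefFm)) \not\subseteq \BMatrixDistSet{\InvCog{\DefCogVar'}}{\DefBMatrix}$. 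Either way the right-hand side fails, which closes the contrapositive.

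The step I expect to be the main obstacle is the one used twice in the last paragraph: passing from ``$\BMatrixInterpFm{\DefSeparator}{\DefBMatrix}(\DefValueVar) \subseteq \BMatrixDistSet{\DefCogVar'}{\DefBMatrix}$'' (resp. its complement) to ``$\DefSeparator \in \DiscriminatorVal{\DefCogVar'}{\DefCogVar'}{\DefValueVar}{\DefBMatrix}$'' (resp. $\DefSeparator \in \DiscriminatorVal{\DefCogVar'}{\InvCog{\DefCogVar'}}{\DefValueVar}{\DefBMatrix}$). This is automatic in the one-dimensional setting of \cite{marcelinowollic}, where there is a single dimension and hence no gap between the dimension under which a separator is filed in the discriminator of $\DefValueVar$ and the dimension $\DefCogVar'$ in which it happens to separate $\DefValueVar$ from $\DeffValueVar$; in the present setting it must be extracted from the definition of a discriminator, which catalogues each separator of $\SeparatorsSet{\DefBMatrix}$ into the components of $\DefValueVar$ precisely according to where the image of $\DefValueVar$ lies relative to each of $\BMatrixDistSet{\Acc}{\DefBMatrix}$ and $\BMatrixDistSet{\Rej}{\DefBMatrix}$. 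Once that is pinned down, the remaining manipulations are routine.
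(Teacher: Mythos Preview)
Your proposal is correct and follows essentially the same approach as the paper: the left-to-right direction is identical, and for the converse both argue contrapositively by picking a separator $\DefSeparator$ from the isolating set of $\DefValueVar$ and producing a component of the right-hand side that fails.

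The one substantive difference is the order of the case split in the converse. You case on the \emph{direction of separation} (which side of $\BMatrixDistSet{\DefCogVar'}{\DefBMatrix}$ the image $\BMatrixInterpFm{\DefSeparator}{\DefBMatrix}(\DefValueVar)$ falls on) and then want to infer which discriminator component $\DefSeparator$ belongs to; this is precisely the ``obstacle'' you flag, and you are right that it is not literally given by the definition of a discriminator, which only states the forward implication. The paper instead cases on the \emph{component} in which $\DefSeparator$ is filed: from $\DefSeparator \in \DiscriminatorVal{\DefCogVar}{\DefCogVar}{\DefValueVar}{\DefBMatrix}$ (respectively $\DiscriminatorVal{\DefCogVar}{\InvCog{\DefCogVar}}{\DefValueVar}{\DefBMatrix}$) the forward discriminator condition gives $\BMatrixInterpFm{\DefSeparator}{\DefBMatrix}(\DefValueVar) \subseteq \BMatrixDistSet{\DefCogVar}{\DefBMatrix}$ (respectively $\BMatrixDistSet{\InvCog{\DefCogVar}}{\DefBMatrix}$), and then $\DefCogVar$-separation forces $\BMatrixInterpFm{\DefSeparator}{\DefBMatrix}(\DeffValueVar)$ to the opposite side, yielding the failure directly. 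Reordering your argument this way dissolves the obstacle you identified without changing anything else.
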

\begin{proof}
From the left to the right,
assume that $\DefVal(\DefFm) = \DefValueVar$
and let $\DefCogVar \in \Set{\Acc,\Rej}$.
If $\DefSeparator \in \DiscriminatorVal{\DefCogVar}{\DefCogVar}{\DefValueVar}{\DefBMatrix}$,
then 
$
\DefVal(\DefSeparator(\DefFm))
\in \BMatrixInterpFm{\DefSeparator}{\DefBMatrix}(\DefVal(\DefFm)) 
= \BMatrixInterpFm{\DefSeparator}{\DefBMatrix}(\DefValueVar)$,
and we know that $\BMatrixInterpFm{\DefSeparator}{\DefBMatrix}(\DefValueVar) \subseteq \BMatrixDistSet{\DefCogVar}{\DefBMatrix}$ if
$\DefSeparator \in \DiscriminatorVal{\DefCogVar}{\DefCogVar}{\DefValueVar}{\DefBMatrix}$.
The same reasoning applies
for $\DefSeparator \in \DiscriminatorVal{\InvCog\DefCogVar}{\InvCog\DefCogVar}{\DefValueVar}{\DefBMatrix}$.
Conversely, we may argue contrapositively:
suppose that
$\DefVal(\DefFm) = \DeffValueVar \neq \DefValueVar$
and consider the $\DefCogVar$-separator
$\DefSeparator$
for $\DefValueVar$
and $\DeffValueVar$, for some
$\DefCogVar \in \Set{\Acc,\Rej}$.
By cases,
if $\DefSeparator \in \DiscriminatorVal{\DefCogVar}{\DefCogVar}{\DefValueVar}{\DefBMatrix}$,
then 
$\DefVal(\DefSeparator(\DefFm)) \in \BMatrixInterpFm{\DefSeparator}{\DefBMatrix}(\DeffValueVar) \subseteq \BMatrixDistSet{\InvCog{\DefCogVar}}{\DefBMatrix}$
and so 
$\DefVal(\DiscriminatorVal{\DefCogVar}{\DefCogVar}{\DefValueVar}{\DefBMatrix}(\DefFm))
    \not\subseteq \BMatrixDistSet{\DefCogVar}{\DefBMatrix}$; analogously, we have
 $\DefVal(\DiscriminatorVal{\DefCogVar}{\InvCog{\DefCogVar}}{\DefValueVar}{\DefBMatrix}(\DefFm))
    \not\subseteq \BMatrixDistSet{{\InvCog{\DefCogVar}}}{\DefBMatrix}$ 
    if
$\DefSeparator \in \DiscriminatorVal{\DefCogVar}{\InvCog{\DefCogVar}}{\DefValueVar}{\DefBMatrix}$.
\end{proof}
\begin{theorem}
\label{the:soundness}
If $\Discriminator{\DefBMatrix}$ is a discriminator for
a $\theBPN[\Sigma]$--matrix
$\DefBMatrix$, then
the calculus
$\BIndAxiomatName{\DefBMatrix}{\Discriminator{\DefBMatrix}}{}$
is sound with respect to
$\DefBMatrix$.
\end{theorem}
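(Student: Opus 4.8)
The plan is to reduce soundness to a purely local verification, one rule group at a time. By Proposition~\ref{propo:bentailmentbconseq}, $\BEnt{\PlaceholderArg}{\PlaceholderArg}{\PlaceholderArg}{\PlaceholderArg}{\DefBMatrix}$ is a \theB-consequence relation, and by Proposition~\ref{prop:calculusbconseq} the relation $\BCalcConName{\BIndAxiomatName{\DefBMatrix}{\Discriminator{\DefBMatrix}}{}}$ is the \emph{smallest} \theB-consequence relation containing the rule schemas of $\BIndAxiomatName{\DefBMatrix}{\Discriminator{\DefBMatrix}}{}$. Hence it suffices to check that each such schema, read as a \theB-statement $\BStat{\DefAccSet}{\DefNRejSet}{\DefRejSet}{\DefNAccSet}$, holds in $\DefBMatrix$: then $\BEnt{\PlaceholderArg}{\PlaceholderArg}{\PlaceholderArg}{\PlaceholderArg}{\DefBMatrix}$ contains all the schemas, so it contains $\BCalcConName{\BIndAxiomatName{\DefBMatrix}{\Discriminator{\DefBMatrix}}{}}$, which is exactly the desired inclusion. (One could also argue by induction on the shape of a $\BIndAxiomatName{\DefBMatrix}{\Discriminator{\DefBMatrix}}{}$-proof, but the route via Proposition~\ref{prop:calculusbconseq} is shorter.) I will argue by contradiction: assume some $\DefBMatrix$-valuation $\DefVal$ is a countermodel for a schema in one of the four groups --- i.e.\ $\DefVal$ sends the formulas in its accepted, non-rejected, rejected and non-accepted slots into $\DesSet{\DefBMatrix}$, $\NAntiDesSet{\DefBMatrix}$, $\AntiDesSet{\DefBMatrix}$ and $\NDesSet{\DefBMatrix}$ respectively --- and derive an absurdity.

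The common tool is Lemma~\ref{lem:caracvalues}, which says that for any formula $\DefFm$ and value $\DefValueVar$ one has $\DefVal(\DefFm)=\DefValueVar$ \emph{precisely} when the discriminator separators for $\DefValueVar$, applied to $\DefFm$, are sent by $\DefVal$ into the matching distinguished sets; the schemas are built so that the slot assignments of a putative countermodel dovetail with one direction or the other of this biconditional. For $\BIndAxiomatRuleSigmaName$, the schema associated with $\DefSymbol$, $X=(\DefValueVar_1,\dots,\DefValueVar_k)$ and $\DeffValueVar\notin\BMatrixInterp{\DefSymbol}{\DefBMatrix}X$ places, in their correct slots, the separators for $\DefValueVar_i$ on $\DefProp_i$ and those for $\DeffValueVar$ on $\DefSymbol(\DefProp_1,\dots,\DefProp_k)$; a countermodel therefore forces, by Lemma~\ref{lem:caracvalues}, $\DefVal(\DefProp_i)=\DefValueVar_i$ for all $i$ and $\DefVal(\DefSymbol(\DefProp_1,\dots,\DefProp_k))=\DeffValueVar$, whence $\DeffValueVar\in\BMatrixInterp{\DefSymbol}{\DefBMatrix}(\DefVal(\DefProp_1),\dots,\DefVal(\DefProp_k))=\BMatrixInterp{\DefSymbol}{\DefBMatrix}X$, contradicting the side condition. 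For $\BIndAxiomatRuleNonTotalName$, the schema for $\DefSetValueVar\notin\TotalSubMVSet{\DefBMatrix}$ forces $\DefVal(\DefProp_\DefValueVar)=\DefValueVar$ for every $\DefValueVar\in\DefSetValueVar$; since the image of any $\DefBMatrix$-valuation lies in $\TotalSubMVSet{\DefBMatrix}$ (a direct consequence of the definitions, cf.\ the effectiveness property of \cite{baaz2013} recalled above), this gives $\DefSetValueVar\in\TotalSubMVSet{\DefBMatrix}$, absurd. For $\BIndAxiomatRuleDName$, the schema for $\DefValueVar$ again forces $\DefVal(\DefProp)=\DefValueVar$, but the auxiliary map $\FmlaPlacerByValue{\DefCogVar}$ has placed $\DefProp$ itself into the accepted (resp.\ rejected) slot exactly when $\DefValueVar\in\BMatrixDistSet{\DefCogVar}{\DefBMatrix}$; thus the countermodel would give both $\DefVal(\DefProp)=\DefValueVar\in\BMatrixDistSet{\DefCogVar}{\DefBMatrix}$ and $\DefVal(\DefProp)\notin\BMatrixDistSet{\DefCogVar}{\DefBMatrix}$ for some $\DefCogVar\in\Set{\Acc,\Rej}$, which is impossible.

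The one case where the countermodel is \emph{not} forced to a particular value is $\BIndAxiomatRuleExistsName$, and I expect its bookkeeping to be the main (if mild) obstacle. Fix the schema determined by $\DefSetValueVarProp\subseteq\DefVSet$ together with its two separator choices, with $\VSetComp{\DefSetValueVar}=\DefVSet\setminus\DefSetValueVarProp$, and suppose the valuation $\DefVal$ witnesses its failure; I would show $\DefVal(\DefProp)\neq\DefValueVar$ for \emph{every} $\DefValueVar\in\DefVSet$, which is absurd since $\DefVal(\DefProp)\in\DefVSet$. If $\DefValueVar\in\DefSetValueVarProp$, the separator picked for $\DefValueVar$ lies in $\DiscriminatorVal{\Acc}{\NAcc}{\DefValueVar}{\DefBMatrix}$ or $\DiscriminatorVal{\Rej}{\NRej}{\DefValueVar}{\DefBMatrix}$ and occurs, accordingly, in the accepted or the rejected slot of the schema; the countermodel sends its $\DefVal$-value into $\DesSet{\DefBMatrix}$ (resp.\ $\AntiDesSet{\DefBMatrix}$), whereas the image of that separator at $\DefValueVar$ is contained in $\NDesSet{\DefBMatrix}$ (resp.\ $\NAntiDesSet{\DefBMatrix}$), so Lemma~\ref{lem:caracvalues} yields $\DefVal(\DefProp)\neq\DefValueVar$. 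Dually, if $\DefValueVar\in\VSetComp{\DefSetValueVar}$, the separator picked for $\DefValueVar$ lies in $\DiscriminatorVal{\Acc}{\Acc}{\DefValueVar}{\DefBMatrix}$ or $\DiscriminatorVal{\Rej}{\Rej}{\DefValueVar}{\DefBMatrix}$ and occurs in the non-accepted or the non-rejected slot, so the countermodel sends its $\DefVal$-value into $\NDesSet{\DefBMatrix}$ (resp.\ $\NAntiDesSet{\DefBMatrix}$) although it should land in $\DesSet{\DefBMatrix}$ (resp.\ $\AntiDesSet{\DefBMatrix}$) were $\DefVal(\DefProp)=\DefValueVar$; again $\DefVal(\DefProp)\neq\DefValueVar$. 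That a separator is available for each $\DefValueVar$ is exactly the non-vacuity built into the statement of $\BIndAxiomatRuleExistsName$, namely the existence of the choices $\Tuple{\DiscriminatorValCh{\Acc}{\Acc}{\VSetComp{\DefSetValueVar}}{\DefBMatrix},\DiscriminatorValCh{\Rej}{\Rej}{\VSetComp{\DefSetValueVar}}{\DefBMatrix}}$ and $\Tuple{\DiscriminatorValCh{\Acc}{\NAcc}{\DefSetValueVarProp}{\DefBMatrix},\DiscriminatorValCh{\Rej}{\NRej}{\DefSetValueVarProp}{\DefBMatrix}}$. Beyond this slot-matching, everything in the four cases is a direct appeal to Lemma~\ref{lem:caracvalues} and to the defining compositionality of $\DefBMatrix$-valuations.
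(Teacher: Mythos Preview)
Your proposal is correct and follows essentially the same approach as the paper: argue by contradiction, one rule group at a time, using Lemma~\ref{lem:caracvalues} to convert the slot-assignments of a putative countermodel into value constraints that clash with the side conditions defining each schema. Your explicit reduction via Propositions~\ref{propo:bentailmentbconseq} and~\ref{prop:calculusbconseq} is a nice addition that the paper leaves implicit; the only minor slip is in the $\BIndAxiomatRuleDName$ case, where $\FmlaPlacerByValue{\Acc}(\DefValueVar)$ actually sits in the \emph{non-accepted} slot (and $\FmlaPlacerByValue{\NAcc}(\DefValueVar)$ in the accepted slot), so the contradiction reads ``$\DefVal(\DefProp)=\DefValueVar\in\BMatrixDistSet{\DefCogVar}{\DefBMatrix}$ yet $\DefVal(\DefProp)\in\BMatrixDistSet{\InvCog{\DefCogVar}}{\DefBMatrix}$'' rather than as you phrased it---but the argument goes through all the same.
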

\begin{proof}
We will show that any
$\theBPN[\Sigma]$--valuation
that constituted a countermodel for
a schema of $\BIndAxiomatName{\DefBMatrix}{\Discriminator{\DefBMatrix}}{}$
would lead to a contradiction.
The argument will cover each of the groups of schemas of the concerned calculus.

\begin{namedproperties}
    
    \item[\ref{calc:exists}] Consider a schema
    $\DefBseqVar \SymbDef \TwoDRule
        {\DiscriminatorValCh{\Acc}{\NAcc}{\DefSetValueVarProp}{\DefBMatrix}}
        {\DiscriminatorValCh{\Rej}{\Rej}{\VSetComp{\DefSetValueVar}}{\DefBMatrix}}
        {\DiscriminatorValCh{\Rej}{\NRej}{\DefSetValueVarProp}{\DefBMatrix}}
        {\DiscriminatorValCh{\Acc}{\Acc}{\VSetComp{\DefSetValueVar}}{\DefBMatrix}}$,
        for some $\DefSetValueVarProp \subseteq \DefVSet$
        and some choice of
         $\Tuple{\DiscriminatorValCh{\Acc}{\Acc}{\VSetComp\DefSetValueVar}{\DefBMatrix},\DiscriminatorValCh{\Rej}{\Rej}{\VSetComp\DefSetValueVar}{\DefBMatrix}}$ and $\Tuple{\DiscriminatorValCh{\Acc}{\NAcc}{\DefSetValueVarProp}{\DefBMatrix},\DiscriminatorValCh{\Rej}{\NRej}{\DefSetValueVarProp}{\DefBMatrix}}$.
         Suppose that 
         $\DefBseqVar$
         does not hold in $\DefBMatrix$, 
         with the valuation $\DefVal$ witnessing this fact. We will prove that,
         given a propositional variable
         $\DefProp$, $\DefVal(\DefProp) \neq \DefValueVar$, for all $\DefValueVar \in \DefVSet$, an absurd.
         For that purpose, let $\DefValueVar \in \DefVSet$.
         In case $\DefValueVar \in \DefSetValueVarProp$,
         there must be a separator
         $\DefSeparator$ in
         $\DiscriminatorVal{\DefCogVar}{\InvCog{\DefCogVar}}{\DefValueVar}{\DefBMatrix}$, for some $\DefCogVar \in \Set{\Acc,\Rej}$, such that
         $\DefVal(\DefSeparator(\DefProp)) \in \BMatrixDistSet{\DefCogVar}{\DefBMatrix}$.
         By Lemma \ref{lem:caracvalues}, this implies 
         that $\DefVal(\DefProp) \neq \DefValueVar$.
         The reasoning is similar in case $\DefValueVar \in \VSetComp{\DefSetValueVar}$.
    \item[\ref{calc:d}]
    Let $\DefProp \in \DefProps$
    and $\DefValueVar \in \DefVSet$
    be such that $\DefValueVar \in \BMatrixDistSet{\DefCogVar}{\DefBMatrix}$, 
    with $\DefCogVar \in \Set{\Acc,\NAcc}$.
    Suppose that there is an
         $\DefBMatrix$-valuation
         $\DefVal$ under which the
         schema
        $\TwoDRule
        {\DiscriminatorVal{\Acc}{\Acc}{\DefValueVar}{\DefBMatrix}(\DefProp), \FmlaPlacerByValue{\NAcc}(\DefValueVar)}
        {\DiscriminatorVal{\Rej}{\NRej}{\DefValueVar}{\DefBMatrix}(\DefProp)}
        {\DiscriminatorVal{\Rej}{\Rej}{\DefValueVar}{\DefBMatrix}(\DefProp)}
        {\DiscriminatorVal{\Acc}{\NAcc}{\DefValueVar}{\DefBMatrix}(\DefProp), \FmlaPlacerByValue{\Acc}(\DefValueVar)}$
        does not hold. 
        Then, $\DefVal(\FmlaPlacerByValue{\DefCogVar}(\DefValueVar)) \subseteq \BMatrixDistSet{\InvCog{\DefCogVar}}{\DefBMatrix}$ and thus, since
        $\DefValueVar \in \BMatrixDistSet{\DefCogVar}{\DefBMatrix}$, it follows that $\DefVal(\DefProp) \in \BMatrixDistSet{\InvCog{\DefCogVar}}{\DefBMatrix}$. 
        On the other hand,
        since
        $\DefVal(\DiscriminatorVal{\DefCogVar}{\DefCogVar}{\DefValueVar}{\DefBMatrix}(\DefProp)) \subseteq \BMatrixDistSet{\DefCogVar}{\DefBMatrix}$ and
        $\DefVal(\DiscriminatorVal{\DefCogVar}{\InvCog{\DefCogVar}}{\DefValueVar}{\DefBMatrix}(\DefProp)) \subseteq \BMatrixDistSet{\InvCog{\DefCogVar}}{\DefBMatrix}$
        for each $\DefCogVar \in \Set{\Acc,\Rej}$,
        by Lemma \ref{lem:caracvalues}
        we have that $\DefVal(\DefProp) = \DefValueVar$, a contradiction.
        The proof for the other schema
        is analogous.
    \item[\ref{calc:sigma}]\sloppy 
    Let $\DefSymbol \in \SigAritySet{\Sigma}{k}$,
    $X = \Family{\DefValueVar_i}{i=1}^k$ be a family of
    truth-values of $\DefBMatrix$, $\DeffValueVar \not\in \BMatrixInterp{\DefSymbol}{\DefBMatrix}(\DefValueVar_1,\ldots,\DefValueVar_k)$
    and
    let $\Family{\DefProp_i}{i=1}^k$
    be a family of distinct propositional variables.
    Suppose the schema
    $  \TwoDRule
        {\SigmaRulesSet{\Acc}{\DefSymbol, X,\DeffValueVar}}
        {\SigmaRulesSet{\NRej}{\DefSymbol, X,\DeffValueVar}}
        {\SigmaRulesSet{\Rej}{\DefSymbol, X,\DeffValueVar}}
        {\SigmaRulesSet{\NAcc}{\DefSymbol, X,\DeffValueVar}}$ does not hold under
    $\DefVal$. By Lemma \ref{lem:caracvalues}, we have that
    $\DefVal(\DefProp_i)=\DefValueVar_i$,
    for all $1 \leq i \leq k$,
    and
    $\DefVal(\DefSymbol(\DefProp_1,\ldots,\DefProp_k)) = \DeffValueVar$.
    It follows that $\DeffValueVar = \DefVal(\DefSymbol(\DefProp_1,\ldots,\DefProp_k)) \in \BMatrixInterp{\DefSymbol}{\DefBMatrix}(\DefVal(\DefProp_1),\ldots,\DefVal(\DefProp_k)) = \BMatrixInterp{\DefSymbol}{\DefBMatrix}(\DefValueVar_1,\ldots,\DefValueVar_k)$, contradicting one of the 
    assumptions.
    \item[\ref{calc:nontotal}]
    Assume that $\DefSetValueVar \not\in \TotalSubMVSet{\DefBMatrix}$
    and let
    $\Family{\DefProp_\DefValueVar}{\DefValueVar \in \DefSetValueVar}$
    be a family of distinct propositional variables. If 
    the schema $        \TwoDRule
        {\bigcup_{\DefValueVar \in \DefSetValueVar}\DiscriminatorVal{\Acc}{\Acc}{\DefValueVar}{\DefBMatrix}(\DefProp_\DefValueVar)}
        {\bigcup_{\DefValueVar \in \DefSetValueVar}\DiscriminatorVal{\Rej}{\NRej}{\DefValueVar}{\DefBMatrix}(\DefProp_\DefValueVar)}
        {\bigcup_{\DefValueVar \in \DefSetValueVar}\DiscriminatorVal{\Rej}{\Rej}{\DefValueVar}{\DefBMatrix}(\DefProp_\DefValueVar)}
        {\bigcup_{\DefValueVar \in \DefSetValueVar}\DiscriminatorVal{\Acc}{\NAcc}{\DefValueVar}{\DefBMatrix}(\DefProp_\DefValueVar)}$
        does not hold under
        an $\DefBMatrix$-valuation
        $\DefVal$,
        then Lemma \ref{lem:caracvalues}
        guarantees that
        $\DefVal(\DefProp_\DefValueVar) = \DefValueVar$
        for each $\DefValueVar \in \DefSetValueVar$.
        Hence, since $\DefProp_\DefValueVar \in \DefProps$ for each $\DefValueVar \in \DefSetValueVar$, we have $\DefSetValueVar \subseteq \DefVal(\DefLangSet) \in \TotalSubMVSet{\DefBMatrix}$,
        contradicting the assumption.
\end{namedproperties}
\end{proof}

\begin{lemma}
\label{lem:auxcompleteness}
Let $\Discriminator{\DefBMatrix}$ be a discriminator for
a $\theBPN[\Sigma]$--matrix
$\DefBMatrix$
and denote by
$\GenSubfmlas{}{\Discriminator{\DefBMatrix}}$
the mapping $\GenSubfmlas{}{\SeparatorsSet{\DefBMatrix}}$. For all \theB-statements $\DefBseqVar$ of the form $ \BStat{\IntermAccSet{\CutPropSet}}{\SetCompl{\IntermRejSet{\CutPropSet}}}{\IntermRejSet{\CutPropSet}}{\SetCompl{\IntermAccSet{\CutPropSet}}}${\upshape :}
\vspace{-2mm}
\begin{enumerate}
    \item if $\nBAnaCalcCon{\IntermAccSet{\CutPropSet}}{\SetCompl{\IntermRejSet{\CutPropSet}}}{\IntermRejSet{\CutPropSet}}{\SetCompl{\IntermAccSet{\CutPropSet}}}{\BIndAxiomatRuleExistsName}{\GenSubfmlas{}{\Discriminator{\DefBMatrix}}}$,
    then for all $\DefFm \in \Subfmlas{\DefBseqVar}$
    there is an $\DefValueVar \in \DefVSet$ such that
    $\DiscriminatorVal{\DefCogVar}{\DefCogVar}{\DefValueVar}{\DefBMatrix}(\DefFm) \subseteq \CutPropSet_\DeffCogVar$
    and $\DiscriminatorVal{\DefCogVar}{\InvCog{\DefCogVar}}{\DefValueVar}{\DefBMatrix}(\DefFm) \subseteq \SetCompl{\CutPropSet_\DeffCogVar}$, for
    $\Tuple{\DefCogVar,\DeffCogVar} \in \Set{\Tuple{\Acc,\TheV}, \Tuple{\Rej,\TheNV}}$;
    \item if $\nBAnaCalcCon{\IntermAccSet{\CutPropSet}}{\SetCompl{\IntermRejSet{\CutPropSet}}}{\IntermRejSet{\CutPropSet}}{\SetCompl{\IntermAccSet{\CutPropSet}}}{\BIndAxiomatRuleDName}{\GenSubfmlas{}{\Discriminator{\DefBMatrix}}}$,
    then for every $\DefFm \in \Subfmlas{\DefBseqVar}$
    and $\DefValueVar \in \DefVSet$ such that
    $\DiscriminatorVal{\DefCogVar}{\DefCogVar}{\DefValueVar}{\DefBMatrix}(\DefFm) \subseteq \CutPropSet_\DeffCogVar$
    and $\DiscriminatorVal{\DefCogVar}{\InvCog{\DefCogVar}}{\DefValueVar}{\DefBMatrix}(\DefFm) \subseteq \SetCompl{\CutPropSet_\DeffCogVar}$, we have
    $\DefValueVar \in \BMatrixDistSet{\DefCogVar}{\DefBMatrix}$
    if{f} $\DefFm \in \CutPropSet_\DeffCogVar$, for
    $\Tuple{\DefCogVar,\DeffCogVar} \in \Set{\Tuple{\Acc,\TheV}, \Tuple{\Rej,\TheNV}}$;
    \item if $\nBAnaCalcCon{\IntermAccSet{\CutPropSet}}{\SetCompl{\IntermRejSet{\CutPropSet}}}{\IntermRejSet{\CutPropSet}}{\SetCompl{\IntermAccSet{\CutPropSet}}}{\BIndAxiomatRuleSigmaName}{\GenSubfmlas{}{\Discriminator{\DefBMatrix}}}$,
    then for every $\DefSymbol \in \SigAritySet{\Sigma}{k}$,
    $\DefFm \SymbDef \DefSymbol(\DefFm_1,\ldots,\DefFm_k) \in \Subfmlas{\DefBseqVar}$
    and $\DefValueVar_1,\ldots,\DefValueVar_k \in \DefVSet$
    with
    $\DiscriminatorVal{\DefCogVar}{\DefCogVar}{\DefValueVar_i}{\DefBMatrix}(\DefFm_i) \subseteq \CutPropSet_\DeffCogVar$
    and $\DiscriminatorVal{\DefCogVar}{\InvCog{\DefCogVar}}{\DefValueVar_i}{\DefBMatrix}(\DefFm_i) \subseteq \SetCompl{\CutPropSet_\DeffCogVar}$, for
    each $1 \leq i \leq k$ and $\Tuple{\DefCogVar,\DeffCogVar} \in \Set{\Tuple{\Acc,\TheV}, \Tuple{\Rej,\TheNV}}$, we have that
    $\DiscriminatorVal{\DefCogVar}{\DefCogVar}{\DeffValueVar}{\DefBMatrix}(\DefFm) \subseteq \CutPropSet_\DeffCogVar$
    and $\DiscriminatorVal{\DefCogVar}{\InvCog{\DefCogVar}}{\DeffValueVar}{\DefBMatrix}(\DefFm) \subseteq \SetCompl{\CutPropSet_\DeffCogVar}$ for each $\Tuple{\DefCogVar,\DeffCogVar} \in \Set{\Tuple{\Acc,\TheV}, \Tuple{\Rej,\TheNV}}$ implies
    $\DeffValueVar \in \BMatrixInterp{\DefSymbol}{\DefBMatrix}(\DefValueVar_1,\ldots,\DefValueVar_k)$;
    \item if $\nBAnaCalcCon{\IntermAccSet{\CutPropSet}}{\SetCompl{\IntermRejSet{\CutPropSet}}}{\IntermRejSet{\CutPropSet}}{\SetCompl{\IntermAccSet{\CutPropSet}}}{\BIndAxiomatRuleNonTotalName}{\GenSubfmlas{}{\Discriminator{\DefBMatrix}}}$,
    then
    $\Set{\DefValueVar \in \DefVSet \mid  \DiscriminatorVal{\DefCogVar}{\DefCogVar}{\DefValueVar}{\DefBMatrix}(\DefFm) \subseteq \CutPropSet_\DeffCogVar
    \text{ and } \DiscriminatorVal{\DefCogVar}{\InvCog{\DefCogVar}}{\DefValueVar}{\DefBMatrix}(\DefFm) \subseteq \SetCompl{\CutPropSet_\DeffCogVar},\text{for each }\\
    \text{\qquad\qquad\qquad\qquad\;} \Tuple{\DefCogVar,\DeffCogVar} \in \Set{\Tuple{\Acc,\TheV}, \Tuple{\Rej,\TheNV}}
    \text{ and } \DefFm \in \Subfmlas{\DefBseqVar}}
    \in
    \TotalSubMVSet{\DefBMatrix}$.
\end{enumerate}
\end{lemma}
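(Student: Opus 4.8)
The plan is to prove the four items independently, each by the contrapositive template already illustrated in the main-text sketch: negating the conclusion of an item produces exactly the data needed to assemble a $\SeparatorsSet{\DefBMatrix}$-analytic proof of $\DefBseqVar$ using only rules from the corresponding group of schemas, contradicting the associated hypothesis that no such proof exists. In every case the proof to be exhibited has its root labelled $\Tuple{\IntermAccSet{\CutPropSet},\IntermRejSet{\CutPropSet}}$, so, for the rule instance one produces, three things are to be checked: \emph{applicability} --- every acceptance-antecedent formula lies in $\IntermAccSet{\CutPropSet}$ and every rejection-antecedent formula in $\IntermRejSet{\CutPropSet}$; \emph{closure of every generated child} --- since a child inherits the expanded node's pair and only adjoins a single succedent formula, and $\IntermAccSet{\CutPropSet}$ is disjoint from $\SetCompl{\IntermAccSet{\CutPropSet}}$ (and likewise on the rejection side), this reduces to each $\NAcc$-succedent formula belonging to $\SetCompl{\IntermAccSet{\CutPropSet}}$ and each $\NRej$-succedent formula to $\SetCompl{\IntermRejSet{\CutPropSet}}$ (and, when the succedent is empty, the single $\StarLabel$-child closes that branch automatically); and \emph{analyticity} --- immediate, since every formula occurring has the form $\DefSeparator(\DefFm)$ for a separator $\DefSeparator \in \SeparatorsSet{\DefBMatrix}$ and a subformula $\DefFm$ of $\DefBseqVar$, hence belongs to $\GenSubfmlas{}{\Discriminator{\DefBMatrix}}(\DefBseqVar)$.

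Item~(3) is the one worked out in the main text: given the offending $\DefFm \SymbDef \DefSymbol(\DefFm_1,\ldots,\DefFm_k) \in \Subfmlas{\DefBseqVar}$, values $\DefValueVar_1,\ldots,\DefValueVar_k$ and $\DeffValueVar \notin \BMatrixInterp{\DefSymbol}{\DefBMatrix}(\DefValueVar_1,\ldots,\DefValueVar_k)$, one forms the $\BIndAxiomatRuleSigmaName$-instance attached to this triple under the substitution $\DefProp_i \mapsto \DefFm_i$, reading applicability and closure off the hypotheses about $\DefFm_1,\ldots,\DefFm_k$ and $\DefFm$. Item~(4) is entirely similar with $\BIndAxiomatRuleNonTotalName$: given the offending $\DefSetValueVar \notin \TotalSubMVSet{\DefBMatrix}$, each of whose members is (under the contrapositive hypothesis) realized on $\Subfmlas{\DefBseqVar}$, one substitutes each $\DefProp_\DefValueVar$ by a subformula of $\DefBseqVar$ witnessing $\DefValueVar$ and again reads off applicability and closure. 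For items~(1) and~(2), built from the $\BIndAxiomatRuleExistsName$- and $\BIndAxiomatRuleDName$-schemas respectively, the instance is not handed over but has to be assembled from the witnesses that the failed universal conditions supply: in~(1), for the bad subformula $\DefFm$ each value $\DefValueVar$ yields a separator escaping its ``expected'' region, from which one designs the partition $\DefVSet = \DefSetValueVarProp \cup \VSetComp{\DefSetValueVar}$ and the discriminator choices so that the resulting $\BIndAxiomatRuleExistsName$-instance under $\DefProp \mapsto \DefFm$ is applicable with every child closed (iterating the construction on children should a single application not suffice); in~(2), when $\DefFm$ and $\DefValueVar$ satisfy all the separator inclusions but $\DefValueVar \in \BMatrixDistSet{\DefCogVar}{\DefBMatrix}$ fails to match $\DefFm \in \CutPropSet_{\DeffCogVar}$, the corresponding $\BIndAxiomatRuleDName$-instance for $\DefValueVar$ under $\DefProp \mapsto \DefFm$ becomes applicable and its lone succedent formula lands in the appropriate complement --- here the placeholders $\FmlaPlacerByValue{\DefCogVar}$, which contribute $\DefProp$ exactly when $\DefValueVar \in \BMatrixDistSet{\DefCogVar}{\DefBMatrix}$, are precisely what force closure.

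The workhorse throughout is Lemma~\ref{lem:caracvalues}: putting the diagonal separators of $\DefValueVar$ into $\BMatrixDistSet{\Acc}{\DefBMatrix}$ (resp.\ $\BMatrixDistSet{\Rej}{\DefBMatrix}$) and the opposite ones into the complements pins down the value $\DefValueVar$, so an analytic instance of the above shape cannot sit at the root of a closed proof unless the asserted combinatorial fact holds. I expect the genuine difficulty to lie in items~(1) and~(2): ensuring that the witness extracted from a \emph{failed universal} statement has exactly the polarity needed so that the chosen $\DefSetValueVarProp$, the discriminator choices and the $\FmlaPlacerByValue{\DefCogVar}$ values together produce an applicable instance with all branches $\Tuple{\SetCompl{\IntermAccSet{\CutPropSet}},\SetCompl{\IntermRejSet{\CutPropSet}}}$-closed, handling correctly the degenerate cases in which some component of the discriminator is empty. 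Everything else is routine bookkeeping, in particular matching the four positional slots of a rule schema --- which appear in the order $\Acc,\NRej,\Rej,\NAcc$, unlike the \theB-statement order --- against the four roles of the partition.
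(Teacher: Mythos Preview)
Your contrapositive strategy matches the paper's proof exactly: for each item, negate the conclusion and exhibit a single rule instance from the relevant group that is applicable at the root $\Tuple{\IntermAccSet{\CutPropSet},\IntermRejSet{\CutPropSet}}$ and whose succedent formulas all land in the appropriate complements. Your treatment of items~(3) and~(4) is essentially identical to the paper's.

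Two points deserve correction. First, Lemma~\ref{lem:caracvalues} plays \emph{no role} here: it is a semantic fact about $\DefBMatrix$-valuations, whereas the present lemma is purely syntactic --- one checks directly that the sets $\DiscriminatorVal{\DefCogVar}{\DefCogVar}{\DefValueVar}{\DefBMatrix}(\DefFm)$ and $\DiscriminatorVal{\DefCogVar}{\InvCog{\DefCogVar}}{\DefValueVar}{\DefBMatrix}(\DefFm)$ lie on the right side of the $\CutPropSet_\TheV,\CutPropSet_\TheNV$ partition, without ever mentioning valuations or truth-values being ``pinned down''. Second, your hedge in item~(1) about ``iterating the construction on children should a single application not suffice'' is unnecessary and signals that you have not identified the right partition. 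The paper's proof shows that one application suffices once you set
\[
\DefSetValueVarProp \SymbDef \Set{\DefValueVar \in \DefVSet \mid \DiscriminatorVal{\Acc}{\NAcc}{\DefValueVar}{\DefBMatrix}(\DefFm) \cap \CutPropSet_\TheV \neq \EmptySet \text{ or } \DiscriminatorVal{\Rej}{\NRej}{\DefValueVar}{\DefBMatrix}(\DefFm) \cap \CutPropSet_\TheNV \neq \EmptySet},
\]
choosing for each $\DefValueVar \in \DefSetValueVarProp$ a separator witnessing the nonempty intersection, and, for each $\DefValueVar \in \VSetComp{\DefSetValueVar}$, using the contrapositive hypothesis (which says \emph{some} inclusion fails for~$\DefValueVar$) together with $\DefValueVar \notin \DefSetValueVarProp$ (which rules out failure of the $\InvCog{\DefCogVar}$-inclusions) to extract a separator in $\DiscriminatorVal{\DefCogVar}{\DefCogVar}{\DefValueVar}{\DefBMatrix}$ whose image under $\DefFm$ lies in $\SetCompl{\CutPropSet_\DeffCogVar}$. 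This choice makes both the pairs $\Tuple{\DiscriminatorValCh{\Acc}{\NAcc}{\DefSetValueVarProp}{\DefBMatrix},\DiscriminatorValCh{\Rej}{\NRej}{\DefSetValueVarProp}{\DefBMatrix}}$ and $\Tuple{\DiscriminatorValCh{\Acc}{\Acc}{\VSetComp{\DefSetValueVar}}{\DefBMatrix},\DiscriminatorValCh{\Rej}{\Rej}{\VSetComp{\DefSetValueVar}}{\DefBMatrix}}$ exist and places every antecedent formula in $\CutPropSet$ and every succedent formula in $\SetCompl{\CutPropSet}$, so no iteration is needed.
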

\begin{proof}
We prove the contrapositive version of each
item below.
\begin{enumerate}
    \item Assume that for some $\DefFm \in \Subfmlas{\DefBseqVar}$
    there is no $\DefValueVar \in \DefVSet$ such that
    $\DiscriminatorVal{\DefCogVar}{\DefCogVar}{\DefValueVar}{\DefBMatrix}(\DefFm) \subseteq \CutPropSet_\DeffCogVar
    \text{ and } \DiscriminatorVal{\DefCogVar}{\InvCog{\DefCogVar}}{\DefValueVar}{\DefBMatrix}(\DefFm) \subseteq \SetCompl{\CutPropSet_\DeffCogVar}$,
    for each $\Tuple{\DefCogVar,\DeffCogVar} \in \Set{\Tuple{\Acc,\TheV}, \Tuple{\Rej,\TheNV}}$.
    Consider then the set
    \[
        \DefSetValueVarProp \SymbDef \Set{\DefValueVar \in \DefVSet \mid \DiscriminatorVal{\Acc}{\NAcc}{\DefValueVar}{\DefBMatrix}(\DefFm) \cap \CutPropSet_\TheV \neq \EmptySet
        \text{ or }
        \DiscriminatorVal{\Rej}{\NRej}{\DefValueVar}{\DefBMatrix}(\DefFm) \cap \CutPropSet_\TheNV \neq \EmptySet
        }
    \]
    and $\VSetComp{\DefSetValueVar} \SymbDef \SetDiff{\DefVSet}{\DefSetValueVarProp}$.
    Define, for each $\Tuple{\DefCogVar,\DeffCogVar} \in \Set{\Tuple{\Acc,\TheV}, \Tuple{\Rej,\TheNV}}$,
     the set $\DiscriminatorValCh{\DefCogVar}{\InvCog{\DefCogVar}}{\VSetComp{\DefSetValueVar}}{\DefBMatrix}$
    by choosing for each $\DefValueVar \in \DefSetValueVarProp$ a formula
    $\DefSeparator$ such that
    $\DefSeparator(\DefFm) \in \DiscriminatorVal{\DefCogVar}{\InvCog{\DefCogVar}}{\DefValueVar}{\DefBMatrix}(\DefFm) \cap \CutPropSet_\DeffCogVar$, when present.
    Similarly,
    define the set $\DiscriminatorValCh{\DefCogVar}{\DefCogVar}{\VSetComp{\DefSetValueVar}}{\DefBMatrix}$
    by choosing for each $\DefValueVar \in \VSetComp{\DefSetValueVar}$ a formula
    $\DefSeparator$ such that $\DefSeparator(\DefFm) \in \DiscriminatorVal{\DefCogVar}{\DefCogVar}{\DefValueVar}{\DefBMatrix}(\DefFm) \cap \SetCompl{\CutPropSet_\DeffCogVar}$, when present.
    Notice that the construction of $\DefSetValueVarProp$
    guarantees the existence of the pairs
    $\Tuple{\DiscriminatorValCh{\Acc}{\Acc}{\VSetComp{\DefSetValueVar}}{\DefBMatrix},\DiscriminatorValCh{\Rej}{\Rej}{\VSetComp{\DefSetValueVar}}{\DefBMatrix}}$ and $\Tuple{\DiscriminatorValCh{\Acc}{\NAcc}{\DefSetValueVarProp}{\DefBMatrix},\DiscriminatorValCh{\Rej}{\NRej}{\DefSetValueVarProp}{\DefBMatrix}}$.
    Since $\DiscriminatorValCh{\DefCogVar}{\DefCogVar}{\VSetComp{\DefSetValueVar}}{\DefBMatrix}(\DefFm) \subseteq \SetCompl{\CutPropSet_\DeffCogVar} \cap \GenSubfmlas{}{\Discriminator{\DefBMatrix}}(\DefBseqVar)$
    and
    $\DiscriminatorValCh{\DefCogVar}{\InvCog{\DefCogVar}}{\DefSetValueVarProp}{\DefBMatrix}(\DefFm) \subseteq \CutPropSet_\DeffCogVar \cap \GenSubfmlas{}{\Discriminator{\DefBMatrix}}(\DefBseqVar)$ for each $\Tuple{\DefCogVar,\DeffCogVar} \in \Set{\Tuple{\Acc,\TheV}, \Tuple{\Rej,\TheNV}}$, we have
    $\BAnaCalcCon{\IntermAccSet{\CutPropSet}}{\SetCompl{\IntermRejSet{\CutPropSet}}}{\IntermRejSet{\CutPropSet}}{\SetCompl{\IntermAccSet{\CutPropSet}}}{\BIndAxiomatRuleExistsName}{\GenSubfmlas{}{\Discriminator{\DefBMatrix}}}$.
    \item Suppose that for some
    $\DefFm \in \Subfmlas{\DefBseqVar}$
    and $\DefValueVar \in \DefVSet$ such that
    $\DiscriminatorVal{\DefCogVar}{\DefCogVar}{\DefValueVar}{\DefBMatrix}(\DefFm) \subseteq \CutPropSet_\DeffCogVar$
    and $\DiscriminatorVal{\DefCogVar}{\InvCog{\DefCogVar}}{\DefValueVar}{\DefBMatrix}(\DefFm) \subseteq \SetCompl{\CutPropSet_\DeffCogVar}$
    for each
    $\Tuple{\DefCogVar,\DeffCogVar} \in \Set{\Tuple{\Acc,\TheV}, \Tuple{\Rej,\TheNV}}$, we have either (a)
    $\DefValueVar \in \BMatrixDistSet{\DefffCogVar}{\DefBMatrix}$
    and $\DefFm \not\in \CutPropSet_\DeffffCogVar$ 
    (i.e. $\DefFm \in \SetCompl{\CutPropSet_\DeffffCogVar}$)
    or (b) $\DefValueVar \not\in \BMatrixDistSet{\DefffCogVar}{\DefBMatrix}$
    and $\DefFm \in \CutPropSet_\DeffffCogVar$
    , 
    for some
    $\Tuple{\DefffCogVar,\DeffffCogVar} \in \Set{\Tuple{\Acc,\TheV}, \Tuple{\Rej,\TheNV}}$.
    Notice that, for any
    $\Tuple{\DefCogVar,\DeffCogVar} \in \Set{\Tuple{\Acc,\TheV}, \Tuple{\Rej,\TheNV}}$,
    we have
        $\DiscriminatorVal{\DefCogVar}{\DefCogVar}{\DefValueVar}{\DefBMatrix}(\DefFm)  \subseteq \CutPropSet_\DeffCogVar \cap \GenSubfmlas{}{\Discriminator{\DefBMatrix}}(\DefBseqVar)$
    and
    $\DiscriminatorVal{\DefCogVar}{\InvCog{\DefCogVar}}{\DefValueVar}{\DefBMatrix}(\DefFm) \subseteq \SetCompl{\CutPropSet_\DeffCogVar} \cap \GenSubfmlas{}{\Discriminator{\DefBMatrix}}(\DefBseqVar)$,  
    implying that, in any of the
    cases (a) or (b),
    we have 
    $(\DiscriminatorVal{\DefffCogVar}{\DefffCogVar}{\DefValueVar}{\DefBMatrix}(\DefProp) \cup \FmlaPlacerByValue{\DefffCogVar}(\DefValueVar))(\DefFm) \subseteq \CutPropSet_\DeffffCogVar \cap \GenSubfmlas{}{\Discriminator{\DefBMatrix}}(\DefBseqVar)$
    and
    $(\DiscriminatorVal{\DefffCogVar}{\InvCog{\DefffCogVar}}{\DefValueVar}{\DefBMatrix}(\DefProp) \cup \FmlaPlacerByValue{\InvCog{\DefffCogVar}}(\DefValueVar))(\DefFm) \subseteq \SetCompl{\CutPropSet_\DeffffCogVar} \cap \GenSubfmlas{}{\Discriminator{\DefBMatrix}}(\DefBseqVar)$.
    Thus we have
    $\BAnaCalcCon{\IntermAccSet{\CutPropSet}}{\SetCompl{\IntermRejSet{\CutPropSet}}}{\IntermRejSet{\CutPropSet}}{\SetCompl{\IntermAccSet{\CutPropSet}}}{\BIndAxiomatRuleDName}{\GenSubfmlas{}{\Discriminator{\DefBMatrix}}}$ from an instance
    of one of the schemas
    of $\BIndAxiomatRuleDName$, 
    depending on the value of
    $\DefffCogVar$, obtained via
    a substitution mapping $\DefProp$ to $\DefFm$.
    \item \sloppy
    Suppose that there is a connective
$\DefSymbol \in \SigAritySet{\Sigma}{k}$,
a formula
    $\DefFm = \DefSymbol(\DefFm_1,\ldots,\DefFm_k) \in \Subfmlas{\DefBseqVar}$, a sequence $\Family{\DefValueVar_i}{i=1}^k$ of truth-values
    with
    $\DiscriminatorVal{\DefCogVar}{\DefCogVar}{\DefValueVar_i}{\DefBMatrix}(\DefFm_i) \subseteq \CutPropSet_\DeffCogVar$
    and $\DiscriminatorVal{\DefCogVar}{\InvCog{\DefCogVar}}{\DefValueVar_i}{\DefBMatrix}(\DefFm_i) \subseteq \SetCompl{\CutPropSet_\DeffCogVar}$ for
    each $1 \leq i \leq k$ and $\Tuple{\DefCogVar,\DeffCogVar} \in \Set{\Tuple{\Acc,\TheV}, \Tuple{\Rej,\TheNV}}$, and some $\DeffValueVar \not\in \BMatrixInterp{\DefSymbol}{\DefBMatrix}(\DefValueVar_1,\ldots,\DefValueVar_k)$
    such that
    $\DiscriminatorVal{\DefCogVar}{\DefCogVar}{\DeffValueVar}{\DefBMatrix}(\DefFm) \subseteq \CutPropSet_\DeffCogVar$
    and $\DiscriminatorVal{\DefCogVar}{\InvCog{\DefCogVar}}{\DeffValueVar}{\DefBMatrix}(\DefFm) \subseteq \SetCompl{\CutPropSet_\DeffCogVar}$ for each $\Tuple{\DefCogVar,\DeffCogVar} \in \Set{\Tuple{\Acc,\TheV}, \Tuple{\Rej,\TheNV}}$.
    Then
    $
        \bigcup_{1 \leq i \leq k}
        \DiscriminatorVal{\DefCogVar}{\DefCogVar}{\DefValueVar_i}{\DefBMatrix}(\DefFm_i) 
        \cup \DiscriminatorVal{\DefCogVar}{\DefCogVar}{\DeffValueVar}{\DefBMatrix}(\DefFm)
        \subseteq \CutPropSet_\DeffCogVar \cap 
        \GenSubfmlas{}{\Discriminator{\DefBMatrix}}(\DefBseqVar)
        \text{ and }
        \bigcup_{1 \leq i \leq k}
        \DiscriminatorVal{\DefCogVar}{\InvCog{\DefCogVar}}{\DefValueVar_i}{\DefBMatrix}(\DefFm_i) 
        \cup \DiscriminatorVal{\DefCogVar}{\InvCog{\DefCogVar}}{\DeffValueVar}{\DefBMatrix}(\DefFm)
        \subseteq \SetCompl{\CutPropSet_\DeffCogVar} \cap 
        \GenSubfmlas{}{\Discriminator{\DefBMatrix}}(\DefBseqVar)
    $
    for each $\Tuple{\DefCogVar,\DeffCogVar} \in \Set{\Tuple{\Acc,\TheV}, \Tuple{\Rej,\TheNV}}$, and thus
    we have
    $\BAnaCalcCon{\IntermAccSet{\CutPropSet}}{\SetCompl{\IntermRejSet{\CutPropSet}}}{\IntermRejSet{\CutPropSet}}{\SetCompl{\IntermAccSet{\CutPropSet}}}{\BIndAxiomatRuleSigmaName}{\GenSubfmlas{}{\Discriminator{\DefBMatrix}}}$.
    \item Let
    $\DefSetValueVar = \Set{\DefValueVar \in \DefVSet \mid \DiscriminatorVal{\DefCogVar}{\DefCogVar}{\DefValueVar}{\DefBMatrix}(\DefFm) \subseteq \CutPropSet_\DeffCogVar
    \text{ and } \DiscriminatorVal{\DefCogVar}{\InvCog{\DefCogVar}}{\DefValueVar}{\DefBMatrix}(\DefFm) \subseteq \SetCompl{\CutPropSet_\DeffCogVar},
    \text{ for each } \Tuple{\DefCogVar,\DeffCogVar} \in \Set{\Tuple{\Acc,\TheV}, \Tuple{\Rej,\TheNV}}
    \text{ and } \DefFm \in \Subfmlas{\DefBseqVar}}$.
    For each 
    $\DefValueVar \in \DefSetValueVar$,
    pick a formula $\DefFm_\DefValueVar \in \GenSubfmlas{}{\Discriminator{\DefBMatrix}}(\DefBseqVar)$
    such that 
    $\DiscriminatorVal{\DefCogVar}{\DefCogVar}{\DefValueVar}{\DefBMatrix}(\DefFm_\DefValueVar) \subseteq \CutPropSet_\DeffCogVar
    \text{ and } \DiscriminatorVal{\DefCogVar}{\InvCog{\DefCogVar}}{\DefValueVar}{\DefBMatrix}(\DefFm_\DefValueVar) \subseteq \SetCompl{\CutPropSet_\DeffCogVar},
    \text{for each } \Tuple{\DefCogVar,\DeffCogVar} \in \Set{\Tuple{\Acc,\TheV}, \Tuple{\Rej,\TheNV}}$.
    Easily, then, for each $ \Tuple{\DefCogVar,\DeffCogVar} \in \Set{\Tuple{\Acc,\TheV}, \Tuple{\Rej,\TheNV}}$, we have
    $
    \bigcup_{\DefValueVar \in \DefSetValueVar}
    \DiscriminatorVal{\DefCogVar}{\DefCogVar}{\DefValueVar}{\DefBMatrix}(\DefFm_\DefValueVar) \subseteq \CutPropSet_\DeffCogVar
    \cap 
    \GenSubfmlas{}{\Discriminator{\DefBMatrix}}(\DefBseqVar)
    \text{ and }
     \bigcup_{\DefValueVar \in \DefSetValueVar}
    \DiscriminatorVal{\DefCogVar}{\InvCog{\DefCogVar}}{\DefValueVar}{\DefBMatrix}(\DefFm_\DefValueVar) \subseteq \SetCompl{\CutPropSet_\DeffCogVar}
    \cap 
    \GenSubfmlas{}{\Discriminator{\DefBMatrix}}(\DefBseqVar)   
    $,
    and so 
    $\BAnaCalcCon{\IntermAccSet{\CutPropSet}}{\SetCompl{\IntermRejSet{\CutPropSet}}}{\IntermRejSet{\CutPropSet}}{\SetCompl{\IntermAccSet{\CutPropSet}}}{\BIndAxiomatRuleNonTotalName}{\GenSubfmlas{}{\Discriminator{\DefBMatrix}}}$
    if
    $\DefSetValueVar \not\in \TotalSubMVSet{\DefBMatrix}$.
\end{enumerate}
\end{proof}

\begin{theorem}
\label{the:completeness}
If $\Discriminator{\DefBMatrix}$ is a discriminator for
a $\theBPN[\Sigma]$--matrix
$\DefBMatrix$, then
the calculus
$\BIndAxiomatName{\DefBMatrix}{\Discriminator{\DefBMatrix}}{}$
is complete with respect to
$\DefBMatrix$ and $\SeparatorsSet{\DefBMatrix}$--analytic.
\end{theorem}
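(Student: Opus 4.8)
The plan is to obtain completeness and $\SeparatorsSet{\DefBMatrix}$--analyticity in one stroke. Note that every $\SeparatorsSet{\DefBMatrix}$--analytic proof is in particular a $\BIndAxiomatName{\DefBMatrix}{\Discriminator{\DefBMatrix}}{}$--proof, and by Theorem~\ref{the:soundness} every such proof is sound, so $\SeparatorsSet{\DefBMatrix}$--analytic provability implies provability, which implies $\BEnt{\PlaceholderArg}{\PlaceholderArg}{\PlaceholderArg}{\PlaceholderArg}{\DefBMatrix}$. It therefore suffices to close this loop, i.e.\ to show that $\BEnt{\DefAccSet}{\DefNRejSet}{\DefRejSet}{\DefNAccSet}{\DefBMatrix}$ implies the existence of an $\SeparatorsSet{\DefBMatrix}$--analytic proof of $\BStat{\DefAccSet}{\DefNRejSet}{\DefRejSet}{\DefNAccSet}$: this collapses the three relations, yielding at once completeness and $\SeparatorsSet{\DefBMatrix}$--analyticity. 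I would prove this remaining implication by contraposition, reading a failed analytic proof-search off as a countermodel.

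So suppose a \theB-statement $\DefBseqVar \SymbDef \BStat{\DefAccSet}{\DefNRejSet}{\DefRejSet}{\DefNAccSet}$ has $\nBAnaCalcCon{\DefAccSet}{\DefNRejSet}{\DefRejSet}{\DefNAccSet}{\BIndAxiomatName{\DefBMatrix}{\Discriminator{\DefBMatrix}}{}}{\GenSubfmlas{}{\Discriminator{\DefBMatrix}}}$. Since $\SeparatorsSet{\DefBMatrix}$--analytic provability of $\DefBseqVar$ amounts to $R$--provability for the fixed finite set $R \SymbDef \RuleInstSetCalcSeq{\BIndAxiomatName{\DefBMatrix}{\Discriminator{\DefBMatrix}}{}}{\DefBseqVar}$ of rule instances, and such $R$--provability relations respect properties \ref{prop:BConO}, \ref{prop:BConD} and \ref{prop:BConC}, I may apply Cut (\ref{prop:BConC}) to obtain $\DefAccSet \subseteq \IntermAccSet{\CutPropSet} \subseteq \SetCompl{\DefNAccSet}$ and $\DefRejSet \subseteq \IntermRejSet{\CutPropSet} \subseteq \SetCompl{\DefNRejSet}$ such that the saturated statement $\BStat{\IntermAccSet{\CutPropSet}}{\SetCompl{\IntermRejSet{\CutPropSet}}}{\IntermRejSet{\CutPropSet}}{\SetCompl{\IntermAccSet{\CutPropSet}}}$ still admits no $\SeparatorsSet{\DefBMatrix}$--analytic proof. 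This statement has exactly the shape required by Lemma~\ref{lem:auxcompleteness}, and, having no analytic proof whatsoever, it has none using any single one of the four groups of rule schemas; so all four items of that lemma are available. Items~(1) and~(2) yield a map $f : \Subfmlas{\DefBseqVar} \to \DefVSet$ with $f(\DefFm) \in \BMatrixDistSet{\DefCogVar}{\DefBMatrix}$ iff $\DefFm \in \CutPropSet_\DeffCogVar$, for $\Tuple{\DefCogVar,\DeffCogVar} \in \Set{\Tuple{\Acc,\TheV},\Tuple{\Rej,\TheNV}}$ --- item~(1) supplying, for each subformula, an adequate value, and item~(2) forcing the displayed equivalence for whichever value is picked. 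Items~(3) and~(4) then give, respectively, $f(\DefSymbol(\DefFm_1,\ldots,\DefFm_k)) \in \BMatrixInterp{\DefSymbol}{\DefBMatrix}(f(\DefFm_1),\ldots,f(\DefFm_k))$ for non-atomic subformulas of $\DefBseqVar$ and $f(\Subfmlas{\DefBseqVar}) \in \TotalSubMVSet{\DefBMatrix}$; as $\Subfmlas{\DefBseqVar}$ is closed under subformulas, these are exactly the two hypotheses of the generalized effectiveness statement of \cite{baaz2013} recalled in Section~\ref{sec:prelims}, so $f$ extends to an $\DefBMatrix$-valuation $\DefVal$. Finally, from $\DefAccSet \subseteq \IntermAccSet{\CutPropSet}$, $\DefRejSet \subseteq \IntermRejSet{\CutPropSet}$, $\DefNAccSet \subseteq \SetCompl{\IntermAccSet{\CutPropSet}}$ and $\DefNRejSet \subseteq \SetCompl{\IntermRejSet{\CutPropSet}}$, the defining property of $f$ makes $\DefVal$ send $\DefAccSet, \DefRejSet, \DefNAccSet, \DefNRejSet$ into $\BMatrixDistSet{\Acc}{\DefBMatrix}, \BMatrixDistSet{\Rej}{\DefBMatrix}, \BMatrixDistSet{\NAcc}{\DefBMatrix}, \BMatrixDistSet{\NRej}{\DefBMatrix}$ respectively; hence $\DefVal$ witnesses $\nBEnt{\DefAccSet}{\DefNRejSet}{\DefRejSet}{\DefNAccSet}{\DefBMatrix}$, as desired.

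The step I expect to be the main obstacle is the passage to the saturated statement: it relies on recognizing that $\SeparatorsSet{\DefBMatrix}$--analytic provability of a \theB-statement is governed by a \emph{fixed} set of rule instances, that the induced relation therefore inherits property \ref{prop:BConC}, and --- for \theB-statements with infinite component sets --- on a Lindenbaum-style maximalization to produce $\IntermAccSet{\CutPropSet}$ and $\IntermRejSet{\CutPropSet}$. Everything downstream is bookkeeping around Lemma~\ref{lem:auxcompleteness} and the effectiveness result; the only place where the partiality of $\DefBMatrix$ genuinely intervenes is item~(4) (equivalently, the group $\BIndAxiomatRuleNonTotalName$ of the axiomatization), which is precisely what confines the image of $f$ to a total sub--$\theBPN[\Sigma]$--matrix and so makes the extension step legitimate --- in the total case this item is vacuous and the argument reduces to the familiar one.
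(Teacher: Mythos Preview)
Your proposal is correct and follows essentially the same route as the paper: assume no $\SeparatorsSet{\DefBMatrix}$--analytic proof, invoke \ref{prop:BConC} to pass to a saturated statement, read off a map $f$ via items (1)--(2) of Lemma~\ref{lem:auxcompleteness}, use items (3)--(4) to meet the hypotheses of the effectiveness result, and extend $f$ to a countermodel. Your additional remarks (the ``closing the loop'' reduction, the identification of analytic provability with $R$--provability to justify the use of \ref{prop:BConC}, and the comment on where partiality enters) are helpful elaborations rather than a different strategy.
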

\begin{proof}
\vspace{-2mm}
Let $\DefBseqVar \SymbDef \BStat{\DefAccSet}{\DefNRejSet}{\DefRejSet}{\DefNAccSet}$ 
be a \theB-statement
and
suppose that
(a) $\nBAnaCalcCon{\DefAccSet}{\DefNRejSet}{\DefRejSet}{\DefNAccSet}{\DefCalculusName}{\GenSubfmlas{}{\Discriminator{\DefBMatrix}}}$. Our goal is to build an
$\DefBMatrix$-valuation witnessing
$\nBEnt{\DefAccSet}{\DefNRejSet}{\DefRejSet}{\DefNAccSet}{\DefBMatrix}$.
From (a), by $\ref{prop:BConC}$,
we have that (b) there are
$\DefAccSet \subseteq \IntermAccSet{\CutPropSet} \subseteq \SetCompl{\DefNAccSet}$
and
$\DefRejSet \subseteq \IntermRejSet{\CutPropSet} \subseteq \SetCompl{\DefNRejSet}$ such that
$\nBAnaCalcCon{\IntermAccSet{\CutPropSet}}{\SetCompl{\IntermRejSet{\CutPropSet}}}{\IntermRejSet{\CutPropSet}}{\SetCompl{\IntermAccSet{\CutPropSet}}}{\DefCalculusName}{\GenSubfmlas{}{\Discriminator{\DefBMatrix}}}$. Consider then a mapping
$f : \Subfmlas{\DefBseqVar} \to \DefVSet$
with (c)
$f(\DefFm) \in \BMatrixDistSet{\DefCogVar}{\DefBMatrix}$
    if{f} $\DefFm \in \CutPropSet_\DeffCogVar$, for
    $\Tuple{\DefCogVar,\DeffCogVar} \in \Set{\Tuple{\Acc,\TheV}, \Tuple{\Rej,\TheNV}}$,
    whose existence is guaranteed by
    items (1) and (2) of
    Lemma~\ref{lem:auxcompleteness}.
Notice that items (3) and (4)
of this same proposition imply, respectively, that $f(\DefSymbol(\DefFm_1, \ldots, \DefFm_k)) \in \BMatrixInterp{\DefSymbol}{\DefBMatrix}(
f(\DefFm_1), \ldots, f(\DefFm_k))$
for every $\DefSymbol(\DefFm_1, \ldots, \DefFm_k) \in \GenSubfmlas{}{\Discriminator{\DefBMatrix}}(\DefBseqVar)$,
and $f(\Subfmlas{\DefBseqVar}) \in \TotalSubMVSet{\DefBMatrix}$.
Hence, $f$ may be extended to 
an $\DefBMatrix$-valuation $\DefVal$
and, from (b) and (c), we have
$\DefVal(\CogSet{\DefFmlaSet}{\DefCogVar})
        \subseteq \BMatrixDistSet{\DefCogVar}{\DefBMatrix}$
        for each $\DefCogVar \in \CogsSet$,
so $\nBEnt{\DefAccSet}{\DefNRejSet}{\DefRejSet}{\DefNAccSet}{\DefBMatrix}$.
\end{proof}


\begin{lemma}
\label{lem:correctness}
Let
$R$ be a finite set of finitary rule instances. 
Then
the procedure 
$\ProofSAlgoName(
\Tuple{\DefAccSet, \DefRejSet},
\Tuple{\DefNAccSet, \DefNRejSet}, 
R)$
always terminates, returning~a tree 
that is $\Tuple{\DefNAccSet, \DefNRejSet}$-closed
iff
$\BRuleInstCon{\DefAccSet}{\DefNRejSet}{\DefRejSet}{\DefNAccSet}{R}$.
\end{lemma}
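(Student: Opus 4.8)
The plan is to argue by induction on $\SetSize{R}$, mirroring the recursion of Algorithm~\ref{algo:proofsearch}. \emph{Termination} is immediate: the recursive call at line~\ref{algoline:reccall} uses the strictly smaller $\SetDiff{R}{\Set{\DefRuleInstVar}}$, and both loops range over finite sets --- the rules of $R$ at line~\ref{algoline:looprules}, and, each rule instance being finitary, the finitely many succedent formulas at line~\ref{algoline:loopfmlas} --- so termination follows from the induction hypothesis. The base case $\SetSize{R}=0$ is direct: the loop is skipped and $\SingleNodeTree(\Tuple{\DefAccSet,\DefRejSet})$ is returned, which is $\Tuple{\DefNAccSet,\DefNRejSet}$-closed exactly when $\DefAccSet\cap\DefNAccSet\neq\EmptySet$ or $\DefRejSet\cap\DefNRejSet\neq\EmptySet$, and this is exactly when $\DefBseqVar\SymbDef\BStat{\DefAccSet}{\DefNRejSet}{\DefRejSet}{\DefNAccSet}$ is $R$-provable, no rule being available to expand or discontinue the root.

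For the inductive step, assume the statement for all smaller rule-instance sets, and take the case where the root is not $\Tuple{\DefNAccSet,\DefNRejSet}$-closed (the complementary case being closed off by line~\ref{algoline:base}). The \emph{soundness direction} I would settle by inspecting the exit points: a \emph{closed} tree can be returned only at line~\ref{algoline:starcond}, via an applicable empty-succedent instance --- an immediate one-rule $R$-proof --- or at line~\ref{algoline:closedcond}, after some instance $\DefRuleInstVar$ has been applied and each subtree grown beneath its succedent-children has returned closed; by the induction hypothesis those subtrees are $(\SetDiff{R}{\Set{\DefRuleInstVar}})$-proofs of the respective child \theB-sequents, and assembling them under the $\DefRuleInstVar$-expansion of the root yields an $R$-proof of $\DefBseqVar$.

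The \emph{completeness direction} is where I expect the genuine obstacle, because $\ProofSAlgoName$ commits to the \emph{first} rule instance passing the applicability test of line~\ref{algoline:condpremiss} and never revises that commitment; so correctness depends on a confluence-type claim ($\dagger$): if $\DefBseqVar$ is $R$-provable and $\DefRuleInstVar\in R$ both passes the line~\ref{algoline:condpremiss} test at $\Tuple{\DefAccSet,\DefRejSet}$ and has a non-empty succedent, then every succedent-child generated from $\DefRuleInstVar$ determines a \theB-sequent that is $(\SetDiff{R}{\Set{\DefRuleInstVar}})$-provable. Using that enlarging every node label of a proof preserves proofhood, one may assume an $R$-proof of $\DefBseqVar$ with root label exactly $\Tuple{\DefAccSet,\DefRejSet}$; take such a proof with the fewest nodes. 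Its root is a non-leaf (not closed, not bearing the label $\StarLabel$), hence expanded by some $\DefRuleInstVar_0\in R$, and minimality forces $\DefRuleInstVar_0$ to pass the line~\ref{algoline:condpremiss} test --- otherwise a succedent formula of $\DefRuleInstVar_0$ lying in $\DefAccSet$ (resp.\ $\DefRejSet$) would make the child obtained by adjoining it an exact copy of the root, whose subtree is then a strictly smaller $R$-proof of $\DefBseqVar$. So the loop at line~\ref{algoline:looprules} reaches a candidate; let $\DefRuleInstVar_1$ be the first. If $\DefRuleInstVar_1$ has empty succedent, line~\ref{algoline:starcond} returns a closed tree; otherwise ($\dagger$) applied to $\DefRuleInstVar_1$ makes each child \theB-sequent $(\SetDiff{R}{\Set{\DefRuleInstVar_1}})$-provable, so by the induction hypothesis every recursive call at line~\ref{algoline:reccall} returns a closed subtree, the guard at line~\ref{algoline:notcclosed} never triggers, and line~\ref{algoline:closedcond} returns a closed tree.

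What remains --- the one non-routine step --- is ($\dagger$) itself, which I would prove by a ``specialise-and-bypass'' surgery on proofs. Given an $R$-proof $\DefTreeVar$ of $\DefBseqVar$ (with root label $\Tuple{\DefAccSet,\DefRejSet}$, as above) and a succedent-child obtained by, say, adjoining a formula $\DefFm$ to the accepted component, first add $\DefFm$ to the accepted component of every non-$\StarLabel$ node of $\DefTreeVar$: applied rule instances stay applicable (antecedents only grow) and leaves stay closed (a closed leaf only gains formulas). Then \emph{bypass} every node at which $\DefRuleInstVar$ was applied, replacing its block of children by the subtree hanging beneath its ``$\DefFm$-in-accepted'' child --- which, after the first step, is a relabelling of that very node --- and discarding $\DefRuleInstVar$'s other children; performed from the leaves upward this terminates (each step shrinks the tree), removes every use of $\DefRuleInstVar$, leaves the root componentwise within $\Tuple{\DefAccSet\cup\Set{\DefFm},\DefRejSet}$, and leaves every leaf $\Tuple{\DefNAccSet,\DefNRejSet}$-closed. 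A child obtained on the rejected side is symmetric. This proves ($\dagger$), and the lemma follows.
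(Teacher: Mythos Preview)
Your proof is correct and shares the paper's inductive scheme on $|R|$, with the same termination and soundness arguments. For the completeness direction the paper argues contrapositively: from an open returned tree it takes the open child, applies the induction hypothesis to get non-$(R\setminus\{\DefRuleInstVar\})$-provability of the child sequent, notes in one line that $\DefRuleInstVar$ ``does not play any role'' for that child (one of its succedent formulas now sits in the antecedent), and invokes \ref{prop:BConD}. Your direct argument unpacks precisely this ``no role'' claim via the explicit specialise-and-bypass surgery for $(\dagger)$, and your minimality step additionally handles the case in which no rule passes the line-\ref{algoline:condpremiss} test and line~\ref{algoline:opentree} returns the bare root --- a case the paper's write-up does not isolate but which needs the very same idea. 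So the underlying insight is identical; your version is more explicit at the one point the paper treats tersely, at the cost of the extra minimality detour that the contrapositive formulation avoids.
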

\begin{proof}
\sloppy
Let $R$ be a finite set of finitary rule instances, 
set $F\SymbDef\Tuple{\DefAccSet, \DefRejSet}$ and set $C\SymbDef\Tuple{\DefNRejSet, \DefNAccSet}$. 
We proceed by induction on $ \SetSize{R}$. In the base case, $R=\EmptySet$, the algorithm obviously terminates
and returns a proof of $\DefBseqVar$ iff $\BRuleInstCon{\DefAccSet}{\DefNRejSet}{\DefRejSet}{\DefNAccSet}{\EmptySet}$.
In the inductive step, assume that $\SetSize{R} \geq 1$ and that (IH): the present
lemma holds for all sets of rule instances $R'$
with $\SetSize{R'} = \SetSize{R} - 1$. Since $R$
is finite and contains only finitary rule instances,
and each recursive call (line \ref{algoline:reccall}) terminates by (IH), the whole algorithm
terminates. Also, if a $C$-closed tree is produced,
it means that one of the conditions
in lines \ref{algoline:base}, \ref{algoline:starcond} or \ref{algoline:closedcond}
was satisfied. The first possibility (line \ref{algoline:base}) was treated in
the base case. The second one (line \ref{algoline:starcond}) means that
there is a rule instance in $R$ with
an empty succedent satisfying the antecedents,
in which case a tree with its root labelled
with $F$ having a single child labelled with
$\StarLabel$ is returned, clearly 
bearing witness to
$\BRuleInstCon{\DefAccSet}{\DefNRejSet}{\DefRejSet}{\DefNAccSet}{R}$.
The third possibility (line \ref{algoline:closedcond})
means that there is a rule instance
$\DefRuleInstVar \SymbDef \TwoDRule{\DeffffAccSet}{\DeffffNRejSet}{\DeffffRejSet}{\DeffffNAccSet} \in R$ applicable to
the antecedents in $F$ (line \ref{algoline:condpremiss}), and, by (IH), the recursive calls (line \ref{algoline:reccall}) produce
trees that bear witness to
$\BRuleInstCon{\DefAccSet,\DefFm}{\DefNRejSet}{\DefRejSet}{\DefNAccSet}{\SetDiff{R}{\Set{\DefRuleInstVar}}}$ and
$\BRuleInstCon{\DefAccSet}{\DefNRejSet}{\DefRejSet, \DeffFm}{\DefNAccSet}{\SetDiff{R}{\Set{\DefRuleInstVar}}}$ for each $\DefFm \in \DeffffFmlaSet_\NAcc$ and
$\DeffFm \in \DeffffFmlaSet_\NRej$.
The resulting tree, then, bears witness to $\BRuleInstCon{\DefAccSet}{\DefNRejSet}{\DefRejSet}{\DefNAccSet}{R}$.
On the other hand, if an open
tree is produced, then for a rule instance
$\DefRuleInstVar \SymbDef \TwoDRule{\DeffffAccSet}{\DeffffNRejSet}{\DeffffRejSet}{\DeffffNAccSet} \in R$ applicable to $F$, some
recursive call resulted in an open tree. 
Assume, without loss of generality, that such call referred to an expansion by $\DefFm \in \DeffffNAccSet$. Then, by (IH),
$\nBRuleInstCon{\DefAccSet,\DefFm}{\DefNRejSet}{\DefRejSet}{\DefNAccSet}{\SetDiff{R}{\Set{\DefRuleInstVar}}}$. Because $\DefFm \in \DeffffNAccSet$,
the instance
$\DefRuleInstVar$ does not play any role in
deriving $\BStat{\DefAccSet,\DefFm}{\DefNRejSet}{\DefRejSet}{\DefNAccSet}$, so
we have $\nBRuleInstCon{\DefAccSet,\DefFm}{\DefNRejSet}{\DefRejSet}{\DefNAccSet}{R}$ and, by \ref{prop:BConD}, it follows that
$\nBRuleInstCon{\DefAccSet}{\DefNRejSet}{\DefRejSet}{\DefNAccSet}{R}$.
\end{proof}

\begin{lemma}
\label{lem:anacalcproofs}
If $\DefCalculusName$ is $\DeffFmlaSet$-analytic, then
$\ProofSAlgoName$ is a proof-search algorithm for $\DefCalculusName$. 
\end{lemma}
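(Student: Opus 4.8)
The plan is to derive the statement from Lemma~\ref{lem:correctness} by feeding the generic proof-search procedure with the concrete set $R \SymbDef \RuleInstSetCalcSeq{\DefCalculusName}{\DefBseqVar}$ of rule instances that the analyticity witness $\DeffFmlaSet$ singles out for the target \theB-sequent $\DefBseqVar$. First I would unwind the definition of $\DeffFmlaSet$-analyticity: if $\BCalcCon{\DefAccSet}{\DefNRejSet}{\DefRejSet}{\DefNAccSet}{\DefCalculusName}$ holds, there is a $\DefCalculusName$-proof $\DefTreeVar$ of $\DefBseqVar$ all of whose node labels lie in $\PowerSet{\GenSubfmlas{}{\DeffFmlaSet}(\DefBseqVar)}^2 \cup \Set{\StarLabel}$. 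Inspecting how an instance of a rule expands a node --- its antecedent formulas already label that node, and its succedent formulas label the children it creates --- every formula occurring in any instance used in $\DefTreeVar$ belongs to $\GenSubfmlas{}{\DeffFmlaSet}(\DefBseqVar)$, so that instance arises from a substitution ranging within $\GenSubfmlas{}{\DeffFmlaSet}(\DefBseqVar)$ and hence belongs to $R$. Conversely, an $R$-proof of $\DefBseqVar$ is in particular a $\DefCalculusName$-proof of $\DefBseqVar$, since $R$ consists of instances of rules of $\DefCalculusName$. This establishes the equivalence $\BCalcCon{\DefAccSet}{\DefNRejSet}{\DefRejSet}{\DefNAccSet}{\DefCalculusName}$ iff $\BRuleInstCon{\DefAccSet}{\DefNRejSet}{\DefRejSet}{\DefNAccSet}{R}$.

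Next I would verify that $R$ satisfies the hypotheses of Lemma~\ref{lem:correctness}, namely that it is a finite set of finitary rule instances. Finiteness holds because $\DefCalculusName$ is finite and finitary, so it has finitely many schemas, each with finite component sets, while $\GenSubfmlas{}{\DeffFmlaSet}(\DefBseqVar)$ is a finite set of formulas: the components of $\DefBseqVar$ being finite, $\SubfmlasSeq(\DefBseqVar)$ is finite, and $\DeffFmlaSet$ is finite, so only finitely many substitution instances of formulas of $\DeffFmlaSet$ over $\SubfmlasSeq(\DefBseqVar)$ arise. Thus, up to the variables actually occurring in the schemas, only finitely many substitutions need to be considered, and each resulting instance is finitary because it is an instance of a finitary schema.

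Finally I would invoke Lemma~\ref{lem:correctness} with this $R$: the call $\ProofSAlgoName(\Tuple{\DefAccSet, \DefRejSet}, \Tuple{\DefNAccSet, \DefNRejSet}, R)$ always terminates and returns a $\Tuple{\DefNAccSet, \DefNRejSet}$-closed tree precisely when $\BRuleInstCon{\DefAccSet}{\DefNRejSet}{\DefRejSet}{\DefNAccSet}{R}$, which by the previous step is equivalent to $\DefBseqVar$ being $\DefCalculusName$-provable; moreover the closed tree it then outputs is, by construction, a $\DefCalculusName$-proof of $\DefBseqVar$. Hence $\ProofSAlgoName$ is a proof-search algorithm for $\DefCalculusName$ (and, since it always halts, also a decision procedure for $\BCalcConName{\DefCalculusName}$). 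I expect the one delicate point to be the first step --- carefully arguing that a $\DeffFmlaSet$-analytic proof genuinely restricts every applied rule instance to $R$ --- which hinges on the observation that the succedent of an applied instance reappears among the labels of the children of the expanded node; the finiteness bookkeeping and the appeal to Lemma~\ref{lem:correctness} are then routine.
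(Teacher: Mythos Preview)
Your proposal is correct and follows essentially the same approach as the paper: take $R \SymbDef \RuleInstSetCalcSeq{\DefCalculusName}{\DefBseqVar}$, observe that $\DeffFmlaSet$-analyticity makes $R$-provability coincide with $\DefCalculusName$-provability of $\DefBseqVar$, check that $R$ is finite and finitary, and invoke Lemma~\ref{lem:correctness}. The paper's proof is terser---it simply asserts that $\RuleInstSetCalcSeq{\DefCalculusName}{\DefBseqVar}$ ``provides enough material'' and that its finiteness and finitariness are clear---whereas you spell out why the node-label constraint in a $\DeffFmlaSet$-analytic proof forces every applied instance into $R$ and why the relevant finiteness holds; this extra detail is welcome and does not depart from the intended argument.
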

\begin{proof}
We know that $\RuleInstSetCalcSeq{\DefCalculusName}{\DefBseqVar}$
must be enough to provide a derivation
of~$\DefBseqVar$, since $\DefCalculusName$ is $\DeffFmlaSet$-analytic.
Clearly, such set is finite and contains only
finitary rule instances, hence
the present result is a direct consequence of
Lemma \ref{lem:correctness}. 
\end{proof}

\begin{lemma}
\label{lem:complexity}
The worst-case running time of $\ProofSAlgoName(
\Tuple{\DefAccSet, \DefRejSet},
\Tuple{\DefNAccSet, \DefNRejSet}, 
R)$ is $O(b^{n} + n \cdot \Polynomial{}(s))$.
\end{lemma}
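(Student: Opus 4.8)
The plan is to bound the worst-case running time $T(n,s)$ of $\ProofSAlgoName$ by solving a recurrence on $n = \SetSize{R}$, which is the quantity that strictly decreases at each recursive call: line~\ref{algoline:reccall} invokes $\ProofSAlgoName$ on $\SetDiff{R}{\Set{\DefRuleInstVar}}$. Termination, hence the fact that $T(n,s)$ is finite and well-defined, is already guaranteed by Lemma~\ref{lem:correctness}.

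First I would isolate the worst case. It occurs exactly when: (i) $\BRuleInstCon{\DefAccSet}{\DefNRejSet}{\DefRejSet}{\DefNAccSet}{R}$ holds, since otherwise the loop of line~\ref{algoline:looprules} may return early at line~\ref{algoline:notcclosed} after a single failed recursive call; (ii) every instance of $R$ must be probed by the applicability test of line~\ref{algoline:condpremiss} before an applicable one is located; and (iii) that applicable instance has a non-empty succedent, so that line~\ref{algoline:starcond} does not short-circuit the computation and $b \geq 1$ genuine recursive calls are spawned at line~\ref{algoline:reccall}. Crucially, once an applicable instance is processed the procedure returns (either via line~\ref{algoline:notcclosed} or via line~\ref{algoline:closedcond}), so the recursion branches with factor exactly $\BranchSymb(\DefRuleInstVar) \leq b$, not $b \cdot n$.

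Next I would account for the non-recursive cost of a single invocation. Building $\SingleNodeTree(F)$, performing the overlap checks of lines~\ref{algoline:base} and~\ref{algoline:condpremiss}, and assembling the enlarged antecedents passed to the recursive calls each cost $\Polynomial{}(s)$, so scanning all of $R$ adds up to $O(n \cdot \Polynomial{}(s))$. Each recursive call is on $n-1$ rule instances and on an antecedent extended by one formula taken from the succedent of an instance in $R$, whose size is therefore at most $s$; moreover, every formula ever added along the recursion is drawn from the fixed pool of formulas occurring in $\Set{\DefBseqVar} \cup R$, of total size $s$, so the size parameter stays bounded by a fixed polynomial in $s$ throughout the whole recursion rather than doubling at each level. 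Hence, with base case $T(0,s) = c_1 + \Polynomial{}(s)$ (when $R = \EmptySet$ only lines~\ref{algoline:maket}--\ref{algoline:base} and line~\ref{algoline:opentree} run), we obtain for $n \geq 1$
\[
T(n,s) \;\leq\; b \cdot T(n-1,\, s + \Polynomial{}(s)) \;+\; 2n \cdot \Polynomial{}(s).
\]

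Finally, a straightforward induction on $n$ unrolls this recurrence into $b^n$ copies of the base cost plus the geometric sum $\sum_{i=0}^{n-1} b^i \cdot \Polynomial{}(s)$; using $n \leq \Polynomial{}(s)$ to absorb the linear factors in the index, this is $O(b^n + n \cdot \Polynomial{}(s))$, as claimed. I expect the delicate points to be exactly (a) justifying that each invocation processes at most one applicable rule instance before returning, so that the branching factor of the recursion is $b$ and not $b \cdot n$, and (b) confirming that antecedent sizes, and with them the $s$-parameter, remain polynomially bounded along the entire recursion.
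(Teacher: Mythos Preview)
Your setup matches the paper's exactly: you isolate the same worst case (your three conditions are verbatim the paper's), you arrive at the same recurrence $T(n,s) \leq b\cdot T(n-1,\,s+\Polynomial{}(s)) + 2n\cdot\Polynomial{}(s)$ with base $T(0,s)=c_1+\Polynomial{}(s)$, and you close by induction on~$n$. Your points (a) and (b) are also well taken---in fact your observation that every formula ever added to an antecedent comes from the fixed pool in $\Set{\DefBseqVar}\cup R$, keeping the size parameter uniformly bounded by a single polynomial in~$s$, is more carefully argued than in the paper.

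The gap is in the last paragraph. If you actually unroll the recurrence you get
\[
T(n,s)\;\le\;b^{n}\,T(0,s')\;+\;\sum_{i=0}^{n-1} b^{i}\cdot 2(n-i)\,\Polynomial{}(s),
\]
and since $T(0,s')=c_1+\Polynomial{}(s)$ the first summand is already $\Theta(b^{n}\cdot\Polynomial{}(s))$, while for $b\ge 2$ the geometric factor $\sum_{i=0}^{n-1} b^{i}$ is $\Theta(b^{n})$, so the second summand is also $\Theta(b^{n}\cdot\Polynomial{}(s))$. Neither term collapses to the additive shape $b^{n}+n\cdot\Polynomial{}(s)$: what your unrolling actually delivers is $O(b^{n}\cdot\Polynomial{}(s))$, which is strictly weaker than the stated bound. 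The auxiliary hypothesis ``$n\le\Polynomial{}(s)$'' does not rescue this---it only absorbs the linear $(n-i)$ factor, not the exponential $b^{i}$---and in any case it is unavailable at the generality of this lemma, where $R$ is an \emph{arbitrary} finite set of finitary rule instances and $n=\SetSize{R}$ and~$s$ are independent parameters. The paper does not unroll; it carries out the induction on~$n$ directly, tracking the constants step by step, so to align with the paper you should run that induction rather than sum the series.
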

\begin{proof}
The worst-case running-time $T(n, s)$ of $\ProofSAlgoName$ occurs
when $\BRuleInstCon{\DefAccSet}{\DefNRejSet}{\DefRejSet}{\DefNAccSet}{R}$,
 the set $R$ needs to be entirely inspected until an applicable rule instance is found, and such an instance does not have an empty set of succedents. The following table
assigns a cost and simplified upper bounds to execution times 
of the relevant instructions of
Algorithm~\ref{algo:proofsearch} in the described
scenario.
\begin{table}
    \centering
    \begin{tabular}{c|c|c}
        \toprule
         Instruction line & Cost & Times \\
         \midrule
         \ref{algoline:maket} & $c_1$ & 1\\
         \ref{algoline:base} & $\Polynomial{}(s)$ & 1\\ 
         \ref{algoline:looprules} & $c_2$ & $n$\\ 
         \ref{algoline:condpremiss} & $\Polynomial{}(s)$ & $n$\\ 
         \ref{algoline:starcond} & $c_3$ & $1$\\ 
         \ref{algoline:loopfmlas} & $c_4$ & $b$\\ 
         \ref{algoline:reccall} & $T(n-1, s + \Polynomial{}(s) )$ & $b$\\
         \ref{algoline:addchild} & $c_5$ & $b$\\ 
         \ref{algoline:notcclosed} & $\Polynomial{}(s)$ & $b$\\ 
         \ref{algoline:closedcond} & $\Polynomial{}(s)$ & 1
    \end{tabular}
\end{table}

\noindent 
Notice that $T(0,s) = c_1 + \Polynomial{}(s)$ and, based on the
assignments above and after some algebraic manipulations, we have, for $n \geq 1$,
\begin{equation}
    T(n, s) \leq b \cdot T(n-1, s + \Polynomial{}(s)) + 2n \cdot \Polynomial{}(s).
\end{equation}

\noindent We prove by induction on $n$
that $T(n,s) \in O(b^n + n \cdot \Polynomial{}(s))$.
We will take advantage of the asymptotic notation
and choose the base case as $n=1$. In such case, we have $T(1, s) \leq  bT(0, s+\Polynomial{}(s)) + 2\Polynomial{}(s) = 2\Polynomial{}(s) + bc_1 + b\Polynomial{}(s) = c_1b + (2+b)\Polynomial{}(s)$, and the upper bound suffices.
In the inductive step, let $n > 1$ and
assume, for all $s \geq 0$, that
$T(n-1, s) \leq k_1 \cdot b^{n-1} + k_1 \cdot (n-1) \cdot \Polynomial{}(s)$, for some $k_1 > 0$.
Then
\begin{align*}
    T(n,s) & \leq b \cdot T(n-1, s + \Polynomial{}(s)) + 2n \cdot \Polynomial{}(s)\\
    & \leq b \cdot (k_1 \cdot b^{n-1} + k_1 \cdot (n-1) \cdot \Polynomial{}(s)) + 2n \cdot \Polynomial{}(s)\\
    & = k_1 \cdot b^n + b\cdot k_1 \cdot (n-1) \cdot \Polynomial{}(s) + 2n \cdot \Polynomial{}(s)\\
    &= k_1 \cdot b^n + (2n + b \cdot k_1 \cdot n - b \cdot k_1) \cdot \Polynomial{}(s)\\
    &\leq k_1 \cdot b^n + (2n + b \cdot k_1 \cdot n) \cdot \Polynomial{}(s)\\
    &= k_1 \cdot b^n + k_2 \cdot n \cdot \Polynomial{}(s), \text{ with $k_2 = 2 + b \cdot k_1$}\\
    &\leq k_3 \cdot (b^n + n \cdot \Polynomial{}(s)), \text{ with } k_3 = \max\{k_1,k_2\}\\
    &\in O(b^n + n \cdot \Polynomial{}(s))
\end{align*}
\end{proof}

\begin{theorem}
\label{the:analyticexp}
If $\DefCalculusName$ is $\DeffFmlaSet$-analytic, $\ProofSAlgoName$
is a proof-search 
for $\DefCalculusName$
that is exponential time in general, and is 
polynomial time
if $\DefCalculusName$ contains only rules with at most one succedent.
\end{theorem}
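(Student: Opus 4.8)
The plan is to reduce everything to the two facts already in hand: that, under $\DeffFmlaSet$-analyticity, $\ProofSAlgoName$ run over the finite pool of relevant rule instances is a correct proof-search procedure (Lemma~\ref{lem:anacalcproofs}, which rests on Lemma~\ref{lem:correctness}), and that the running time of $\ProofSAlgoName$ on a finite set $R$ of finitary rule instances is $O(b^{n}+n\cdot\Polynomial{}(s))$ (Lemma~\ref{lem:complexity}), where $n=\SetSize{R}$, $s=\SeqSizeSymb(\Set{\DefBseqVar}\cup R)$ and $b=\max_{\DefRuleInstVar\in R}\BranchSymb(\DefRuleInstVar)$. So the real content is to choose $R$ appropriately and bound $n$, $s$ and $b$ in terms of the size $\SeqSizeSymb(\DefBseqVar)$ of the input \theB-sequent.

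First I would take $R \SymbDef \RuleInstSetCalcSeq{\DefCalculusName}{\DefBseqVar}$, the rule instances of $\DefCalculusName$ whose underlying substitutions use only formulas from $\GenSubfmlas{}{\DeffFmlaSet}(\DefBseqVar)$; since $\DefCalculusName$ is $\DeffFmlaSet$-analytic, Lemma~\ref{lem:anacalcproofs} guarantees that $\ProofSAlgoName$ called on this $R$ is a proof-search algorithm for $\DefCalculusName$ (and hence a decision procedure for $\DefCalculusName$-provability of $\DefBseqVar$). The set $R$ is finite and consists only of finitary rule instances, since $\DefCalculusName$ is a finite and finitary calculus. Next I would observe that $\SetSize{\GenSubfmlas{}{\DeffFmlaSet}(\DefBseqVar)}$ is polynomial in $\SeqSizeSymb(\DefBseqVar)$: there are at most $\SeqSizeSymb(\DefBseqVar)$ subformulas of $\DefBseqVar$, and because $\DeffFmlaSet$ is a fixed finite set of formulas, each with a bounded number of variables, the generalized subformulas it contributes number at most polynomially many. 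As $\DefCalculusName$ is itself a fixed finite calculus whose schemas mention a bounded number of variables, the number $n$ of instances in $R$ is bounded by $\SetSize{\GenSubfmlas{}{\DeffFmlaSet}(\DefBseqVar)}$ raised to a constant power, hence polynomial in $\SeqSizeSymb(\DefBseqVar)$; and since each such instance has size polynomial in $\SeqSizeSymb(\DefBseqVar)$, so is $s$. Finally, $b=\max_{\DefRuleInstVar\in R}\BranchSymb(\DefRuleInstVar)$ is at most the largest number of formulas appearing in the succedent of a schema of $\DefCalculusName$, which is a constant independent of the input.

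Plugging these into Lemma~\ref{lem:complexity} yields a running time of $O(b^{n}+n\cdot\Polynomial{}(s))$; with $b$ constant and $n$ polynomial in $\SeqSizeSymb(\DefBseqVar)$, the term $b^{n}$ is of the form $2^{\Polynomial{}(\SeqSizeSymb(\DefBseqVar))}$ while $n\cdot\Polynomial{}(s)$ is polynomial, so the bound is exponential in general. If moreover every schema of $\DefCalculusName$ has at most one formula in the succedent, then $\BranchSymb(\DefRuleInstVar)\le 1$ for every instance, so $b\le 1$, the term $b^{n}$ drops out, and the bound reduces to $O(n\cdot\Polynomial{}(s))$, which is polynomial in $\SeqSizeSymb(\DefBseqVar)$. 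The only step needing care --- and the natural candidate for the main obstacle --- is the polynomial bound on $\SetSize{\GenSubfmlas{}{\DeffFmlaSet}(\DefBseqVar)}$, $n$ and $s$: one must ensure the degrees of these polynomials stay fixed, which they do precisely because $\DefCalculusName$ and $\DeffFmlaSet$ are fixed, so that all the arities and variable counts involved are constants. The genuinely delicate recurrence analysis has already been carried out inside Lemma~\ref{lem:complexity}, leaving here only routine bookkeeping.
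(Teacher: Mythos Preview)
Your proposal is correct and follows essentially the same approach as the paper: choose $R=\RuleInstSetCalcSeq{\DefCalculusName}{\DefBseqVar}$, note that its size and total symbol count are polynomial in $\SeqSizeSymb(\DefBseqVar)$ while the branching factor~$b$ is a constant of the calculus, and read off the bound from Lemma~\ref{lem:complexity}. You spell out more of the bookkeeping (the polynomial bound on $\SetSize{\GenSubfmlas{}{\DeffFmlaSet}(\DefBseqVar)}$, the explicit appeal to Lemma~\ref{lem:anacalcproofs} for correctness, and the $b\le 1$ observation for the single-succedent case) than the paper does, but there is no substantive difference in strategy.
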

\begin{proof}
\sloppy 
Clearly, the set of all instances
of rules of $\DefCalculusName$
using only 
formulas in 
$\GenSubfmlas{}{\DeffFmlaSet}(\DefBseqVar)$ is finite and contains only
finitary rule instances, and its size is polynomial in $\SeqSizeSymb(\DefBseqVar)$. 
The announced result then follows
directly from Lemma \ref{lem:complexity}.
\end{proof}

\begin{theorem}
\label{the:coNPdec}
If $\DefCalculusName$ is $\DeffFmlaSet$-analytic, then the problem of deciding $\BCalcConName{\DefCalculusName}$ is in $\coNP$.
\end{theorem}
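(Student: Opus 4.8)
The plan is to establish $\coNP$ membership by showing that the complementary problem --- deciding $\nBCalcCon{\DefAccSet}{\DefNRejSet}{\DefRejSet}{\DefNAccSet}{\DefCalculusName}$ --- lies in $\mathrm{NP}$, exhibiting for each non-provable \theB-sequent a polynomially sized, polynomially checkable certificate of non-provability. That certificate will be a single pair $\Tuple{\IntermAccSet{\CutPropSet},\IntermRejSet{\CutPropSet}}$ of finite sets of formulas, both drawn from $\GenSubfmlas{}{\DeffFmlaSet}(\DefBseqVar)$, which plays the role of an abstract bivaluation over the (generalized) subformulas that refutes $\DefBseqVar$ while still validating every rule instance available for an analytic proof; the whole argument mirrors the one-dimensional case treated in~\cite{marcelinocaleiro2016}.

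First I would fix $\DefBseqVar \SymbDef \BStat{\DefAccSet}{\DefNRejSet}{\DefRejSet}{\DefNAccSet}$ and set $G \SymbDef \GenSubfmlas{}{\DeffFmlaSet}(\DefBseqVar)$ and $R \SymbDef \RuleInstSetCalcSeq{\DefCalculusName}{\DefBseqVar}$. Exactly as in the proof of Theorem~\ref{the:analyticexp}, since $\DefCalculusName$ is finite and finitary and $\DeffFmlaSet$ is finite (all of them fixed), $G$ and $R$ have cardinality and element-size polynomial in $\SeqSizeSymb(\DefBseqVar)$ and $R$ is enumerable in polynomial time; and by $\DeffFmlaSet$-analyticity, $\BCalcCon{\DefAccSet}{\DefNRejSet}{\DefRejSet}{\DefNAccSet}{\DefCalculusName}$ holds if and only if $\BRuleInstCon{\DefAccSet}{\DefNRejSet}{\DefRejSet}{\DefNAccSet}{R}$, so from now on it suffices to reason about $R$-provability.

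The core step is the characterization: $\nBRuleInstCon{\DefAccSet}{\DefNRejSet}{\DefRejSet}{\DefNAccSet}{R}$ holds if and only if there are $\IntermAccSet{\CutPropSet},\IntermRejSet{\CutPropSet} \subseteq G$ with (i) $\DefAccSet \subseteq \IntermAccSet{\CutPropSet}$, $\DefRejSet \subseteq \IntermRejSet{\CutPropSet}$, $\IntermAccSet{\CutPropSet} \cap \DefNAccSet = \EmptySet$, $\IntermRejSet{\CutPropSet} \cap \DefNRejSet = \EmptySet$, and (ii) no instance $\rho \in R$, with accepted-, non-rejected-, rejected- and non-accepted-components $\mathcal{A},\mathcal{B},\mathcal{C},\mathcal{D}$ respectively, satisfies all of $\mathcal{A} \subseteq \IntermAccSet{\CutPropSet}$, $\mathcal{C} \subseteq \IntermRejSet{\CutPropSet}$, $\mathcal{D} \cap \IntermAccSet{\CutPropSet} = \EmptySet$ and $\mathcal{B} \cap \IntermRejSet{\CutPropSet} = \EmptySet$ (call such a $\rho$ \emph{clean-applicable} at $\Tuple{\IntermAccSet{\CutPropSet},\IntermRejSet{\CutPropSet}}$). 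The proof of this equivalence would use that the relation $\BRuleInstCon{\PlaceholderArg}{\PlaceholderArg}{\PlaceholderArg}{\PlaceholderArg}{R}$ respects \ref{prop:BConO}, \ref{prop:BConD} and \ref{prop:BConC}: by \ref{prop:BConC}, $\nBRuleInstCon{\DefAccSet}{\DefNRejSet}{\DefRejSet}{\DefNAccSet}{R}$ is equivalent to the existence of $\DefAccSet \subseteq \IntermAccSet{\CutPropSet} \subseteq \SetCompl{\DefNAccSet}$ and $\DefRejSet \subseteq \IntermRejSet{\CutPropSet} \subseteq \SetCompl{\DefNRejSet}$ with $\nBRuleInstCon{\IntermAccSet{\CutPropSet}}{\SetCompl{\IntermRejSet{\CutPropSet}}}{\IntermRejSet{\CutPropSet}}{\SetCompl{\IntermAccSet{\CutPropSet}}}{R}$ (the reverse implication being supplied by \ref{prop:BConD}), and since every instance of $R$ mentions only formulas of $G$, only the $G$-parts of the labels occurring in a derivation affect rule-applicability and branch-closure, so one may intersect $\IntermAccSet{\CutPropSet}$ and $\IntermRejSet{\CutPropSet}$ with $G$ without altering $R$-provability --- which delivers exactly condition (i). It then remains to show that such a ``saturated'' \theB-statement $\BStat{\IntermAccSet{\CutPropSet}}{\SetCompl{\IntermRejSet{\CutPropSet}}}{\IntermRejSet{\CutPropSet}}{\SetCompl{\IntermAccSet{\CutPropSet}}}$ is $R$-provable precisely when some $\rho \in R$ is clean-applicable at $\Tuple{\IntermAccSet{\CutPropSet},\IntermRejSet{\CutPropSet}}$. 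One direction simply applies such a $\rho$ at the root and observes that each resulting child either receives a succedent formula landing in the relevant complement, hence is immediately closed, or is the $\StarLabel$-node. For the converse I would induct on the size of a proof tree: the root of a saturated statement can never itself be closed, so some applicable $\rho$ must be used at it, and if $\rho$ is not clean-applicable, then one of its children carries the very same label $\Tuple{\IntermAccSet{\CutPropSet},\IntermRejSet{\CutPropSet}}$ (the added formula was already present) and is the root of a strictly smaller proof of the same statement, to which the induction hypothesis applies.

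From the characterization, the $\mathrm{NP}$ procedure for the complement nondeterministically guesses $\IntermAccSet{\CutPropSet},\IntermRejSet{\CutPropSet} \subseteq G$ and verifies (i) and (ii) by the obvious membership and disjointness tests over the polynomially many, polynomially sized instances in $R$; this runs in polynomial time and accepts if and only if $\nBCalcCon{\DefAccSet}{\DefNRejSet}{\DefRejSet}{\DefNAccSet}{\DefCalculusName}$, so $\BCalcConName{\DefCalculusName}$ is decidable in $\coNP$. The step I expect to be the main obstacle is the saturated-statement equivalence: one must describe carefully how tree derivations degenerate once the antecedent and the succedent of a statement are complementary, and organize the induction (in the spirit of the proof of Lemma~\ref{lem:correctness}) so that the recurring label $\Tuple{\IntermAccSet{\CutPropSet},\IntermRejSet{\CutPropSet}}$ is always paired with a genuine decrease, be it in proof size or in the multiset of still-available instances. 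The remaining ingredients --- the polynomial bounds on $G$ and $R$, the poly-time checks, and the reductions via analyticity, \ref{prop:BConC} and \ref{prop:BConD} --- are routine.
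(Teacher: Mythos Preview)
Your proposal is correct and follows essentially the same approach as the paper: guess a pair $\Tuple{\IntermAccSet{\CutPropSet},\IntermRejSet{\CutPropSet}}$ of subsets of $\GenSubfmlas{}{\DeffFmlaSet}(\DefBseqVar)$ extending the antecedents and avoiding the succedents, and verify in polynomial time that no rule instance in $\RuleInstSetCalcSeq{\DefCalculusName}{\DefBseqVar}$ is clean-applicable to it. You are in fact more explicit than the paper in justifying both directions of the characterization (via \ref{prop:BConC}, \ref{prop:BConD}, and the induction on proof size for the saturated case), which the paper's proof leaves largely implicit.
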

\begin{proof}
Let $\DefBseqVar \SymbDef \BStat{\DefAccSet}{\DefNRejSet}{\DefRejSet}{\DefNAccSet}$.
Given a pair $\Tuple{\DeffAccSet, \DeffRejSet}$ with $\DeffAccSet \cup \DeffRejSet \subseteq \GenSubfmlas{}{\DeffFmlaSet}(\DefBseqVar)$, $\DefFmlaSet_\DefCogVar \subseteq \DeffFmlaSet_\DefCogVar$
and $\DefFmlaSet_{\InvCog{\DefCogVar}} \cap \DeffFmlaSet_{\DefCogVar} = \EmptySet$ 
for each $\DefCogVar \in \Set{\Acc,\Rej}$, 
if we check that for every applicable rule instance
$\TwoDRule{\DeffffAccSet}{\DeffffNRejSet}{\DeffffRejSet}{\DeffffNAccSet} \in \RuleInstSetCalcSeq{\DefCalculusName}{\DefBseqVar}$
we have
$\DeffFmlaSet_{\DefCogVar} \cap \DeffffFmlaSet_{\InvCog{\DefCogVar}} \neq \EmptySet$
for each $\DefCogVar \in \Set{\Acc,\Rej}$,
then $\nBAnaCalcCon{\DeffAccSet}{\SetDiff{\GenSubfmlas{}{\DeffFmlaSet}(\DefBseqVar)}{\DeffRejSet}}{\DeffRejSet}{\SetDiff{\GenSubfmlas{}{\DeffFmlaSet}(\DefBseqVar)}{\DeffAccSet}}{\DefCalculusName}{\GenSubfmlas{}{\DeffFmlaSet}}$,
and thus, by \ref{prop:BConD},
$\nBAnaCalcCon{\DefAccSet}{\DefNRejSet}{\DefRejSet}{\DefNAccSet}{\DefCalculusName}{\GenSubfmlas{}{\DeffFmlaSet}}$.
Since the amount of rule instances is
polynomial in the size of $\DefBseqVar$,
by guessing in polynomial time the latter
pair and performing the described test we obtain a polynomial-time non-deterministic algorithm
to verify if $\nBAnaCalcCon{\DefAccSet}{\DefNRejSet}{\DefRejSet}{\DefNAccSet}{\DefCalculusName}{\GenSubfmlas{}{\DeffFmlaSet}}$, and so the problem of deciding such calculus is in $\coNP$.
\end{proof}


\end{document}